\documentclass[10pt, journal]{IEEEtran}
%\IEEEoverridecommandlockouts
\pdfminorversion=4

\usepackage{cite}
\usepackage{amsmath,amssymb,amsfonts}
\usepackage{graphicx}
\usepackage{algorithm}
\usepackage[hidelinks]{hyperref}
\usepackage{textcomp}

\usepackage[utf8]{inputenc} % allow utf-8 input
\usepackage[T1]{fontenc}    % use 8-bit T1 fonts
\usepackage{amsmath,amssymb,amsfonts}
\usepackage{algpseudocode}
\usepackage{todonotes}
\usepackage{multirow}
\usepackage{url}
\usepackage{cite}
\usepackage{textcomp}
\usepackage{xcolor}
\usepackage{optidef}
\usepackage{arydshln}
\usepackage{tikz}
\usepackage{circuitikz}
\usepackage{balance}
\usetikzlibrary{arrows.meta}
\tikzstyle{block} = [draw, rectangle, 
    minimum height=3em, minimum width=6em]
\tikzstyle{sum} = [draw, circle, node distance=1cm]
\tikzstyle{input} = [coordinate]
\tikzstyle{output} = [coordinate]
\tikzstyle{pinstyle} = [pin edge={to-,thin,black}]

\graphicspath{{./figures/}}

\newtheorem{theorem}{Theorem}
\newtheorem{lemma}[theorem]{Lemma}
\newtheorem{corollary}{Corollary}[theorem]
\newtheorem{remark}{Remark}
\newenvironment{proof}{\begin{IEEEproof}}{\end{IEEEproof}}

\newcommand{\BBM}{\begin{bmatrix}}
\newcommand{\EBM}{\end{bmatrix}}
\newcommand{\BEQ}{\begin{equation}}
\newcommand{\EEQ}{\end{equation}}
\newcommand{\BIT}{\begin{itemize}}
\newcommand{\EIT}{\end{itemize}}

\newcommand{\st}{\mbox{subject to}}

\newcommand{\reals}{\mathbb{R}}

\newcommand{\minimize}{\mbox{minimize}}

\DeclareMathOperator{\Tr}{tr}
\DeclareMathOperator{\diag}{diag}
\DeclareMathOperator{\blkdiag}{blkdiag}

\DeclareMathOperator{\Expect}{\mathbb{E}}

\newcommand{\E}[2][]{\Expect_{#1}\!\left[\,#2\,\right]}   % Expectation with argument
                              % Conditional expectation

     % Probability with argument
                       % Conditional probability

%
\def\BibTeX{{\rm B\kern-.05em{\sc i\kern-.025em b}\kern-.08em
    T\kern-.1667em\lower.7ex\hbox{E}\kern-.125emX}}
\markboth{\hskip25pc IEEE TRANSACTIONS AND JOURNALS TEMPLATE}
{Hansson \MakeLowercase{\textit{et al.}}: Maximum Likelihood Estimation
for System Identification of Networks of Dynamical Systems}

\begin{document}
\title{Maximum Likelihood Estimation
for System Identification of Networks of Dynamical Systems}
\author{Anders Hansson, João Victor Galvão da Mata, and Martin S.~Andersen
\thanks{This work was supported by ELLIIT, and by the Novo Nordisk Foundation under grant number NNF20OC0061894.}
\thanks{Anders Hansson is with the Department of Electrical Engineering, Linköping University (e-mail: anders.g.hansson@liu.se).}
\thanks{João Victor Galvão da Mata and Martin S.~Andersen are with the Department of Applied Mathematics and Computer Science, Technical University of Denmark (e-mail: jogal@dtu.dk; mskan@dtu.dk).}}
\maketitle
\thispagestyle{empty}
\pagestyle{empty}
\begin{abstract}
This paper investigates maximum likelihood estimation for direct system identification in networks of dynamical systems. We establish that the proposed approach is both consistent and efficient. In addition, it is more generally applicable than existing methods, since it can be employed even when measurements are unavailable for all network nodes, provided that network identifiability is satisfied. Finally, we demonstrate that the maximum likelihood problem can be formulated without relying on a predictor, which is key to achieving computationally efficient numerical solutions.
\end{abstract}
\subsubsection*{Keywords}
System Identification, Networks, Dynamical Systems, Maximum 
likelihood estimation, Consistency, Efficiency.
\section{Introduction}
\IEEEPARstart{S}{ystem} identification of linear dynamic networks has received considerable attention in recent years, motivated by the need to model large-scale interconnected systems in engineering 
and economics, e.g. \cite{VANDENHOF201823,linder2017indirect,Materassi_topology}.

Research on system identification of dynamic networks can broadly be divided into three main categories. The first category focuses on estimation of the network topology, possibly together with the dynamics, aiming to infer the interconnection structure from data, e.g. \cite{Materassi_topology,CHIUSO20121553,6125232}. The second category addresses identification of a single module in a network with known topology, e.g. 
\cite{VANDENHOF20132994,Dankers_predictor_input, Materassi2019SignalSF, 7402842, Linder03042017,9247487,9661422}, which can be seen as a generalization of classical closed-loop system identification \cite{GUSTAVSSON197759,FORSSELL19991215}. 
The third category concerns identification of the full network dynamics for a given topology, 
e.g. \cite{WEERTS2018256,FONKEN2022110295,hen+gev+baz19}. 

The present paper belongs to the third category. 
We propose a Maximum Likelihood (ML) approach for direct 
identification of the open loop transfer functions. We show that 
we have consistent estimates of the transfer functions from the inputs
to the outputs of the individual systems in the network under fairly 
general assumptions. When this is the case, it is a trivial variation of 
\cite{lju+cai79} to show 
that the estimates are also 
efficient, i.e. the Cramér-Rao lower bound is achieved. 
Because of this, the ML formulation of the identification problem is 
highly desirable. 
In \cite{WEERTS2018256} an ML method
that is both consistent and efficient is also  proposed. It however assumes that
all node signals are measured. This is in stark contrast to our work.  
Different to many existing approaches, our method does not require external excitation signals to be applied to every node in the network. 
It is applicable whenever 
informativity and so-called generic network identifiability is satisfied.

Our ML formulation for partial node measurements
is in general difficult to solve numerically.
This is because the optimization problem is non-convex
in a much more complicated way than what is the case for system identification when all node signals are measured. We will show how to circumvent this challenge by formulating the 
ML problem without explicit use of a predictor, which is the usual way to formulate the problem. 
The key to our equivalent formulation is to recognize that the problem can be looked at as
a problem with latent variables or missing data, 
for which ML formulations have been developed in 
\cite{HANSSON20121955,Wallin02112014}. 

The contributions of this paper can be summarized as: 
\begin{enumerate}
    \item Conditions for consistency and efficiency of ML
    system identification for networks.
    \item Formulation of an ML problem for system identification as an unconstrained optimization
    problem.  
\end{enumerate}
Our contribution to consistency is a generalization of the work in
\cite{14f6a003-64aa-3900-9a30-7d5cbc9eab0a} to general systems on innovation
form and to networks of systems. In relation to \cite{ljung_conver_1978}
our work gives an alternative proof for the ergodic
case, and it also details the conditions for the specific predictor we use
in the context of networks. 
Most of the results for Item~2 were presented in \cite{10590797}.
\subsection{Notation}
We consider all vectors to be column matrices. 
With $\hat e_i$ we denote the $i$th standard basis vector.
With $\mathbf 1$ we denote the vector of all ones. If we have
defined scalars or vectors $x_k$ for $1\leq k\leq n$, then we have also tacitly defined the 
vector $x=(x_1,\ldots,x_n)$. We denote by $A\otimes B$ the
Kronecker product
of $A$ with $B$.
We use $X\succ$ and $X\succeq$ to denote positive definiteness and positive semidefiniteness of the
symmetric matrix $X$, respectively. The notation $A\succ B$ is equivalent to $A-B\succ 0$, and
similarly for positive semidefiniteness. 
We let boldface letters denote the Z-transform of signals. These signals can be scalar-valued,
vector-valued or matrix-valued. We use the same notation for random variables as for realizations of 
random variables. With $\E{X}$ is meant the expected value of the random variable $X$.
\section{Model of Network}
We consider single-input single-output ARMAX models  
\BEQ\label{eqn:ARMAX-model}
y_k^i+\sum_{j=1}^{n^i} a_j^iy_{k-j}^i  = \sum_{j=1}^{n^i} b_j^i u_{k-j}^i + e_k^i + \sum_{j=1}^{n^i} c_j^i e_{k-j}^i
\EEQ
for $1\leq i\leq M$, 
where $e^i_k$ is a sequence of independent zero mean Gaussian random variables.  We assume
that these sequences are independent of one another for different $i$ and that they have the variances
$\lambda^i$.\footnote{Notice that we do not need to assume that the $e_k$ is Gaussian in order to 
prove consistency. 
It is enough that it is a stationary stochastic process with zero mean and bounded second
moment. However, we do use this assumption to formulate the maximum likelihood problem.} 
We write $a^i=(a_1^i,\ldots,a_{n^i}^i)$ and we define $b^i$ and $c^i$ similarly. We also let 
$a=(a^1,\ldots,a^M)$ and we define $b$, $c$ and $\lambda$ similarly. 
We let $n=\sum_{i=1}^Mn^i$. 

We then define a network interconnection
of the ARMAX models via 
\BEQ
u_k=\Upsilon y_k+\Omega r_k\label{eqn:feedback}
\EEQ
where $r_k$ are exogenous signals, and where $\Upsilon\in\reals^{M\times M}$ and $\Omega\in\reals^{M\times m}$ are
zero-one matrices.\footnote{We assume that the closed loop system is well-posed.} We define the observed output as 
\BEQ
x_{o,k}=T_o\BBM y_k\\u_k\EBM\label{eqn:observed}
\EEQ
where $T_o$ has full row rank, and its 
$p$ rows are standard basis vectors. 
The aim is to identify $(a,b,c,\lambda)$ from these observations. 

For analysis purposes we will need to define transfer functions and
state-space descriptions. However, for our optimization framework
we rely on Toeplitz matrix descriptions of the dynamical systems, 
c.f. Section~IX. 

Define the polynomials
\begin{align*}
\mathbf A^i(z)&=z^{n^i}+a_1^iz^{n^i-1}+\cdots+a_{n^i}\\
\mathbf B^i(z)&=b_1^iz^{n^i-1}+\cdots+b_{n^i}\\
\mathbf C^i(z)&=z^{n^i}+c_1^iz^{n^i-1}+\cdots+c_{n^i}
\end{align*}
in the complex variable $z$.
We make the standing assumption that 
\begin{enumerate}
\item[A1.] $\mathbf A^i(z)$ and $\mathbf B^i(z)$ are coprime.
\end{enumerate}
We also define the transfer functions 
$$\mathbf G^i(z)=\frac{\mathbf B^i(z)}{\mathbf A^i(z)},\quad
\mathbf H^i(z)=\frac{\mathbf C^i(z)}{\mathbf A^i(z)}.$$
and $\mathbf G(z)=\blkdiag\left(\mathbf G^1(z),\ldots,\mathbf G^M(z)\right)$
and $\mathbf H(z)$
similarly. A model is defined as the vector $\mathbf M(z)=(\mathbf G(z),\mathbf H(z))$. 

For each ARMAX model we introduce an observer form realization
\begin{align*}
\xi_{k+1}^i&=F^i\xi_k^i+b^iu^i_k+(c^i-a^i)e_k^i\\
y^i_k&=H^i\xi^i_k+e_k^i,
\end{align*}
where $F^i= U-a^i\hat e_1^T$, $U$ is a shift matrix with ones on the first super-diagonal, and where 
$H^i=\hat e_1^T$. Let $\xi_k=(\xi^1_k,\ldots ,\xi^M_k)$, and define $u_k$, $y_k$ and $e_k$ similarly.
Then with 
\begin{align*}
F&=\blkdiag(F^1,\ldots,F^M)\\
B&=\blkdiag(b^1,\dots,b^M)\\
C&=\blkdiag(c^1-a^1,\dots,c^M-a^M)\\
H&=\blkdiag(H^1,\dots,H^M),
\end{align*}
it holds that 
\begin{align*}
\xi_{k+1}&=F\xi_k+Bu_k+Ce_k\\
y_k&=H\xi_k+e_k.
\end{align*}
The closed-loop system is by \eqref{eqn:feedback} given by
\begin{align*}
\xi_{k+1}&=F_c\xi_k+G_rr_k+G_ee_k\\
\BBM y_k\\u_k\EBM&=H_c\xi_k+J_rr_k+J_ee_k,
\end{align*}
where $F_c=F+B\Upsilon H,\;G_r=B\Omega,\; G_e=C+B\Upsilon$ and
\begin{align*}
H_c=\BBM I\\\Upsilon\EBM H,\quad
J_r=\BBM 0\\\Omega\EBM,\quad J_e=\BBM I\\\Upsilon\EBM.
\end{align*}
From \eqref{eqn:observed} we obtain the following state-space description
of for the observed outputs:
\BEQ\label{eqn:closed-loop}
\begin{aligned}
\xi_{k+1}&=F_c\xi_k+G_rr_k+G_ee_k\\
x_{o,k}&=H_o\xi_k+J_{ro}r_k+J_{eo}e_k,
\end{aligned}
\EEQ
where $H_o=T_oH_c$, $J_{ro}=T_oJ_r$ and $J_{eo}=T_oJ_e$.
We remark that $e_k$ has covariance $\Sigma_e=\diag(\lambda^1,\ldots,\lambda^M)$, which we tacitly 
assume is positive definite throughout this paper.  

We denote the 
closed loop transfer functions from $\mathbf r$ and $\mathbf e$ to $\mathbf x_o$ with $\mathbf G_c(z)$ and 
$\mathbf H_c(z)$, respectively. We also define $\mathbf M_c(z) 
=(\mathbf G_c(z),\mathbf H_c(z))$. We define the parameter vector
$\theta=(a,b,c,\lambda)$. We denote with $\Theta_o$ the set 
of parameters $\theta$ such that 
$\mathbf M_c(z) $ have all poles strictly inside the unit circle. 
We also define the compact set $\Theta\subset \Theta_o$ such that it contains the parameter
$\theta_0$ which represents the closed loop system from which we will collect the 
data $x_o$. It is over this set we will later on optimize a likelihood function. 
The model set $\mathcal M$ is the 
set of all models $\mathbf M_c(z) $ such that $\theta\in\Theta$. Sometimes we will 
explicitly write e.g. $\mathbf M_c(z;\theta)$ to emphasize the parameter 
dependence $\theta$
of the transfer function. This also goes for all other variables. 
We will sometimes tacitly assume that $\theta\in\Theta$. 
\section{Predictor}
To obtain a predictor that can be used for system identification 
we equivalently describe the observations using the innovation form. To this end,
we introduce the equation
\BEQ\label{eq:ARE}
\begin{aligned}
&\BBM I&K\\0&I\EBM\BBM \Sigma&0\\0&\Sigma_\epsilon\EBM\BBM I&K\\0&I\EBM^T\\&\qquad=
\BBM F_c&G_e\\H_o&J_{eo}\EBM\BBM \Sigma &0\\0&\Sigma_e\EBM\BBM F_c&G_e\\H_o&J_{eo}\EBM^T
\end{aligned}
\EEQ
in the variables $\Sigma$, $\Sigma_\epsilon$ and $K$. 
Solving for $K$ in the (1,2)-block  and substituting into the (1,1)-block  gives the
Riccati equation for the variable $\Sigma$. 
The (3,3)-block defines $\Sigma_\epsilon=H_o\Sigma H_o^T+J_{eo}\Sigma_eJ_{eo}^T$ 
which is the covariance for the 
innovations. The innovation form is then given by 
\BEQ\label{eqn:innovation}
\begin{aligned}
\hat \xi_{k+1}&=F_c\hat \xi_k+G_rr_k+K\epsilon_k\\
x_{o,k}&=H_o\hat \xi_k+J_{ro}r_k+\epsilon_k,
\end{aligned}
\EEQ
 where $\epsilon_k$ are the innovations, which
are such that $\epsilon_j$ is independent of $\epsilon_k$  whenever $j\neq k$.
Moreover, it is zero mean and Gaussian with covariance $\Sigma_\epsilon$, as already mentioned. The innovation form 
also defines the Kalman predictor 
$$\hat x_{o,k}=H_o\hat \xi_k+J_{ro}r_k,$$
and we have that the prediction error is the innovation $\epsilon_k=x_{o,k}-\hat x_{o,k}=
x_{o,k}-H_o\hat \xi_k-J_{ro}r_k$. From this we realize that the predictor is governed by the 
equations
\BEQ\label{eqn:time-invariant-predictor}
\begin{aligned}
\hat \xi_{k+1}&=F_c\hat \xi _k+G_rr_k+K\left(x_{o,k}-H_o\hat \xi _k-J_{ro}r_k\right)\\
\hat x_{o,k}&=H_o\hat \xi_k+J_{ro}r_k,
\end{aligned}
\EEQ
which is the Kalman filter. 
Therefore, the predictor only depends on $r_k$ and on what we observe, i.e., $x_{o,k}$. The
transfer function from $\mathbf x_o$ and $\mathbf r$ to $\hat{\mathbf x}_0$ is given by
$\mathbf W(z)=(\mathbf W_o(z),\mathbf W_r(z))$, where
\BEQ\label{eqn:predictor-transfer}
\begin{aligned}
\mathbf W_o(z)&=H_o(zI-F_c+KH_o)^{-1}K \\
\mathbf W_r(z)&=H_o(zI-F_c+KH_o)^{-1}(G_r-KJ_{ro})+J_{ro}. 
\end{aligned}
\EEQ
\begin{lemma}\label{lem:DARE}
Assume that $\theta\in\Theta$. 
There is a unique solution $(\Sigma,K,\Sigma_\epsilon)$ to \eqref{eq:ARE} such that 
the matrix $F_c-KH_o$ has all its eigenvalues strictly inside the unit disc, and such that
$\Sigma_\epsilon$ is positive definite if and only if
\begin{itemize}
\item[A2.] $J_{eo}$ has full row rank,
\item[A3.] $\mathbf C^i(z)\neq0$ for  $|z|=1$ and for all $i$.
\end{itemize}
\end{lemma}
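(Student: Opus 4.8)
The plan is to read \eqref{eq:ARE} as the algebraic equation underlying the stationary Kalman predictor for the noise-driven part of \eqref{eqn:closed-loop}, and to reduce the statement to a canonical spectral-factorization problem for which A2 and A3 are the natural hypotheses. Solving the $(1,2)$-block for $K=(F_c\Sigma H_o^T+G_e\Sigma_e J_{eo}^T)\Sigma_\epsilon^{-1}$ and substituting into the $(1,1)$-block turns \eqref{eq:ARE} into a discrete-time algebraic Riccati equation in $\Sigma$, while the lower-right block fixes $\Sigma_\epsilon=H_o\Sigma H_o^T+J_{eo}\Sigma_e J_{eo}^T$. A triple $(\Sigma,K,\Sigma_\epsilon)$ with $F_c-KH_o$ strictly stable and $\Sigma_\epsilon\succ0$ is then exactly a canonical (minimum-phase) spectral factor of the output spectral density $\Phi(z)=\mathbf H_c(z)\Sigma_e\mathbf H_c^*(z)$, where $\mathbf H_c(z)=H_o(zI-F_c)^{-1}G_e+J_{eo}$ is the transfer function from $\mathbf e$ to $\mathbf x_o$.

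First I would invoke the standard existence–uniqueness theorem for the stabilizing solution of the Riccati equation. Since $\theta\in\Theta$ forces $F_c$ to be strictly stable, the pair $(F_c,H_o)$ is automatically detectable, so the only remaining hypothesis in that theorem is that $\Phi(e^{\im\omega})$ be nonsingular for every $\omega$; equivalently, that the system $(F_c,G_e,H_o,J_{eo})$ have no invariant zeros on the unit circle together with a full-rank feedthrough term. It is precisely this nonsingularity that both guarantees $\Sigma_\epsilon\succ0$ and forces the minimum-phase factor to place the eigenvalues of $F_c-KH_o$ strictly inside the unit disc. The burden of the proof is therefore to show that $\Phi(e^{\im\omega})\succ0$ for all $\omega$ if and only if A2 and A3 hold.

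To do this I would compute $\mathbf H_c$ in closed form. Writing $\mathbf G=\mathbf A^{-1}\mathbf B$ and $\mathbf H=\mathbf A^{-1}\mathbf C$ with $\mathbf A=\blkdiag(\mathbf A^i)$, and eliminating $\mathbf u=\Upsilon\mathbf y$ in the closed loop, one obtains $\begin{bmatrix}\mathbf y\\\mathbf u\end{bmatrix}=\begin{bmatrix}I\\\Upsilon\end{bmatrix}(\mathbf A(z)-\mathbf B(z)\Upsilon)^{-1}\mathbf C(z)\,\mathbf e$, and since $J_{eo}=T_o\begin{bmatrix}I\\\Upsilon\end{bmatrix}$ this yields the factorization $\mathbf H_c(z)=J_{eo}\,(\mathbf A(z)-\mathbf B(z)\Upsilon)^{-1}\mathbf C(z)$. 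Because $\theta\in\Theta$, the closed-loop characteristic determinant $\det(\mathbf A(z)-\mathbf B(z)\Upsilon)$ has no roots on $|z|=1$, so the middle factor is analytic and nonsingular there. Consequently the rank of $\mathbf H_c(e^{\im\omega})$ is governed by the constant matrix $J_{eo}$ and by the diagonal factor $\mathbf C(z)=\blkdiag(\mathbf C^i(z))$: the two conditions A2 (so that $J_{eo}$ is full row rank) and A3 (so that each $\mathbf C^i$ is nonzero on the circle) together make $\mathbf H_c(e^{\im\omega})$ full row rank for every $\omega$, hence $\Phi(e^{\im\omega})\succ0$.

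The main obstacle is making the converse bookkeeping in this last translation fully rigorous, that is, establishing necessity of both conditions. The cleanest instance is the square case, where $\det\mathbf H_c(z)=\det J_{eo}\cdot\prod_i\mathbf C^i(z)\,/\,\det(\mathbf A(z)-\mathbf B(z)\Upsilon)$ shows directly that $\det\Phi(e^{\im\omega})$ vanishes at some $\omega$ exactly when A2 or A3 fails; I would treat this case first. The general (non-square) case requires relating the unit-circle zeros of $\Phi$ to the eigenvalues of $F_c-KH_o$ through the invariant-zero structure of the Rosenbrock pencil $\left[\begin{smallmatrix}zI-F_c & -G_e\\ H_o & J_{eo}\end{smallmatrix}\right]$, and separately verifying that A2 is exactly what keeps $\Sigma_\epsilon=H_o\Sigma H_o^T+J_{eo}\Sigma_e J_{eo}^T$ positive definite, since $\Sigma\succeq0$ gives $\Sigma_\epsilon\succeq J_{eo}\Sigma_e J_{eo}^T$ with the latter positive definite precisely under A2. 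I would reduce the general statement to the square one along these lines.
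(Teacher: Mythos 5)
Your proposal is, in substance, the same proof as the paper's, transported from the state--space pencil to the transfer--function domain. The paper invokes Theorem~2 of \cite{HANSSON1999245}: a unique solution of \eqref{eq:ARE} with $F_c-KH_o$ strictly stable and $\Sigma_\epsilon\succ0$ exists if and only if $(F_c,H_o)$ is detectable and the pencil $\mathbf P(z)=\left[\begin{smallmatrix}-zI+F_c & G_e\\ H_o & J_{eo}\end{smallmatrix}\right]$ has full row rank on $|z|=1$; detectability is immediate from stability of $F_c$, and the pencil is factored as $\left[\begin{smallmatrix}I&0\\0&J_{eo}\end{smallmatrix}\right]\left[\begin{smallmatrix}I&B\Upsilon\\0&I\end{smallmatrix}\right]\left[\begin{smallmatrix}-zI+F&C\\H&I\end{smallmatrix}\right]$, whose right factor is, up to permutation, block diagonal with blocks invertible on the circle exactly when $\mathbf C^i(z)\neq0$ there. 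Your spectral condition $\Phi(e^{\im\omega})\succ0$ is equivalent to the pencil condition, since row-reducing $\mathbf P(z)$ with the invertible block $-zI+F_c$ gives $\rank\mathbf P(z)=n+\rank\mathbf H_c(z)$ on the unit circle, and your factorization $\mathbf H_c=J_{eo}(\mathbf A-\mathbf B\Upsilon)^{-1}\mathbf C$ is precisely the transfer-function counterpart of the paper's pencil factorization. Sufficiency of A2--A3 and necessity of A2 then work identically in both versions (for A2 necessity, note that both arguments really rest on the structural relation $H_o=J_{eo}H$; your stated inequality $\Sigma_\epsilon\succeq J_{eo}\Sigma_eJ_{eo}^T$ by itself only gives the sufficiency direction, but necessity follows at once from $\rank\mathbf H_c\leq\rank J_{eo}$, which your factorization provides).

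The one genuine divergence is the step you yourself flag as the main obstacle: necessity of A3 when $\mathbf H_c$ is not square, i.e., when $p<M$. Your proposed repair---reducing the general case to the square case via the invariant-zero structure of the Rosenbrock pencil---will not go through, because the obstruction is elementary rank bookkeeping for a product: a unit-circle zero of $\mathbf C^{i_0}$ annihilates column $i_0$ of $J_{eo}(\mathbf A-\mathbf B\Upsilon)^{-1}\mathbf C$, but when $p<M$ the remaining $M-1$ columns can still have rank $p$, so $\Phi(e^{\im\omega})$ need not lose positivity. A concrete instance: take $\Upsilon=0$ with two subsystems and $T_o$ observing only $y^1$; then the pencil condition, hence existence and uniqueness of the stabilizing solution with $\Sigma_\epsilon\succ0$, involves $\mathbf C^1$ only and is completely insensitive to a unit-circle zero of $\mathbf C^2$, so A3 cannot be recovered. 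Be aware that the paper's own proof is silent on exactly this point: it tacitly equates full row rank of the product $L\,U(z)$ with ``$L$ full row rank and $U(z)$ invertible,'' which is automatic only when the product is square, $p=M$. So your square-case argument already matches the rigor of the published proof; for $p<M$, both arguments (and, read literally, the ``only if'' direction of the statement itself) require an additional structural condition ensuring that every noise channel $e^i$ actually reaches the observed outputs.
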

\begin{proof}
There is a unique solution $(\Sigma,K,\Sigma_\epsilon)$ to \eqref{eq:ARE} such that 
the matrix $F_c-KH_o$ has all its eigenvalues strictly inside the unit disc, and such that
$\Sigma_\epsilon$ is positive definite if and only if $(F_c,H_o)$ is detectable and the pencil
$$\mathbf P(z)=\BBM -z I+F_c& G_e\\H_o&J_{eo}\EBM$$
has full row rank for $|z|=1$, see \cite[Theorem 2]{HANSSON1999245}. 
Detectability is trivially fulfilled, since $F_c$ has all eigenvalues
strictly inside the unit circle by the assumption that $\theta\in\Theta$. 
We can write the pencil as
$$\mathbf P(z)=\BBM I&0\\0&J_{eo}\EBM\BBM I&B\Upsilon\\0&I\EBM\BBM -z I+F& C\\H&I\EBM.$$
The first matrix has full row rank if and only if  $J_{eo}$ has full row rank.
The middle matrix is clearly invertible. The right matrix can after permutations of rows and
columns be seen to be a block diagonal matrix with blocks
\begin{align*}
\mathbf Q^i(z)&=\BBM -z I+F^i &c^i-a^i\\H^i& 1\EBM\\&=
\BBM 
-z -a_1^i& 1& 0&c_1^i-a_1^i\\
-a_2^i &-z & 1&c_2^i-a_2^i\\
-a_3^i& 0 & -z &c_3^i-a_3^i\\
1 & 0 & 0 & 1
\EBM
\end{align*}
for the case of $n^i=3$. We have with 
$$\mathbf U(z)=\BBM z^2&z &1&0\\0&1&0&0\\1&0&0&0\\0&0&0&1\EBM,\quad
\mathbf V(z)=\BBM 1&0&0&-1\\0&1&0&0\\0&0&1&0\\0&0&0&1\EBM$$
that
$$\mathbf U(z) \mathbf Q^i(z)\mathbf V(z)=
\BBM -\mathbf A^i(z)&0&0&\mathbf C^i(z)\\
-a_2^i&-z&1&c_2^i\\
-z-a_1^i&1&0&z+c_1^i\\
1& 0&0&0
\EBM.$$
We can, after this, multiply with further unitary matrices from the left and the right to see that the matrix
will be invertible for $|z|=1$ if and only if the 
polynomial  $\mathbf C^i(z)$ is not equal to zero for $|z|=1$.
\end{proof}
\begin{remark}\label{rem:row-rank}
Assumption A2 can always be fulfilled by removing redundant observed
signals. To see this, let $U$ be an  invertible matrix such that
$$\bar J_{eo}=\BBM \bar J_{eo1}\\0\EBM=U J_{eo},\quad 
\bar J_{ro}=\BBM \bar J_{ro1}\\\bar J_{ro2}\EBM=U J_{ro}$$
with $\bar J_{eo1}$ full row rank, and where the rows of 
$\bar J_{eo1}$ are a subset of the rows of $J_{eo}$. Then
$$\bar x_{o,k}=\BBM \bar x_{o1,k}\\\bar x_{o2,k}\EBM=
Ux_{o,k}=\BBM \bar J_{eo1}(H\xi_k+e_k)+\bar J_{ro1}r_k\\\bar J_{ro2}r_k\EBM. $$
The observation $\bar x_{o2,k}$ is not going to affect the prediction,
and hence can be discarded. This holds since after 
transformation with $U$ there will be rows in the parenthesis 
in \eqref{eqn:time-invariant-predictor} that are zero. Hence, a $K$ with
fewer columns is sufficient. 
\end{remark}
\begin{remark}
If $\Sigma_e$ is not diagonal or positive definite, we may just
consider new matrices defined by
$$\BBM G_e\\J_{eo}\EBM \Sigma_e\BBM G_e\\J_{eo}\EBM^T=
\BBM \bar G_e\\\bar J_{eo}\EBM \bar \Sigma_e\BBM \bar G_e\\\bar J_{eo}\EBM^T$$
where $\bar \Sigma_e$ is positive definite and diagonal, 
and then the matrices with bars instead of the matrices without bars should
be used throughout the paper. There is however a 
restriction on that also $\bar J_{eo}$ should be a zero-one matrix
for Lemma~\ref{lem:zero-one} and the following results in Section~III to hold.
Also, the results of Section~\ref{sec:equiv} only hold for diagonal and positive definite $\Sigma_e$. 
\end{remark}

We will need equivalent formulations of the algebraic Riccati equation. To this end, let
$$\BBM Q&S\\S^T&R\EBM=\BBM G_e\\J_{eo}\EBM \Sigma_e\BBM G_e\\J_{eo}\EBM^T. $$
We then have that \eqref{eq:ARE} is equivalent to 
\BEQ\label{eq:ARE-QRS}
\BBM I&K\\0&I\EBM\BBM \Sigma&0\\0&\Sigma_\epsilon\EBM\BBM I&K\\0&I\EBM^T
=\BBM F_c\\H_o\EBM\Sigma \BBM F_c\\H_o\EBM^T+\BBM Q&S\\S^T&R\EBM.
\EEQ
Multiply this equation with $\BBM I&-SR^{-1}\\0&I\EBM$
from the left and its transpose from the right to obtain 
\BEQ\label{eq:ARE-bar}
\begin{aligned}
\Sigma +\bar K\Sigma_\epsilon\bar K^T&=\bar F_c\Sigma \bar F_c^T+\bar Q\\
\bar K\Sigma_\epsilon&=\bar F_c\Sigma H_o^T\\
\Sigma_\epsilon&=H_o\Sigma H_o^T+R,
\end{aligned}
\EEQ
where $\bar F_c=F_c-SR^{-1}H_o$, $\bar Q=Q-SR^{-1}S^T$ and $\bar K=K-SR^{-1}$.
By solving for $\bar K$ in the second equation and substitution into
the first equation we obtain 
\BEQ\label{eq:ARE-nocross}
\Sigma =\bar F_c\Sigma \bar F_c^T+\bar Q-
\bar F_c\Sigma H_o^T(H_o\Sigma H_o^T+R)^{-1}H_o\Sigma \bar F_c^T.
\EEQ
This is an algebraic Riccati equation for the case when there is no 
cross covariance. 
There are cases for which there is a trivial solution
to the algebraic Riccati equation. When this is the case the predictor 
will be linear in the open loop transfer functions, something that is
crucial for many implementations of PEM.  
\begin{theorem}\label{thm:trivial-sol}
Assume that $\theta\in\Theta$, that Assumption A2 holds and that
$\mathbf C^i(z)\neq 0$ for $|z|\geq 1$ and for all $i$.
If 
\BEQ\label{eqn:Sigma_e}
\Sigma_e-\Sigma_eJ_{eo}^TR^{-1}J_{eo}\Sigma_e=0,
\EEQ
then $(\Sigma,K,\Sigma_\epsilon)=
(0,SR^{-1},R)$ is a 
solution to \eqref{eq:ARE} such that  $F_c-KH_o$ has all eigenvalues
strictly inside the unit circle. 
\end{theorem}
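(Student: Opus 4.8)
The claim is to verify directly that the proposed triple $(\Sigma,K,\Sigma_\epsilon)=(0,SR^{-1},R)$ satisfies \eqref{eq:ARE} and yields a stable closed-loop error dynamics. The strategy is: first substitute $\Sigma=0$ into the equivalent no-cross-covariance form \eqref{eq:ARE-nocross} and confirm it holds, then translate back to confirm $K=SR^{-1}$ and $\Sigma_\epsilon=R$, and finally check the eigenvalue condition on $F_c-KH_o$.

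First I would work with \eqref{eq:ARE-bar}, which is the most convenient equivalent form. Setting $\Sigma=0$, the first equation reads $\bar K\Sigma_\epsilon\bar K^T=\bar Q$, the second reads $\bar K\Sigma_\epsilon=0$, and the third reads $\Sigma_\epsilon=R$. The second equation with $R\succ0$ (hence $\Sigma_\epsilon=R$ invertible) forces $\bar K=0$, i.e. $K=SR^{-1}$. For consistency I must then verify $\bar Q=0$, that is $Q-SR^{-1}S^T=0$. Here is where the hypothesis \eqref{eqn:Sigma_e} enters: using the definition $\BBM Q&S\\S^T&R\EBM=\BBM G_e\\J_{eo}\EBM\Sigma_e\BBM G_e\\J_{eo}\EBM^T$, I have $Q=G_e\Sigma_eG_e^T$, $S=G_e\Sigma_eJ_{eo}^T$, and $R=J_{eo}\Sigma_eJ_{eo}^T$, so $\bar Q=G_e(\Sigma_e-\Sigma_eJ_{eo}^TR^{-1}J_{eo}\Sigma_e)G_e^T$, which vanishes precisely by \eqref{eqn:Sigma_e}. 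Thus $\Sigma=0$ solves the Riccati equation and the back-substitution gives the stated $K$ and $\Sigma_\epsilon$.

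For the eigenvalue condition I would note that $F_c-KH_o=F_c-SR^{-1}H_o=\bar F_c$ (since $K=SR^{-1}$ means $\bar K=K-SR^{-1}=0$ corresponds to the cross-term being fully absorbed), and argue that $\bar F_c$ is stable. Since $\theta\in\Theta$ already gives $F_c$ stable, the issue is whether subtracting $SR^{-1}H_o$ preserves stability; I expect this to follow from Lemma~\ref{lem:DARE}, whose uniqueness statement guarantees a stabilizing solution under A2 and the stronger hypothesis $\mathbf C^i(z)\neq0$ for $|z|\geq1$. The cleanest route is to invoke that uniqueness: the hypotheses of the theorem imply those of Lemma~\ref{lem:DARE} (note $\mathbf C^i(z)\neq0$ for $|z|\geq1$ implies A3), so a stabilizing solution exists and is unique; having exhibited $(0,SR^{-1},R)$ as \emph{a} solution with $\Sigma_\epsilon=R\succ0$, it must be \emph{the} stabilizing one, whence $F_c-KH_o=\bar F_c$ is stable.

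The main obstacle is the stability verification. The algebraic substitution is routine once \eqref{eqn:Sigma_e} is plugged in, but showing $\bar F_c$ has its eigenvalues inside the unit disc is not purely algebraic. I would either lean on the uniqueness argument above (cleanest, but requires checking the strict condition $|z|\ge1$ rather than just $|z|=1$ is what delivers stability rather than mere existence) or give a direct spectral argument connecting the eigenvalues of $\bar F_c$ to the zeros of the $\mathbf C^i(z)$ via the same pencil factorization used in the proof of Lemma~\ref{lem:DARE}. The strengthening from $|z|=1$ to $|z|\geq1$ in the hypothesis is the signal that stability, not just existence, is being claimed, so I expect the proof to hinge on relating $\det(zI-\bar F_c)$ to $\prod_i\mathbf C^i(z)$ and concluding that all its roots lie strictly inside the unit circle.
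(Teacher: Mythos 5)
Your verification that the trivial triple solves the Riccati equation is correct and is essentially the paper's argument: with $\Sigma=0$ the second equation of \eqref{eq:ARE-bar} forces $\bar K=0$ (since $\Sigma_\epsilon=R\succ0$ under A2), and the first equation then reduces to $\bar Q=0$, which holds because $\bar Q=G_e\left(\Sigma_e-\Sigma_eJ_{eo}^TR^{-1}J_{eo}\Sigma_e\right)G_e^T$ vanishes by \eqref{eqn:Sigma_e}.

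The stability part, however, contains a genuine gap, and the route you call ``cleanest'' is the one that fails. Lemma~\ref{lem:DARE} asserts uniqueness of the solution satisfying the \emph{conjunction} of two properties: $F_c-KH_o$ stable \emph{and} $\Sigma_\epsilon\succ0$. It does not say that every solution with $\Sigma_\epsilon\succ0$ is stabilizing, so exhibiting $(0,SR^{-1},R)$ with $R\succ0$ does not identify it with the unique stabilizing solution. Under A2 this distinction has real content: every positive semidefinite solution $\Sigma$ gives $\Sigma_\epsilon=H_o\Sigma H_o^T+R\succeq R\succ0$, so positive definiteness of $\Sigma_\epsilon$ has no discriminating power at all. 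A scalar example makes the failure concrete: take $\bar Q=0$, $H_o=R=1$, $\bar F_c=a$ with $|a|>1$ in \eqref{eq:ARE-nocross}; the PSD solutions are $\Sigma=0$ and $\Sigma=a^2-1$, both with $\Sigma_\epsilon>0$, but the trivial one yields the unstable closed loop $a$ while the stabilizing solution is the nontrivial one. This is exactly the situation that the strengthened hypothesis $\mathbf C^i(z)\neq0$ for $|z|\geq1$ (rather than A3 alone) is there to exclude, so no argument that invokes only Lemma~\ref{lem:DARE} can close this step.

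Your second, ``direct spectral'' route is indeed what the paper does, but you left it as an expectation rather than a proof, and it is simpler than the pencil factorization you anticipate. From \eqref{eqn:Sigma_e} and the invertibility of $\Sigma_e$ one gets $J_{eo}^TR^{-1}J_{eo}=\Sigma_e^{-1}$. Since $H_o=J_{eo}H$ and $S=G_e\Sigma_eJ_{eo}^T$,
\begin{equation*}
KH_o=SR^{-1}H_o=G_e\Sigma_eJ_{eo}^TR^{-1}J_{eo}H=G_eH=(C+B\Upsilon)H,
\end{equation*}
hence $F_c-KH_o=F+B\Upsilon H-(C+B\Upsilon)H=F-CH$. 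This matrix is block diagonal with observer-form blocks $U-c^i\hat e_1^T$, whose characteristic polynomials are the $\mathbf C^i(z)$, so the eigenvalue claim follows immediately from the hypothesis on the zeros of $\mathbf C^i$. Without this computation (or an equivalent one relating $\det(zI-F_c+KH_o)$ to $\prod_i\mathbf C^i(z)$), the theorem's stability assertion remains unproven in your proposal.
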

\begin{proof}
Under the assumption we have that $\bar Q=Q-SR^{-1}S^T=0$, from 
which is straightforward to see that $(\Sigma,K,\Sigma_\epsilon)=
(0,SR^{-1},R)$ satisfies the equivalent formulation of the algebraic
Riccati equation in \eqref{eq:ARE-bar}. 
It holds that 
\begin{align*}
\bar F_c&=F_c-KH_o=F+B\Upsilon H-SR^{-1}H_o\\
&=F+\left(B\Upsilon -G_e\Sigma_eJ_{eo}^TR^{-1}J_{eo}\right)H\\
&=F+(B\Upsilon -B\Upsilon -C)H=F-CH,
\end{align*}
where we have used the definition of $G_e$ and the fact that 
$J_{eo}^TR^{-1}J_{eo}=\Sigma_e^{-1}$
by our assumption. We see that $\bar F_c$ is a block diagonal matrix
with blocks $U-a^i\hat e_1^T-(c^i-a^i)\hat e_1^T=U-c^i\hat e_1^T$, which are on observer
form. From our assumption on $\mathbf C^i$ it now follows that all blocks
have their eigenvalues strictly inside the unit circle. 
\end{proof}
\begin{remark}\label{rem:all-and-one}
One example of when \eqref{eqn:Sigma_e} is fulfilled is when 
$T_o=\BBM I&0\EBM$, that is when $x_{o,k}=y_k$. 
\end{remark}
From the definitions of the matrices it is straightforward to prove that
the following lemma holds. 
\begin{lemma}\label{lem:zero-one}
It holds that $\Upsilon\Sigma_eJ_{eo}^TR^{-1}$
is a zero-one matrix for any $\Sigma_e\succ 0$ irrespective of
what $\lambda$ is. 
\end{lemma}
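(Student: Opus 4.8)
The plan is to exploit the statistical meaning of $\Sigma_eJ_{eo}^TR^{-1}$ together with the fact that $J_{eo}$ is itself a zero-one matrix. First I would record the structural observation that every row of $J_{eo}=T_oJ_e$ is a row of $J_e=\BBM I\\\Upsilon\EBM$, hence is either a standard basis vector $\hat e_i^T$ (when the measured signal is an output $y^i$) or a row $\Upsilon_{j,:}$ of $\Upsilon$ (when it is an input $u^j$); in either case the entries lie in $\{0,1\}$, so $J_{eo}$ is zero-one. Writing $M=\Sigma_eJ_{eo}^TR^{-1}$, the identity $J_{eo}M=J_{eo}\Sigma_eJ_{eo}^TR^{-1}=RR^{-1}=I$ shows that $M$ is a right inverse of $J_{eo}$; the content of the lemma is that this particular $\Sigma_e$-weighted right inverse, premultiplied by $\Upsilon$, has entries in $\{0,1\}$ independently of $\lambda$.

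Next I would pass to the estimation picture, which makes the $\lambda$-cancellation transparent. Let $v_k=J_{eo}e_k$ be the part of the measured signal driven by the noise; its covariance is exactly $R$, and $Mv_k=\Sigma_eJ_{eo}^TR^{-1}v_k$ is the minimum-variance linear estimate $\widehat e_k$ of $e_k$ given $v_k$. Because this estimate is linear, $\Upsilon Mv_k=\widehat{\Upsilon e_k}$, so row $i$ of $\Upsilon M$ is the coordinate vector, in the basis $v_k^1,\ldots,v_k^p$, of the projection of $(\Upsilon e_k)^i=\sum_{j\in N_i}e_k^j$ onto $\mathrm{span}\{v_k^1,\ldots,v_k^p\}$, where $N_i=\{j:\Upsilon_{ij}=1\}$. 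Since $J_{eo}$ has full row rank by Assumption A2 (which is also what makes $R^{-1}$ exist), these coordinates are unique. It therefore suffices to exhibit, for each $i$, a $\{0,1\}$ combination of the $v_k^a$ that equals $\widehat{(\Upsilon e_k)^i}$; uniqueness then forces that combination to be the $i$-th row of $\Upsilon M$, and all $\lambda$-dependence disappears.

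The core of the argument is then combinatorial, carried out on the supports of the rows of $J_{eo}$. Whenever the input $u^i$ is itself measured, one of the rows of $J_{eo}$ equals $\Upsilon_{i,:}$, i.e. $v_k^a=(\Upsilon e_k)^i$ for some index $a$; the projection is then trivially that single observation and row $i$ of $\Upsilon M$ is the basis vector $\hat e_a^T$. For the remaining rows I would decompose $N_i$ into the supports of measured signals: the measured outputs contribute the singletons $\{e_k^j\}$ and the measured inputs contribute the aggregates $\{(\Upsilon e_k)^{j'}\}$, and I would show that $\sum_{j\in N_i}e_k^j$ is a disjoint sum of such measured quantities, so its projection is the corresponding $\{0,1\}$ sum of the $v_k^a$, with the $\lambda$'s cancelling.

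The step I expect to be the main obstacle is precisely this last decomposition: verifying that, for every node $i$, the aggregate noise $\sum_{j\in N_i}e_k^j$ lies in $\mathrm{span}\{v_k^a\}$ and that its unique representation there is zero-one rather than merely integer-valued. This is where the interplay between the zero-one structure of $\Upsilon$, the zero-one structure of $J_{eo}$, and the full-row-rank reduction of Remark~\ref{rem:row-rank} must be used carefully, since a representation obtained as a difference of overlapping measured aggregates could a priori produce coefficients outside $\{0,1\}$; ruling this out is the crux that upgrades ``integer'' to ``zero-one''.
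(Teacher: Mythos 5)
Your setup is sound as far as it goes: the observation that every row of $J_{eo}=T_oJ_e$ is a row of $\BBM I\\ \Upsilon\EBM$ and hence zero-one, the right-inverse identity $J_{eo}\Sigma_eJ_{eo}^TR^{-1}=I$, and the interpretation of $\Sigma_eJ_{eo}^TR^{-1}v_k$ as the minimum-variance estimate of $e_k$ given $v_k=J_{eo}e_k$, with uniquely determined coordinates because $R\succ 0$. But the proof stops exactly where the lemma begins: the entire content of the statement is the claim you defer, namely that the projection of each aggregate $\sum_{j\in N_i}e_k^j$ onto the span of the observations has coefficients in $\{0,1\}$. You flag this as the crux and do not prove it, so what you have is a reduction, not a proof. (For reference, the paper offers no argument either; the lemma is asserted to follow directly from the definitions, so there is no detailed paper proof to compare against.)

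Moreover, the deferred step is not merely hard: in the generality in which the lemma is stated it can fail, and the failure mode is exactly the ``difference of overlapping measured aggregates'' you were worried about. Take $M=3$, let $\Upsilon$ have rows $(0,1,1)$, $(0,0,1)$, $(0,0,0)$, and observe $y^2$ and $u^1$, so that $J_{eo}=\BBM 0&1&0\\ 0&1&1\EBM$ and Assumption A2 holds. A direct computation gives, independently of $\lambda$,
\[
\Sigma_eJ_{eo}^TR^{-1}=\BBM 0&0\\ 1&0\\ -1&1\EBM,\qquad
\Upsilon\Sigma_eJ_{eo}^TR^{-1}=\BBM 0&1\\ -1&1\\ 0&0\EBM,
\]
whose second row is $(-1,1)$: the best estimate of $e_k^3$ is obtained by subtracting the $y^2$-observation from the $u^1$-observation, which is integer-valued but not zero-one. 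So no argument along your lines (or any other) can close the gap without additional structural hypotheses on $(\Upsilon,T_o)$ that exclude such overlaps---for instance that every nonzero row of $\Upsilon$ is itself a row of $J_{eo}$, or a disjoint $\{0,1\}$-combination of rows of $J_{eo}$. Such conditions do hold in the situations where the paper invokes the lemma, e.g.\ $T_o=\BBM I&0\EBM$ as in Remark~\ref{rem:all-and-one}, or the network of Figure~\ref{fig:dyn-network} with $(u^1,u^3)$ observed, and under them your projection argument goes through essentially verbatim. Your instinct about where the difficulty sits was exactly right; the missing step is the lemma, and as stated it cannot be filled.
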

For OE models, i.e. when $a=c$, we have the following theorem. 
\begin{theorem}\label{thm:trivial-sol-OE}
Assume that $\theta\in\Theta$, that Assumption A2 holds, that
$\mathbf A^i(z)\neq 0$ for $|z|\geq 1$ for all $i$, and
that $a=c$. If
\BEQ\label{eqn:lin-condition}
\Upsilon\left(I-J_{eo}^T\left(J_{eo}J_{eo}^T\right)^{-1}J_{eo}\right)=0
\EEQ
then $(\Sigma,K,\Sigma_\epsilon)=(0,SR^{-1},R)$ is a 
solution to \eqref{eq:ARE} such that $F_c-KH_o$ has all eigenvalues inside
the unit circle. 
\end{theorem}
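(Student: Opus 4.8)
The plan is to follow the same strategy as in the proof of Theorem~\ref{thm:trivial-sol}: exhibit $(\Sigma,K,\Sigma_\epsilon)=(0,SR^{-1},R)$ as a solution of the cross-covariance-free form \eqref{eq:ARE-bar}, and then show that the associated closed-loop matrix $\bar F_c=F_c-KH_o$ is stable. The whole argument hinges on reinterpreting the algebraic condition \eqref{eqn:lin-condition}. Since $J_{eo}$ has full row rank by Assumption~A2, the matrix $J_{eo}^T(J_{eo}J_{eo}^T)^{-1}J_{eo}$ is the orthogonal projector onto the row space of $J_{eo}$, so \eqref{eqn:lin-condition} states precisely that every row of $\Upsilon$ lies in the row space of $J_{eo}$, i.e.\ that $\Upsilon=LJ_{eo}$ for some matrix $L$. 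I would take this factorization as the starting point, as it is also what makes the result independent of $\lambda$: the row-space inclusion carries no dependence on $\Sigma_e$.

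First I would establish $\bar Q=0$. Because $a=c$ we have $C=0$ and hence $G_e=C+B\Upsilon=B\Upsilon=BLJ_{eo}$. Substituting this into the definitions of $Q$, $S$ and $R$ gives $S=G_e\Sigma_eJ_{eo}^T=BLR$ and $Q=G_e\Sigma_eG_e^T=BLRL^TB^T$, from which $SR^{-1}S^T=BLRL^TB^T=Q$, so that $\bar Q=Q-SR^{-1}S^T=0$. With $\bar Q=0$ the choice $\Sigma=0$ immediately satisfies \eqref{eq:ARE-bar}: the first equation forces $\bar K\Sigma_\epsilon\bar K^T=0$ and hence $\bar K=0$ (as $\Sigma_\epsilon\succ0$), which gives $K=SR^{-1}$; the second equation reads $0=0$; and the third gives $\Sigma_\epsilon=R$. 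This reproduces the claimed triple $(0,SR^{-1},R)$.

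Next I would show $\bar F_c=F$. The key observation is the identity $H_o=J_{eo}H$, which follows directly from $H_o=T_oH_c=T_o\left[\begin{smallmatrix}I\\\Upsilon\end{smallmatrix}\right]H$ and $J_{eo}=T_oJ_e=T_o\left[\begin{smallmatrix}I\\\Upsilon\end{smallmatrix}\right]$. Using $K=SR^{-1}=B\Upsilon\Sigma_eJ_{eo}^TR^{-1}$ together with $H_o=J_{eo}H$ and the factorization $\Upsilon=LJ_{eo}$, one computes $KH_o=BLJ_{eo}\Sigma_eJ_{eo}^TR^{-1}J_{eo}H=BLRR^{-1}J_{eo}H=BLJ_{eo}H=B\Upsilon H$. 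Hence $\bar F_c=F_c-KH_o=F+B\Upsilon H-B\Upsilon H=F$. Finally, $F=\blkdiag(F^1,\dots,F^M)$ with $F^i=U-a^i\hat e_1^T$ in observer form, so the eigenvalues of each block are the roots of $\mathbf A^i(z)$; by the hypothesis $\mathbf A^i(z)\neq0$ for $|z|\geq1$ these lie strictly inside the unit circle, proving stability of $\bar F_c$.

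The crux of the argument is the first step: recognizing that \eqref{eqn:lin-condition} is exactly the row-space inclusion $\Upsilon=LJ_{eo}$, and that this, combined with $C=0$, forces the weighted Schur complement $\bar Q=G_e(\Sigma_e-\Sigma_eJ_{eo}^TR^{-1}J_{eo}\Sigma_e)G_e^T$ to vanish for every admissible $\Sigma_e$. Compared with Theorem~\ref{thm:trivial-sol}, where $\bar Q=0$ is obtained by annihilating the middle factor through \eqref{eqn:Sigma_e}, here the middle factor need not vanish; it is annihilated only after multiplication by $G_e=B\Upsilon$, which is why the OE structure ($a=c$, so $G_e=B\Upsilon$) is essential and why the applicable regime is strictly larger. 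Everything else is routine linear algebra paralleling the earlier proof.
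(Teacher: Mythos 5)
Your proof is correct, but it reaches the two key identities ($\bar Q=0$ and $KH_o=B\Upsilon H$) by a genuinely different mechanism than the paper. The paper's proof writes $\bar Q=B\left(\Upsilon-\Upsilon\Sigma_eJ_{eo}^TR^{-1}J_{eo}\right)\Sigma_e\Upsilon^TB^T$ and invokes Lemma~\ref{lem:zero-one}: since $\Upsilon\Sigma_eJ_{eo}^TR^{-1}$ is the same zero-one matrix for every $\Sigma_e\succ0$, the parenthesized factor is independent of $\Sigma_e$, so it suffices to evaluate it at $\Sigma_e=I$, where \eqref{eqn:lin-condition} makes it vanish; the identity $F_c-KH_o=F$ is obtained by the same reduction. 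You instead read \eqref{eqn:lin-condition} as a row-space inclusion, factor $\Upsilon=LJ_{eo}$ with $L=\Upsilon J_{eo}^T(J_{eo}J_{eo}^T)^{-1}$, and then verify $Q=SR^{-1}S^T$ and $KH_o=B\Upsilon H$ by direct algebra (using $H_o=J_{eo}H$, which the paper leaves implicit), uniformly in $\Sigma_e$ with no reduction to a special case. Your route is self-contained --- it never needs Lemma~\ref{lem:zero-one} --- and it makes the $\lambda$-independence structurally transparent, since the factorization $\Upsilon=LJ_{eo}$ involves no covariance at all; it also explains cleanly why the OE structure matters (only with $C=0$ does $G_e=B\Upsilon$ inherit the factorization). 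What the paper's route buys in exchange is the zero-one structure itself: Lemma~\ref{lem:zero-one} is reused in Corollary~\ref{cor:predictor-OE} to show that $K$ and $G_r-KJ_{ro}$ are $B$ times zero-one matrices, a fact your argument yields less directly (though it is recoverable: your $L$ equals the paper's zero-one matrix, since $\Upsilon\Sigma_eJ_{eo}^TR^{-1}=LJ_{eo}\Sigma_eJ_{eo}^TR^{-1}=L$, giving $K=BL$). One cosmetic remark: in verifying \eqref{eq:ARE-bar} you say the first equation ``forces'' $\bar K=0$; for an existence claim it is enough to plug in $(\Sigma,\bar K,\Sigma_\epsilon)=(0,0,R)$ and check all three equations, which your computation already does.
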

\begin{proof}
We have that 
\begin{align*}
\bar Q&=B\Upsilon\left(\Sigma_e-
\Sigma_eJ_{eo}^TR^{-1}J_{eo}\Sigma_e\right)\Upsilon^TB^T\\
&=B\left(\Upsilon-
\Upsilon\Sigma_eJ_{eo}^TR^{-1}J_{eo}\right)\Sigma_e\Upsilon^TB^T.
\end{align*}
We have by Lemma~\ref{lem:zero-one} that the matrix in the parenthesis
is a zero-one matrix. Hence, it is sufficient to let $\Sigma_e=I$, and then 
it follows from our assumption that $\bar Q=0$. As in 
the proof of the previous theorem, we obtain that $(\Sigma,K,\Sigma_\epsilon)=(0,SR^{-1},R)$
satisfies the algebraic Riccati equation. We have
\begin{align*}
F_c-KH_o&=F+B\Upsilon H-B\Upsilon \Sigma_eJ_{eo}^TR^{-1}J_{eo}H\\
&=F+B\left(\Upsilon-\Upsilon\Sigma_eJ_{eo}^TR^{-1}J_{eo}\right)H=F
\end{align*}
by the same reasoning as above. From the assumptions on $\mathbf A^i(z)$
it follows that $F$ has all eigenvalues strictly inside the unit circle. 
\end{proof}
\begin{corollary}\label{cor:predictor-OE}
Under the assumptions in the theorem, it follows that the predictor 
transfer functions in \eqref{eqn:predictor-transfer} are given by
\BEQ\label{eqn:predictor-transfer-OE}
\begin{aligned}
\mathbf W_o(z)&=H_o(zI-F)^{-1}K \\
\mathbf W_r(z)&=H_o(zI-F)^{-1}(G_r-KJ_{ro})+J_{ro}
\end{aligned}
\EEQ
where $K$ and $G_r-KJ_{ro}$ are equal to $B$ times zero-one matrices. 
\end{corollary}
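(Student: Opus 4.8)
The plan is to dispatch the three assertions in turn, the first two by direct substitution and the third by a structural (zero-one) argument that carries the real content.

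I would first obtain the two transfer-function formulas essentially for free. Theorem~\ref{thm:trivial-sol-OE} already gives the solution $(\Sigma,K,\Sigma_\epsilon)=(0,SR^{-1},R)$ together with the identity $F_c-KH_o=F$ established in its proof. Since the predictor dynamics in \eqref{eqn:predictor-transfer} depend on $F_c$ and $K$ only through $zI-F_c+KH_o=zI-(F_c-KH_o)$, substituting $F_c-KH_o=F$ turns \eqref{eqn:predictor-transfer} verbatim into \eqref{eqn:predictor-transfer-OE}; nothing further is required. For the zero-one structure of $K$, I would use $K=SR^{-1}$ with $S=G_e\Sigma_e J_{eo}^T$, and note that $a=c$ forces $C=\blkdiag(c^1-a^1,\dots,c^M-a^M)=0$, so that $G_e=C+B\Upsilon=B\Upsilon$. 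Hence $K=B\,(\Upsilon\Sigma_e J_{eo}^T R^{-1})$, and Lemma~\ref{lem:zero-one} identifies the parenthesised factor as a zero-one matrix. Writing $\Pi=\Upsilon\Sigma_e J_{eo}^T R^{-1}$ for this factor, we have $K=B\Pi$ with $\Pi$ zero-one.

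For the coefficient of $\mathbf r$ I would use $G_r=B\Omega$ together with $K=B\Pi$ to factor $G_r-KJ_{ro}=B(\Omega-\Pi J_{ro})$, reducing the claim to showing that $\Omega-\Pi J_{ro}$ is a zero-one matrix. The lever here is assumption \eqref{eqn:lin-condition}, which says that the rows of $\Upsilon$ lie in the row space of $J_{eo}$; combined with the definition of $\Pi$ this gives the identity $\Pi J_{eo}=\Upsilon$. Reading this row by row, and using that $\Pi$ is zero-one, each row of $\Pi$ selects a subset of the observed signals whose $J_{eo}$-rows sum to the corresponding row of $\Upsilon$. Applying the same selection to $J_{ro}=T_o\BBM 0 \\ \Omega \EBM$ sums the matching rows of $\Omega$, because an observed output contributes a zero row while an observed input $j$ contributes the $j$th row of $\Omega$; thus $\Omega-\Pi J_{ro}$ collects exactly the reference feedthrough not already carried by the directly observed inputs.

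The main obstacle is the last step: verifying that this residual lands in $\{0,1\}$ rather than merely being an integer combination of rows of $\Omega$, i.e.\ that the selected rows never overlap in a way that creates entries outside $\{0,1\}$. This is a purely structural fact about the zero-one data $\Upsilon$, $\Omega$ and $T_o$ of exactly the same flavour as Lemma~\ref{lem:zero-one}, and I would settle it by the same kind of direct combinatorial inspection of the network interconnection used to prove that lemma, checking the cases according to whether the input indexing a given row of $\Omega$ is itself among the observed signals.
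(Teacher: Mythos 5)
Your proposal follows the paper's own proof essentially step for step: the paper likewise takes $F_c-KH_o=F$ from the proof of Theorem~\ref{thm:trivial-sol-OE}, writes $K=B\Upsilon\Sigma_eJ_{eo}^TR^{-1}$ and $G_r-KJ_{ro}=B\left(\Omega-\Upsilon\Sigma_eJ_{eo}^TR^{-1}J_{ro}\right)$, and then concludes by invoking Lemma~\ref{lem:zero-one}. The ``main obstacle'' you flag---verifying that $\Omega-\Upsilon\Sigma_eJ_{eo}^TR^{-1}J_{ro}$ is itself a zero-one matrix---is exactly the point the paper dispatches with the bare citation of Lemma~\ref{lem:zero-one}, so your treatment (deriving $\Pi J_{eo}=\Upsilon$ from \eqref{eqn:lin-condition} and sketching the row-selection argument) is, if anything, more explicit than the published proof.
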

\begin{proof}
We have $F_c-KH_o=F$ from the proof of the theorem. Moreover, 
$K=B\Upsilon\Sigma_eJ_{eo}^TR^{-1}$ and
$$G_r-KJ_{ro}=B\left(\Omega-\Upsilon\Sigma_eJ_{eo}^TR^{-1}J_{ro}\right).$$
The result then follows from Lemma~\ref{lem:zero-one}.
\end{proof}
\begin{remark}\label{rem:affine-transformations}
We notice that the predictor transfer functions are $\theta$-independent
affine zero-one transformations
of the open loop transfer function $\mathbf G(z)$. In case \eqref{eqn:lin-condition}
does not hold, then this is not true, which follows from the fact that then 
$\Sigma\neq 0$. 
We will in Section~VIII see that the case of a trivial solution to the algebraic Riccati equation greatly simplifies the ML problem such that
it collapses to a linear PEM problem. This will also 
make it possible for us to understand why linear PEM formulations for
the general ARMAX case do not provide solutions to the ML problem.  
\end{remark}
\section{Informative Signals}
A crucial assumption needed in order to identify a model of a dynamical system
is the informativity of the signals collected for system identification. 
The signals $(x_{o},r)$ are said to be {\it informative enough} if for any two models $\mathbf M_c(z;\theta_1)$ and $\mathbf M_c(z;\theta_2)$ in $\mathcal M$, it holds that 
\BEQ\label{eqn:informative}
\E{\left(\hat x_{o,k}(\theta_1)-\hat x_{o,k}(\theta_2)\right)
\left(\hat x_{o,k}(\theta_1)-\hat x_{o,k}(\theta_2)\right)^T}=0
\EEQ
implies that $\mathbf W\left(e^{i\omega};\theta_1\right)=\mathbf W\left(e^{i\omega};\theta_2\right)$ for almost  
all $\omega$. 
The expectation is here with respect to the innovation 
$\epsilon_k(\theta_0)$ as well as the 
reference value $r_k$, which is deterministic but assumed to be what is called quasi-stationary. Hence the expectation should be interpreted
as an infinite time-average. For a more thorough discussion see \cite[Section 2.3]{lju87}. 
We will use superscript $i$ in this section to distinguish 
between different $\theta_i$. 
With boldface letters denoting Z-transforms, it holds that 
$$\hat{\mathbf{x}}_{o}^2
-\hat{\mathbf{x}}_{o}^1=\Delta\mathbf W\mathbf z$$
where $\Delta\mathbf W=\mathbf W^2-\mathbf W^1$ and $\mathbf z=(\mathbf x_o,
\mathbf r)$, the latter being a function of $\theta_0$. 
From Parsesval's formula it follows that 
\eqref{eqn:informative} is equivalent to 
$$\Delta \mathbf W\left(e^{i\omega}\right)\Phi_z(\omega)
\Delta \mathbf W\left(e^{-i\omega}\right)^T=0$$
where $\Phi_z$ is the spectrum of $z_k$. Hence, a sufficient condition for the 
above equality to imply that $\mathbf W\left(e^{i\omega};\theta_1\right)=\mathbf W\left(e^{i\omega};\theta_2\right)$ for almost  
all $\omega$, is that $\Phi_z(\omega)$ is positive definite for almost all 
$\omega$. From the innovation form it follows that 
a realization of the transfer function $\mathcal H(z)$ from $(\mathbf r,\boldsymbol\epsilon)$
to $\mathbf z$ is given by
$$(\mathcal A,\mathcal B,\mathcal C, \mathcal D)=\left(F_c,\BBM G_r&K\EBM,\BBM H_o\\0\EBM,\BBM J_{ro}&I\\I&0\EBM\right).$$
It holds that 
\begin{align*}
&\BBM I&0\\\mathcal C(zI-\mathcal A)^{-1}&I\EBM \mathcal M(z)\BBM I&(zI-\mathcal A)^{-1}\mathcal B\\0&I\EBM\\&=
\BBM -zI+\mathcal A&0\\0& \mathcal H(z)\EBM,
\end{align*}
where
$$\mathcal M(z)=\BBM -zI+\mathcal A&\mathcal B\\\mathcal C&\mathcal D\EBM.$$
It holds that $-zI+\mathcal A$ is invertible on the unit circle, since $F_c$
has all eigenvalues strictly inside the unit circle by assumption. Hence 
$\mathcal H(z)$ has full 
rank on the unit circle if and only if $\mathcal M(z)$ has
full rank on the unit circle.
The rank of $\mathcal M(z)$ is full if and only if the rank of 
$$\BBM -zI+F_c&K\\H_o&I\EBM=
\BBM I&K\\0&I\EBM\BBM -zI+F_c-KH_o&0\\H_o&I\EBM$$
is full. 
This matrix drops in rank
for the eigenvalues of $F_c-KH_o$.  
Since the eigenvalues by Lemma~1 are
strictly inside the unit circle the rank is actually full on the unit circle. 
Therefore the rank of $\mathcal H(z)$ is full on the unit circle, and hence
$\Phi_z(\omega)$ is positive definite for all $\omega$ if and only if
the spectrum for
$(r_k,\epsilon_l)$ is positive definite for all $\omega$.
We now assume that $r_k$ and $e_k$ are independent of one another.\footnote{We
are not interested in the case when data is collected under feedback.} It then follows
that also $r_k$ and $\epsilon_k$ are independent of one another. Since
$\Sigma_e$ is assumed to be positive definite, we introduce the assumption 
\begin{enumerate}
\item[A4.] $\Phi_r(\omega)\succ 0,\quad\forall\omega$, and
$r_k$ and $e_k$ are independent,
\end{enumerate}
which together with assumptions A2 and A3 will guarantee that the signals $(x_o,r)$ are informative enough. We summarize this result in a lemma.
\begin{lemma}\label{lem:informative}
Assume that A2--A4 holds. Then the signals $(x_o,r)$ are informative
enough, i.e. for any two models $\mathbf M_c(z;\theta_1)$ and $\mathbf M_c(z;\theta_2)$ in $\mathcal M$, it holds that \eqref{eqn:informative}
implies that $\mathbf W\left(e^{i\omega};\theta_1\right)=\mathbf W\left(e^{i\omega};\theta_2\right)$ for almost  
all $\omega$. 
\end{lemma}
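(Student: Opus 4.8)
The plan is to follow the spectral-factorization strategy for informative data, isolating the contribution of the genuine innovation from that of the deterministic reference. First I would form the difference of the one-step predictors $\Delta\boldsymbol\epsilon=\boldsymbol\epsilon^1-\boldsymbol\epsilon^2$ and express it purely in terms of the exogenous quantities driving the data, namely the reference $\mathbf r$ and the innovation $\boldsymbol\epsilon^0$ of the data-generating model indexed by $\theta_0$. Substituting $\mathbf x_o$ obtained from $\hat{\mathbf x}_o^0=\mathbf W_r^0\mathbf r+\mathbf W_o^0\mathbf x_o$ and $\boldsymbol\epsilon^0=\mathbf x_o-\hat{\mathbf x}_o^0$ yields $\Delta\boldsymbol\epsilon=\bar{\mathbf W}_r\mathbf r+\bar{\mathbf W}_o\boldsymbol\epsilon^0$ with $\bar{\mathbf W}_o=\Delta\mathbf W_o(I-\mathbf W_o^0)^{-1}$ and $\bar{\mathbf W}_r=\Delta\mathbf W_r+\Delta\mathbf W_o(I-\mathbf W_o^0)^{-1}\mathbf W_r^0$. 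The key structural observation is that $(I-\mathbf W_o^0)^{-1}$ admits the realization $(F_c,K,H_o,I)$, whose dynamics matrix $F_c$ is stable because $\theta_0\in\Theta$; hence every transfer function appearing here has its poles strictly inside the unit disc and the expressions are well defined on $|z|=1$.

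Next I would translate the informative condition into the frequency domain. Because $\Delta\boldsymbol\epsilon$ is stationary, the left-hand side of \eqref{eqn:informative} equals the integral over $\omega$ of its spectral density, which is the Hermitian quadratic form in $\bar{\mathbf W}_r(e^{i\omega})$ and $\bar{\mathbf W}_o(e^{i\omega})$ weighted by the joint spectrum of $(\mathbf r,\boldsymbol\epsilon^0)$ displayed just before the lemma. Since a spectral density is positive semidefinite and its integral vanishes, the integrand must vanish for almost all $\omega$. The reference is deterministic and quasi-stationary, hence uncorrelated with the zero-mean innovation, so the joint spectrum is block diagonal and the form splits into a reference part and an innovation part $\bar{\mathbf W}_o(e^{i\omega})\Sigma_\epsilon\bar{\mathbf W}_o(e^{-i\omega})^T$; using $\Sigma_\epsilon\succ0$ (guaranteed by Lemma~\ref{lem:DARE} under A2--A3) I would conclude that $\bar{\mathbf W}_o(e^{i\omega})=0$ for almost all $\omega$.

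It then remains to pass from $\bar{\mathbf W}_o=0$ to $\Delta\mathbf W_o=0$. This is the delicate step and the one I expect to be the main obstacle, since it requires that the factor $(I-\mathbf W_o^0)^{-1}$ be invertible on the unit circle, i.e. have no transmission zeros there. I would establish this by the pencil identity
\[
\BBM I&-K\\0&I\EBM\BBM -zI+F_c&K\\H_o&I\EBM=\BBM -zI+F_c-KH_o&0\\H_o&I\EBM,
\]
which shows that the transmission zeros of $(I-\mathbf W_o^0)^{-1}$ coincide with the eigenvalues of $F_c-KH_o$; by Lemma~\ref{lem:DARE} these lie strictly inside the unit disc, so there are none on $|z|=1$. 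Consequently $\bar{\mathbf W}_o=\Delta\mathbf W_o(I-\mathbf W_o^0)^{-1}=0$ forces $\Delta\mathbf W_o(e^{i\omega})=0$ for almost all $\omega$.

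Finally, with $\Delta\mathbf W_o=0$ the cross term in $\bar{\mathbf W}_r$ drops out, leaving $\bar{\mathbf W}_r=\Delta\mathbf W_r$. The surviving reference term of the spectral form is then $\Delta\mathbf W_r(e^{i\omega})\Phi_r(\omega)\Delta\mathbf W_r(e^{-i\omega})^T$, so assumption A4 ($\Phi_r\succ0$) forces $\Delta\mathbf W_r(e^{i\omega})=0$ for almost all $\omega$. Since $\mathbf W=(\mathbf W_o,\mathbf W_r)$ and $\Delta$ denotes the difference between the two models, combining $\Delta\mathbf W_o=0$ and $\Delta\mathbf W_r=0$ gives $\mathbf W(e^{i\omega};\theta_1)=\mathbf W(e^{i\omega};\theta_2)$ for almost all $\omega$, which is the claim.
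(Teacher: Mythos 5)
Your proof is correct and follows essentially the same route as the paper: the same decomposition $\Delta\boldsymbol\epsilon=\bar{\mathbf W}_r\mathbf r+\bar{\mathbf W}_o\boldsymbol\epsilon^0$, the same spectral argument yielding $\bar{\mathbf W}_o\left(e^{i\omega}\right)=0$, the same pencil identity showing $(I-\mathbf W_o^0)^{-1}$ has no transmission zeros on the unit circle so that $\Delta\mathbf W_o=0$, and the same final step invoking A4 to get $\Delta\mathbf W_r=0$. The only (welcome) difference is that you explicitly justify discarding the cross term by noting that $\Phi_{r,\epsilon}(\omega)=0$ because $r$ is deterministic and $\epsilon$ is zero mean, a point the paper leaves implicit while displaying the full joint spectrum.
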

\section{Identifiability}
We will in this section discuss how to identify  the open loop transfer functions 
$\mathbf G^i$. This is done in two steps. The first is to use a standard
result that it is possible to obtain the innovation form 
transfer functions from $\mathbf W$. We then introduce the concept of 
{\it generic network identifiability}, \cite{hen+gev+baz19}, which is equivalent to be able to recover 
$\mathbf G^i$ uniquely from $\mathbf G_c$. Let
\BEQ\label{eqn:Go}
\mathbf G_o(z)=H_o(zI-F_c)^{-1}K+I.
\EEQ
Applying the Z-transform to the innovation form results in
\begin{align*}
\hat{\mathbf x}_o=\mathbf G_c\mathbf r+\mathbf G_o\boldsymbol\epsilon,\quad
\mathbf x_o=\hat{\mathbf x}_o+\boldsymbol\epsilon.
\end{align*}
We solve for $\hat{\mathbf x}_o$ in these equations and obtain that 
$$\hat{\mathbf x}_o=(I+\mathbf G_o)^{-1}\mathbf G_c\mathbf r+
(I+\mathbf G_o)^{-1}\mathbf G_o \mathbf x_o,$$
and hence it must hold that 
$$\mathbf W_o=(I+\mathbf G_o)^{-1}\mathbf G_o,\quad 
\mathbf W_r=(I+\mathbf G_o)^{-1}\mathbf G_c.
$$
We can now solve for $\mathbf G_o$ and $\mathbf G_c$ in the above
equations, which results in %
\BEQ\label{eqn:predtoinnovation}
\begin{aligned}
\mathbf G_o&=\mathbf W_o(I-\mathbf W_o)^{-1}\\
\mathbf G_c&=(I-\mathbf W_o)^{-1}\mathbf W_r.
\end{aligned}
\EEQ
From this, we realize that we can recover the innovation form transfer functions
for the closed
loop system from $\mathbf W$. We summarize this result in a lemma.
\begin{lemma}\label{lem:predtoinnovation}
Given the predictor transfer functions $\mathbf W$, it is possible to recover the innovation form transfer functions
$(\mathbf G_c,\mathbf G_o)$ using the equations in \eqref{eqn:predtoinnovation}.\
\end{lemma}
Once $\mathbf G_c$ is known, the question if $\mathbf G$ can be recovered
arises. This has been studied by many researchers using the definition
of generic network identifiability, e.g. \cite{hen+gev+baz19}, and most often
in a graph setting. 
It is straightforward to conclude that the closed loop transfer function from $\mathbf r$
to $\mathbf x_o$ is given by
\BEQ\label{eqn:Gc}
\mathbf G_c=J_{eo}(I-\mathbf G\Upsilon)^{-1}\mathbf G\Omega +
J_{ro}.
\EEQ
We define the set of {\it admissible} $\mathbf G_c$ as
$$V_c=\{\mathbf G_c\mid \mathbf G_c=J_{eo}(I-\mathbf G\Upsilon)^{-1}\mathbf G\Omega +
J_{ro},\;\forall \mathbf G\}.$$
We say that  we have generic network identifiability if $\mathbf G$ is 
uniquely defined by 
\eqref{eqn:Gc} for any $\mathbf G_c\in V_c$.  
By generic we mean that the result holds for
almost all $\mathbf G$ except for a set of $\mathbf G$ with measure zero.

Consider the example of a network of dynamical systems as described by the block diagram in 
Figure~\ref{fig:dyn-network} which is taken from \cite{hen+gev+baz19}. 
Assume that we only measure $\mathbf u^3$. Some calculations show that 
\BEQ\label{eq:example-closed-loop}
\begin{aligned}
(1-\mathbf G^1\mathbf G^3)\mathbf u^3&=\mathbf G^1\mathbf r^1+\mathbf G^1\mathbf G^2\mathbf r^2+\mathbf r^3\\&+
\mathbf H^1\mathbf e^1+\mathbf G^1\mathbf H^2\mathbf e^2 + \mathbf G^1\mathbf H^3\mathbf e^3,
\end{aligned}
\EEQ
and from this we realize that \eqref{eqn:Gc} may be written as
$$\BBM \mathbf G^1&\mathbf G^1\mathbf G^2& 1\EBM - (1-\mathbf G^1\mathbf G^3)\mathbf G_c=0.$$
From this, we may derive that 
$$\mathbf G^1=\frac{\mathbf G_c^1}{\mathbf G_c^3},\quad
\mathbf G^2=\frac{\mathbf G_c^2}{\mathbf G_c^1},\quad
\mathbf G^3=\frac{\mathbf G_c^3-1}{\mathbf G_c^1}.$$
Hence we have generic network identifiability for this example. Notice that we cannot recover $\mathbf H^i$ since
we do not know $\mathbf e^i$ like we know $\mathbf r^i$. The condition for generic network 
identifiability is valid irrespective of what identification method we 
use. In an indirect method, we identify $\mathbf G_c$ and
use the above equations to solve for $\mathbf G$. In a direct method we directly identify $\mathbf G$ since the
transfer function $\mathbf G_c$ is parameterized in terms of the parameters of the transfer function
$\mathbf G$.
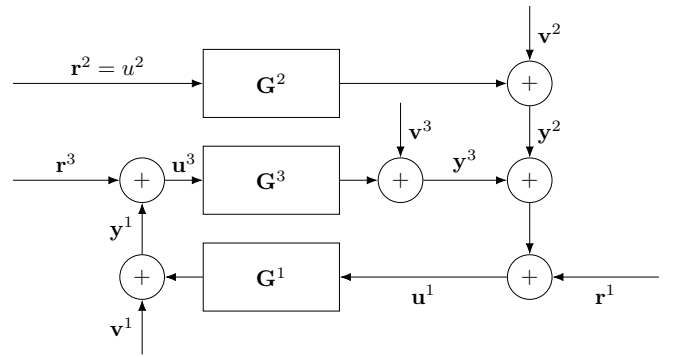
\begin{figure}[htbp]
\begin{center}
\resizebox{0.48\textwidth}{!}{
\begin{tikzpicture}[auto, node distance=2cm]
    % We place the blocks, the inputs, and the summations
    \node [input, name=r2] {};
    \node [block, right of=r2, node distance=4cm] (G2) {$\mathbf G^2$};
    \node [sum, right of=G2, node distance=4cm] (sum2) {$+$};
    \node [input, above of=sum2, node distance=1.2cm] (e2) {};
    \node [block, below of=G2, node distance=1.5cm] (G3) {$\mathbf G^3$};
    \node [sum, right of=G3, node distance=2cm] (sum3) {$+$};
    \node [sum, right of=sum3, node distance=2cm] (sum23) {$+$};
    \node [input, above of=sum3, node distance=1.2cm] (e3) {};
    \node [sum, below of=sum23, node distance=1.5cm] (sum4) {$+$};
    \node [sum, left of=G3, node distance=2cm] (sum13) {$+$};
    \node [block, below of=G3, node distance=1.5cm] (G1) {$\mathbf G^1$};
    \node [sum, left of=G1, node distance=2cm] (sum1) {$+$};
    \node [input, below of=sum1, node distance=1.2cm] (e1) {};
    \node [input, left of=sum13] (r3) {};
    \node [input, right of=sum4] (r1) {};
    % We draw edges
    \draw [-Latex] (r1) -- node {$\mathbf r^1$} (sum4);        
    \draw [-Latex] (r2) -- node {$\mathbf r^2=u^2$} (G2);    
    \draw [-Latex] (r3) -- node {$\mathbf r^3$} (sum13);
    \draw [-Latex] (e1) -- node {$\mathbf v^1$} (sum1);
    \draw [-Latex] (e2) -- node {$\mathbf v^2$} (sum2);
    \draw [-Latex] (e3) -- node {$\mathbf v^3$} (sum3);
    \draw [-Latex] (G2) -- node {} (sum2);
    \draw [-Latex] (sum2) -- node {$\mathbf y^2$} (sum23);
    \draw [-Latex] (G3) -- node {} (sum3);
    \draw [-Latex] (sum3) -- node {$\mathbf y^3$} (sum23);
    \draw [-Latex] (sum23) -- node {} (sum4);
    \draw [-Latex] (sum4) -- node {$\mathbf u^1$} (G1);
    \draw [-Latex] (G1) -- node {} (sum1);
    \draw [-Latex] (sum1) -- node {$\mathbf y^1$} (sum13);
    \draw [-Latex] (sum13) -- node {$\mathbf u^3$} (G3);
\end{tikzpicture}}
\end{center}
\caption{Block diagram for a dynamic network where $\mathbf{v}^i
=\mathbf{H}^i\mathbf{e}^i$.}
\centering
\label{fig:dyn-network}
\end{figure}

Assuming generic network identifiability, we can thus
identify $\mathbf G$. However, we are not able to identify $\mathbf H$
in general. We cannot even recover $\mathbf H_c$ in 
general. It is not
enough to assume that $\mathbf H_c$ should be minimum phase. This is obvious from the above example. 
Because of this we can only hope to prove consistency
for estimates of $(a,b)$, and not for estimates of $(c,\lambda)$. However, for special cases we may 
sometimes prove more, c.f. Remark~\ref{rem:innovation-form}. Since generic network identifiability is 
a purely algebraic question that can be investigated separately, we introduce the
following assumption.
\begin{enumerate}
\item[A5.] The equation in \eqref{eqn:Gc} has a unique solution
$\mathbf G$ for all $\mathbf G_c\in V_c$. 
\end{enumerate}

\section{Maximum Likelihood Estimation}
There are two possible ways of defining
an ML problem for system identification. Already in \cite{ASTROM196596} 
it is mentioned that one possible formulation is to use the 
innovation from in \eqref{eqn:innovation}, and make use of the fact that the 
innovations $\epsilon_k$ are independent random variables with zero mean and covariance $\Sigma_\epsilon$.  Hence the 
pdf
of the innovations over the time interval one to $N$ is just
the product of the pdfs for each time instant, and 
by considering the negative logarithm of this function we obtain the 
optimization problem 
\BEQ\label{opt:ML-stationary}
\begin{aligned}
\minimize\quad&\frac{1}{2}\sum_{k=1}^N \left(\epsilon_k^T\Sigma_{\epsilon}^{-1}\epsilon_k+
\ln \det \Sigma_{\epsilon}\right)\\
\st\quad& \hat\xi_{k+1}=F_c \hat\xi_k+G_rr_k+K\epsilon_k\\
&x_{o,k}=H_o \hat \xi_k+J_{ro}r_k+\epsilon_k\\
&\eqref{eq:ARE},\quad\theta\in\Theta
\end{aligned}
\EEQ
with variables $(\theta, \hat \xi,\epsilon,\Sigma,K,\Sigma_\epsilon)$, and where 
$\xi_1$ is a free variable. Notice that $\Sigma_{\epsilon}$ 
is invertible by Lemma~\ref{lem:DARE} and that the algebraic Riccati equation 
in \eqref{eq:ARE} is
also a constraint in the above optimization problem. 
The innovations are functions of the observations 
$x_o=
(x_{o,1},\ldots,x_{o,N})$ through the 
constraints. Since $\hat\xi_1$ is a free variable, it can be argued that this
is not a true ML formulation. 

An alternative formulation of the ML problem is based on the fact that the pdf for the observations
can be factorized as 
$$p(x_o)=
\prod_{k=1}^Np_k(x_{o,k}|x_{o,k-1},\ldots,x_{o,1}).$$
Let us introduce the time-varying Kalman filter
\begin{align*}
\check\xi_{k+1}&=F_c\check\xi_k+G_rr_k+K_k(x_{o,k}-H_o\check\xi_k-J_{ro}r_k)\\
\check x_{o,k}&=H_o\check \xi_k+J_{ro}r_k
\end{align*}
where $K_k$ is obtained from the recursion
\BEQ\label{eq:Rrec}
\begin{aligned}
&\BBM I&K_k\\0&I\EBM\BBM \Sigma_{k+1}&0\\0&\Sigma_{\epsilon,k}\EBM\BBM I&K_k\\0&I\EBM^T\\&\qquad=
\BBM F_c&G_e\\H_o&J_{eo}\EBM\BBM \Sigma_k &0\\0&\Sigma_e\EBM\BBM F_c&G_e\\H_o&J_{eo}\EBM^T,
\end{aligned}
\EEQ
with $\check \xi_1$ and $\Sigma_1$ given, which are the expected value and 
covariance of $\xi_1$, respectively. 

If we assume that $r_k$ has been zero for $k\leq 0$, then
$\check\xi_1=0$ and $\Sigma_1= P$, where $P$ is the solution of the 
algebraic Lyapunov equation
\BEQ\label{eqn:Lyapunov}
P=F_cPF_c^T+G_e\Sigma_eG_e^T.
\EEQ
Associated with the time-varying Kalman-filter are the non-stationary innovations 
$\check\epsilon_k=x_{o,k}-\check x_{o,k}$, and the non-stationary innovation-form
\begin{align*}
\check\xi_{k+1}&=F_c\check\xi_k+G_rr_k+K_k\check\epsilon_k\\
x_{o,k}&=H_o\check \xi_k+J_{ro}r_k+\check\epsilon_k,
\end{align*}
with $\check\xi_1=0$, and 
where $\check\epsilon_k$ is a sequence of independent zero mean random variables with 
covariances $\Sigma_{\epsilon,k}$. It now holds that $p_k(x_{o,k}|x_{o,k-1},\ldots,x_{o,1})$
are pdfs for Gaussian random variables with mean $\check x_{o,k}$ and covariance 
$\Sigma_{\epsilon,k}$. By taking the logarithm of the factorized 
version of the pdf for the observations we see that the ML estimation problem is 
equivalent to the optimization problem 
\BEQ\label{opt:ML-timevaying}
\begin{aligned}
\minimize\quad&\frac{1}{2}\sum_{k=1}^N \left(\check\epsilon_k^T\Sigma_{\epsilon,k}^{-1}\check\epsilon_k+
\ln \det \Sigma_{\epsilon,k}\right)\\
\st\quad&\check \xi_{k+1}=F_c\check \xi_k+G_rr_k+K_k\check\epsilon_k\\
&x_{o,k}=H_o\check  \xi_k+J_{ro}r_k+\check\epsilon_k\\
&\eqref{eq:Rrec},\quad \eqref{eqn:Lyapunov},\quad\theta\in\Theta
\end{aligned}
\EEQ
with variables $(\theta, \check\xi,\check\epsilon,P,\Sigma_k,K_k,\Sigma_{\epsilon,k})$, and where 
$\check\xi_1=0$. Notice that $\Sigma_{\epsilon,k}$ is invertible by
Lemma~\ref{lem:bound-cov}. Both the algebraic Lyapunov equation
in \eqref{eqn:Lyapunov} as well as the Riccati recursion in \eqref{eq:Rrec}
are constraints in the optimization problem. 

Neither of the above formulations are very good formulations of the 
system identification problem for
numerical optimization, since the constraints depend on the parameter 
$\theta$ via the solution to either an
algebraic Lyapunov equation or an 
algebraic Riccati equation. However, we will see that they are good formulations of the problem to prove consistency of 
the estimates. For numerical optimization we present an equivalent 
formulation of the first problem formulation in a later section where 
the algebraic Riccati equation is not present as a constraint. 
\section{Consistency}
Consistency of ML estimation for system identification has a long history. The first investigation
was performed in \cite{ASTROM196596}, where single-input single-output ARMAX models were 
investigated.
Later references are e.g.
\cite{ljung_conver_1978,14f6a003-64aa-3900-9a30-7d5cbc9eab0a} and the 
references therein. The first reference of these two references shows convergence 
with probability one of parameters for linear, possibly time-varying, systems---hence not necessarily ergodic.
It considers general time-varying predictors. 
The second reference considers multiple-input multiple-output ARMAX
models under ergodic assumptions. It does in detail investigate
the time-varying predictor based on the conditional mean 
with respect to the observed data.
We will 
investigate consistency for the time-varying predictor and
obtain consistency for the time-invariant predictor for free. Our 
proof is a generalization of the results in~\cite{14f6a003-64aa-3900-9a30-7d5cbc9eab0a}.
Introduce the log-likelihood function
$$L_N(\theta)=\frac{1}{2N}\sum_{k=1}^N \left(\check\epsilon_k^T\Sigma_{\epsilon,k}^{-1}\check\epsilon_k+
\ln \det \Sigma_{\epsilon,k}\right)+\frac{n_o}{2}\ln(2\pi),$$
where $\check \epsilon_k$ was defined in the previous section. 
Notice that both $\check\epsilon_k$ and $\Sigma_{\epsilon,k}$ are functions of $\theta$, i.e. we
may write $\check\epsilon_k(\theta)=x_{o,k}-\check x_{o,k}(\theta)$ and
$\Sigma_{\epsilon,k}(\theta)$.
We emphasize that $x_{o,k}$ is obtained
from a model with parameter value $\theta_0$ which is in general not equal to $\theta$. 
We denote with $\hat \theta_N$
the optimizing argument of $L_N(\theta)$.  We want to show that $(\hat a_N,\hat b_N)\rightarrow (a_0,b_0)$ as 
$N\rightarrow \infty$ with probability one, i.e., we want to show what is called  {\it consistency}. 

The challenging part of proving consistency is to show that 
$$\sup_{\theta\in\Theta}\left|L_N(\theta)-L(\theta)\right|\rightarrow 0,\quad 
N\rightarrow \infty\;\mathrm{w.p. 1}$$
where $L(\theta)=-\E{\ln f(\epsilon_1(\theta);\theta)}$, and where
$$
f(\epsilon;\theta)=\frac{1}{\sqrt{(2\pi)^{n_0}\det \Sigma_\epsilon(\theta)}}
\exp\left(-\frac{1}{2}\epsilon^T\Sigma_\epsilon(\theta)^{-1}
\epsilon\right).$$
Notice that the time index 1 in $\epsilon_1(\theta)$ is arbitrary since 
$\epsilon_k(\theta)$ is a stationary stochastic process. This uniform 
convergence with probability one is proven in Lemma~\ref{lemma:L-converges-uniformly} in Appendix~C.
For the time-invariant predictor formulation of the ML problem this result follows 
directly from Lemma~\ref{lem:conv-quadrati-form} in Appendix~B. 
Denote by $\Theta_{\min}$
the set of $\theta\in\Theta$ that minimizes $L(\theta)$.
It then follows, c.f. \cite[Theorem 8.2]{lju87}, that 
$$\inf_{\theta\in\Theta_{\min}}\left\|\hat \theta_N-\theta\right\|_2
\rightarrow 0,\quad N\rightarrow \infty\; \mathrm{w.p.\,\, 1}.$$
We want to show that the set $\Theta_{\min}$ contains $\theta_0$. 
Notice that
$$L(\theta)=\frac{1}{2}\Tr Q\Sigma_\epsilon(\theta)^{-1}+
\frac{1}{2}\ln\det\Sigma_\epsilon(\theta)+\text{constant},$$
where $Q=\E{\epsilon_1(\theta)\epsilon_1(\theta)^T}$. 
The function  $L$ is a strictly convex function in $\Sigma_\epsilon(\theta)^{-1}$, \cite{han+and23}. 
The gradient of the function with respect to $\Sigma_\epsilon(\theta)^{-1}$ is
$Q-\Sigma_\epsilon(\theta)$
from which we conclude that $\Sigma_\epsilon(\theta)=Q$ is the unique minimizer of $L(\theta)$ with minimal value
$$\ln\det\E{\epsilon_1(\theta)\epsilon_1(\theta)^T}+n_o+\text{constant}.$$
From the fact that the mean square error is minimized by the 
Kalman filter, it holds that 
\BEQ\label{eqn:MSE}
\E{\epsilon_1(\theta)\epsilon_1(\theta)^T}\succeq 
\E{\epsilon_1(\theta_0)\epsilon_1(\theta_0)^T}=\Sigma_\epsilon(\theta_0)
\EEQ
for all $\theta\in\Theta$. Therefore, we have
\begin{align*}
L(\theta)&\geq \ln\det\E{\epsilon_1(\theta)\epsilon_1(\theta)^T}+n_o+\text{constant}\\
&\geq L(\theta_0)=\ln\det\E{\epsilon_1(\theta_0)\epsilon_1(\theta_0)^T}+n_0
+\text{constant},
\end{align*}
where the second inequality follows from \eqref{eqn:MSE}. Equality between
$L(\theta)$ and $L(\theta_0)$ holds only if 
$$\E{\epsilon_1(\theta)\epsilon_1(\theta)^T}=
\E{\epsilon_1(\theta_0)\epsilon_1(\theta_0)^T},$$
since if the determinant
of two positive definite matrices are the same and they satisfy \eqref{eqn:MSE}, 
then they are equal, \cite{14f6a003-64aa-3900-9a30-7d5cbc9eab0a}. 

What can be said about $\theta$?
It holds for any $\theta$ that 
\begin{align*}
&\E{\epsilon_1(\theta)\epsilon_1(\theta)^T}-
\E{\epsilon_1(\theta_0)\epsilon_1(\theta_0)^T}\\
&\quad\quad\quad=\E{\left(\epsilon_1(\theta)-\epsilon_1(\theta_0)\right)\epsilon_1(\theta_0)^T}\\
&\quad\quad\quad\quad+\frac{1}{2}\E{\left(\epsilon_1(\theta)-\epsilon_1(\theta_0)\right)
\left(\epsilon_1(\theta)-\epsilon_1(\theta_0)\right)^T}.
\end{align*}
Moreover $\epsilon_1(\theta)-\epsilon_1(\theta_0)=\hat x_{o,1}(\theta_0)-
\hat x_{o,1}(\theta)$,
where $\hat x_{o,1}(\theta_0)$ and $\hat x_{o,1}(\theta)$ are independent of 
$\epsilon_1(\theta_0)$. Hence by Lemma~\ref{lem:informative} we have 
for $\theta$ that minimizes $L(\theta)$ that $\mathbf W\left(e^{i\omega};\theta\right)=\mathbf W\left(e^{i\omega};\theta_0\right)$ for almost  
all $\omega$. We then obtain the following
theorem from Lemma~\ref{lem:predtoinnovation} and Assumption~A5.
\begin{theorem}
Assume that assumptions A1--A5 hold and that 
$\mathbf C^i(z)\neq0$ for  $|z|\geq 1$ and for all $i$ and uniformly in $\theta\in\Theta$.
Then
$$(\hat a_N,\hat b_N)\rightarrow (a_0,b_0),\quad N\rightarrow\infty\; 
\mathrm{w.p.\,\, 1},$$
i.e. $(a,b)$ are estimated consistently for either of the ML formulations.\footnote{
In practice we only need to assume that $\mathbf C^i(z)\neq0$ for  $|z|=1$, uniformly in $\theta\in\Theta$. What will happen is that we will 
identify $\mathbf C$ polynomials that have zeros mirrored in the unit circle if the polynomials that
generated the data had zeros outside the unit circle. Another way to phrase it is that we cannot see
in the observed data if a zero is inside our outside the unit circle, and hence we may just as well
assume it to be inside.}
\end{theorem}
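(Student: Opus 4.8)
The plan is to reduce the probabilistic statement to a deterministic identifiability argument. First I would invoke the uniform convergence $\sup_{\theta\in\Theta}|L_N(\theta)-L(\theta)|\to 0$ w.p.\ 1, established in Lemma~\ref{lemma:L-converges-uniformly} (and, for the time-invariant formulation, in Lemma~\ref{lem:conv-quadrati-form}). Since $\Theta$ is compact and $\hat\theta_N$ minimizes $L_N$, the standard argument of \cite[Theorem 8.2]{lju87} gives that $\hat\theta_N$ converges w.p.\ 1 to the set $\Theta_{\min}$ of minimizers of $L$. It therefore suffices to prove the deterministic claim that $(a,b)=(a_0,b_0)$ for every $\theta\in\Theta_{\min}$.

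Next I would show $\theta_0\in\Theta_{\min}$ and pin down the minimizers. Writing $L(\theta)=\tfrac12\Tr\big(Q\,\Sigma_\epsilon(\theta)^{-1}\big)+\tfrac12\log\det\Sigma_\epsilon(\theta)+\text{const}$ with $Q=\E{\epsilon_1(\theta)\epsilon_1(\theta)^T}$, I would treat $\Sigma_\epsilon(\theta)^{-1}$ as a free positive definite matrix; strict convexity makes $\Sigma_\epsilon(\theta)=Q$ the unconstrained minimizer and yields $L(\theta)\ge \tfrac12\log\det Q+\text{const}$. The mean-square optimality of the Kalman predictor, \eqref{eqn:MSE}, gives $Q\succeq\Sigma_\epsilon(\theta_0)$ for all $\theta$, and since at $\theta_0$ one has $Q=\Sigma_\epsilon(\theta_0)$, monotonicity of $\log\det$ on the positive definite cone shows $L(\theta)\ge L(\theta_0)$. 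Hence $\theta_0\in\Theta_{\min}$. Equality forces $\log\det Q=\log\det\Sigma_\epsilon(\theta_0)$ together with $Q\succeq\Sigma_\epsilon(\theta_0)$, and two positive definite matrices ordered this way with equal determinant must coincide; thus every $\theta\in\Theta_{\min}$ satisfies $\E{\epsilon_1(\theta)\epsilon_1(\theta)^T}=\Sigma_\epsilon(\theta_0)$.

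I would then translate this covariance identity into network identifiability. Using $\epsilon_1(\theta)-\epsilon_1(\theta_0)=\hat x_{o,1}(\theta_0)-\hat x_{o,1}(\theta)$ and the fact that both predictors are causal functions of data independent of the innovation $\epsilon_1(\theta_0)$, the cross terms in the expansion of $\E{\epsilon_1(\theta)\epsilon_1(\theta)^T}-\E{\epsilon_1(\theta_0)\epsilon_1(\theta_0)^T}$ vanish, leaving $\E{(\hat x_{o,1}(\theta)-\hat x_{o,1}(\theta_0))(\hat x_{o,1}(\theta)-\hat x_{o,1}(\theta_0))^T}=0$, which is exactly \eqref{eqn:informative}. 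By Lemma~\ref{lem:informative} (A2--A4, with A3 supplied by the hypothesis that $\mathbf C^i(z)\neq 0$ on $|z|\ge 1$ uniformly in $\theta$) this forces $\mathbf W(e^{i\omega};\theta)=\mathbf W(e^{i\omega};\theta_0)$ for almost all $\omega$. Finally, Lemma~\ref{lem:predtoinnovation} recovers the innovation-form transfer functions and in particular $\mathbf G_c(\cdot;\theta)=\mathbf G_c(\cdot;\theta_0)$; Assumption A5 makes the solution $\mathbf G$ of \eqref{eqn:generic-nwi} unique, so $\mathbf G(\cdot;\theta)=\mathbf G(\cdot;\theta_0)$, and Remark~\ref{rem:unique-ab} with A1 gives $(a,b)=(a_0,b_0)$. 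Since $(a,b)$ is thus constant on $\Theta_{\min}$, convergence of $\hat\theta_N$ to $\Theta_{\min}$ delivers $(\hat a_N,\hat b_N)\to(a_0,b_0)$ w.p.\ 1 for either formulation.

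The main obstacle is not this deterministic chain but the uniform convergence that underlies the reduction in the first step: one must show that the time-varying Riccati recursion converges to its stationary solution uniformly over the compact set $\Theta$ (this is where the minimum-phase hypothesis on $\mathbf C^i$ uniformly in $\theta$ is essential, via Lemma~\ref{lem:DARE} and Lemma~\ref{lem:differentiability}) and couple this with ergodicity and stationarity of the innovations to pass from sample averages to expectations uniformly in $\theta$. Within the present theorem, the delicate point is the independence argument that removes the cross terms, since it hinges on the predictors depending only on past observations and references, independent of the current innovation $\epsilon_1(\theta_0)$.
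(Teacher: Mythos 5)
Your proposal is correct and follows essentially the same route as the paper's own proof: uniform convergence of $L_N$ to $L$ (Lemma~\ref{lemma:L-converges-uniformly}, respectively Lemma~\ref{lem:conv-quadrati-form} for the time-invariant formulation) combined with \cite[Theorem 8.2]{lju87} to reduce to the set $\Theta_{\min}$, the strict-convexity and Kalman-optimality argument via \eqref{eqn:MSE} to show every minimizer satisfies the informativity condition \eqref{eqn:informative}, and then Lemma~\ref{lem:informative}, Lemma~\ref{lem:predtoinnovation}, Assumption A5 and Remark~\ref{rem:unique-ab} to conclude $(a,b)=(a_0,b_0)$ on $\Theta_{\min}$. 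There are no gaps; the differences are purely cosmetic (constant factors in $L(\theta)$ and the exact bookkeeping of the cross terms).
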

\begin{remark}
If we assume that the Hessian of $L(\theta_0)$ is positive definite, then it follows from 
Theorem 9.1 and (9.31) in \cite{lju87} that $\sqrt{N}\left(
(\hat a_N,\hat b_N)-(a_0,b_0)\right)$ is asymptotically normal with zero mean, and
with a covariance that equals the limit of the Cramér-Rao lower bound, i.e.
the estimator is efficient.\footnote{The proof of this is immediate by just
considering the subvector $(a,b)$ of $\theta$ in the proofs in \cite{lju87}.} 
\end{remark}
\begin{remark}
Notice that we are not assuming that $e_k$ is Gaussian in order to prove consistency. 
It is enough that it is a stationary stochastic process with zero mean and bounded second
moment. 
\end{remark}
\begin{remark}\label{rem:innovation-form}
In case we instead of considering $\theta=(a,b,c,\lambda)$ as 
parameters consider 
$\bar \theta= (a,b,K,\Sigma_\epsilon)$ to be the parameters we have that 
$\bar \theta$ sometimes can be estimated consistently. Notice that this does not mean that
we use $\bar\theta$ as optimization variable. 

Since the predictor is estimated consistently we can from the predictions
and observations compute innovations, and from a realization of these
we may estimate their covariance, which is $\Sigma_\epsilon$. This
can always be done consistently. Let us now look at the special cases in Remark~\ref{rem:all-and-one}.
When $T_o=\BBM I&0\EBM$
it holds that $\Sigma_\epsilon=\Sigma_e$, and hence it is possible to estimate $\lambda$ 
consistently. 

By Lemma~\ref{lem:predtoinnovation} the observer form transfer function in \eqref{eqn:Go} is 
estimated consistently. Since we know that $(a,b)$ are estimated consistently it only remains to investigate
if $K$ is also estimated consistently. The Markov parameters for the 
transfer function are also estimated consistently and we denote them with $M_k$. 
They satisfy $H_oF_c^{k-1}K=M_k$ for $k\geq 1$. Hence we 
obtain the equation
$$\BBM H_o\\H_oF_c\\ \vdots\\H_oF_c^{n-1}\EBM K=\BBM 
M_1\\M_2\\\vdots\\M_n\EBM$$
for $K$. If the observability matrix has full column rank, then we can solve
for $K$, and then $K$ is also estimated consistently. This is however not true in general. 

When the conditions of 
Theorem~\ref{thm:trivial-sol} 
are satisfied it holds that 
$K$ is 
the product of $G_e=C+B\Upsilon$ and an invertible matrix.
Hence if both $b$ and $K$ are
estimated consistently,  then $c$ is also estimated consistently.
\end{remark}
\section{Discussion}\label{sec:discussion}
As already pointed out, the optimization problems for solving the ML
problem are not very tractable, since they in general involve 
{\it complicating constraints} either
in terms of an algebraic Riccati equation or an algebraic Lyapunov equation. 
It is tempting to just remove the algebraic Riccati equation in 
\eqref{opt:ML-stationary} as a constraint. This will result in a 
relaxation of the optimization problem. This can potentially result in 
overfitting, since there are more free parameters to optimize than what
there should be.  There are however cases for when the Riccati equation
is not needed. One case is discussed in Theorem~\ref{thm:trivial-sol-OE}.
The assumptions of that theorem hold for the network in 
Figure~\ref{fig:dyn-network} for the  case when we observe
$x_{o,k}=(u^1_k,u^3_k)$. 
The predictor transfer functions in \eqref{eqn:predictor-transfer-OE}
are given by
\BEQ\label{eqn:pred-example}
\begin{aligned}
\hat{\mathbf{u}}^1&=\mathbf r^1 +\mathbf G^2\mathbf r^2+\mathbf G^3\mathbf u^3\\
\hat{\mathbf{u}}^3&=\mathbf G^1\mathbf u^1+\mathbf r^3.
\end{aligned}
\EEQ
Since $\Sigma=0$, it holds that $\Sigma_\epsilon=R=\diag(\lambda_2+\lambda_3,\lambda_1)$, and the ML problem is 
therefore equivalent to minimizing
$$\frac{1}{2}\sum_{k=1}^N\left(\frac{\left(u^3_{k}-\hat u^3_k\right)^2}{\lambda_1}+
\frac{\left(u^1_k-\hat u^1_k\right)^2}{\lambda_2+\lambda_3}
+ \ln(\lambda_1)+\ln(\lambda_2\!+\!\lambda_3)\!\right) .$$
Since the terms in the objective function depend on a disjoint
set of parameters, the minimization problem is separable.
This results in a PEM where the two functions 
$$\frac{1}{2}\sum_{k=1}^N\left(u^3_{k}-\hat u^3_k\right)^2,\quad
\frac{1}{2}\sum_{k=1}^N\left(u^1_k-\hat u^1_k\right)^2$$
are minimized, and this
is what is suggested in e.g. \cite{VANDENHOF20132994,Dankers_predictor_input}.\footnote{The
reason why we can leave out the $\lambda$-dependence is that
we may first minimize with respect to the other variables for
an arbitrary $\lambda$, and then we can go back and minimize 
with respect to $\lambda$.}
Notice that only the sum $\lambda_2+\lambda_3$ can be estimated consistently.

For the network and observed signals considered in this example
we obtain a 
predictor for which  each of the transfer functions
$\mathbf{G}^i$ appear linearly. A linear PEM approach is 
strongly dependent on this.  However, it is not always
possible to obtain this linearity. For the case when only $\mathbf{u}^3$
is measured, we need to substitute the equation for $\mathbf{u}^1$
resulting in products of transfer functions, c.f. \eqref{eq:example-closed-loop}. It is also easy to verify that \eqref{eqn:lin-condition} does not
hold. It should be stressed
that we have generic network identifiability for this
case, and hence our proposed ML approach can deal with this 
situation, for which a PEM approach is not applicable if one wants to
use a predictor that is linear in $\mathbf G^i$. 

Let us now consider the simple example of a single feedback loop 
around one single transfer function. We have the equations
\begin{align*}
\mathbf y=\mathbf G\mathbf u+\mathbf e,\quad
\mathbf u=\mathbf y+\mathbf r.
\end{align*}
We observe $(\mathbf y,\mathbf u)$. We notice that there is no noise
present in the second equation. This is an example of what in \cite{WEERTS2018256} is called 
the "rank-reduced noise" case. 
We have a state space
description of the closed loop system which is
\begin{align*}
\xi_{k+1}&=(F+BH)\xi_k+Br_r+(B+C)e_k\\
\BBM y_k\\u_k\EBM&=\BBM 1\\1\EBM(H\xi_k+e_k)+\BBM 0\\1\EBM r_k.
\end{align*}
For this example, Assumption A2 is not fulfilled. It holds that $y_k$ and
$u_k$ can be expressed in one another via the equation $u_k=y_k+r_k$. 
According to Remark~\ref{rem:row-rank}, one of the observations
can be removed. Hence, if we keep the observation of $\mathbf y$, we obtain
the equation
$$\mathbf y=\mathbf G(\mathbf y+\mathbf r)+\mathbf e.$$
Since we always have that the innovation covariance $\Sigma_\epsilon$
is positive definite by Lemma~\ref{lem:DARE} when fulfilling Assumption 2, we can by  Remark~\ref{rem:row-rank} always transform ourselves to a non rank-reduced noise case. 

Regarding the optimization problem in \eqref{opt:ML-timevaying}, the 
algebraic Lyapunov equation is not easy to remove either. One possibility
is to remove it and let $\Sigma_1$ be a large matrix, e.g. a large number
times the identity matrix. The time-varying predictor will for 
large $k$ converge to the correct predictor, which agrees with the 
time-invariant predictor resulting from the algebraic Riccati equation. 
In this way we obtain an approximation to the ML problem. Notice that
the Riccati recursion for this ML formulation is a simple constraint
to deal with, since it can be solved recursively once $\Sigma_1$ is 
known. 
\section{Equivalent Optimization Problem}\label{sec:equiv}
Neither of the suggested approaches to remove the complicating constraints above are satisfactory, since
they will in general not solve the ML problem, i.e. they will result
in relaxations of the ML problem.
We will now show how we can formulate an optimization problem 
that is equivalent to the problem in \eqref{opt:ML-stationary} without
using a predictor and hence without the need for an algebraic 
Riccati equation. Hence we do not need to solve a relaxed problem---we 
solve the true ML problem. This was first presented in \cite{10590797} and
summarized here for completeness. 

We will assume that $y_k^i$, $u_k^i$, and $e_k^i$ are zero for all $k\leq 0$.\footnote{The generalization
to nonzero initial values is considered in e.g. \cite[Section 12.3]{han+and23}.} To simplify notation, we define 
$u^i=(u_1^i,\ldots,u_N^i)$, $y^i=(y_1^i,\ldots,y_N^i)$, and $e^i=(e_1^i,\ldots,e_N^i)$.
Moreover, we define  $y=(y^1,\ldots,y^M)$, $u=(u^1,u^2,\ldots,u^M)$, 
$e=(e^1,e^2,\ldots,e^M)$, $r=(r^1,r^2,\ldots,r^M)$, and lower-triangular Toeplitz matrices $T_{a^i}\in\reals^{N\times N}$, $T_{b^i}\in\reals^{N\times N}$ and $T_{c^i}\in\reals^{N\times N}$ whose first columns are
$$ \BBM 1\\a^i\\0\EBM,\quad\BBM 0 \\ b^i\\0\EBM,\quad\BBM 1\\c^i\\0\EBM,$$
respectively, c.f., \cite{Wallin02112014}. This allows us to express the ARMAX models in \eqref{eqn:ARMAX-model} as
\begin{align}
    \label{eqn:ARMAX-T}
    T_{a^i} y^i=T_{b^i}u^i + T_{c^i} e^i.
\end{align}
We also define two block-diagonal matrices,
\begin{align*} 
    T_y &=  \blkdiag (T_{y^1},\ldots,T_{y^M}),\\
    T_u &= \blkdiag (T_{u^1},\ldots,T_{u^M}),
\end{align*}
where $T_{y^i}=T_{c^i}^{-1}T_{a^i}$ and $T_{u^i}=T_{c^i}^{-1}T_{b^i}$.
This allows us to express the dynamic network model as an instance of 
\begin{align}\label{non_sing_gauss_distrib_DN}
    Ax + b = \begin{bmatrix}
        e \\ 0
    \end{bmatrix},
\end{align}
with
\begin{align} 
A&=\BBM A_1\\A_2\EBM = \BBM T_y & -T_u \\ -\Upsilon\otimes I& I\EBM(P\otimes I) \label{eq:A_matrix_def}\\
x&=(P\otimes I)^T\BBM y\\u\EBM ,\quad
b=\BBM b_1\\b_2\EBM= -\BBM 0\\ \Omega\otimes I\EBM r \notag,
\end{align}
where $A_{1}, A_{2} \in \reals^{MN\times 2MN}$, and where 
$P \in \reals^{2M \times 2M}$ is a permutation matrix that is defined such that the observed parts of $y$ and $u$ correspond to the leading entries of $x$. The vector $b \in \reals^{2MN}$ is partitioned conformable to $A$. Notice that the pdf for $x$ is singular.

The matrix $A$ is square, and using the fact that the matrices $T_{a^i}$ and $T_{c^i}$ are non-singular for all $i$, we see that $T_y$ is non-singular. Thus, $A$ is non-singular if the Schur complement $T_y - T_u(\Upsilon\otimes I)$ is full rank. This is true if the closed loop system is well-posed. 
Notice that $A_1$ depends on the unknown model parameters $\theta$,
but $A_2$ and $b$ do not depend on $\theta$. We will see that this allows us to rewrite the pdf for $x$ 
as a nonsingular pdf and an algebraic constraint. 

Consider \eqref{non_sing_gauss_distrib_DN}, where $x$ is partitioned as 
$x = (x_o,x_m)\in  \reals^{n_o + n_m}$,
corresponding to observed and missing data, respectively, $A(\theta) \in \reals^{2MN \times (n_o + n_m)}$ and $b \in \reals^{2MN}$ are partitioned conformably as
\begin{align*}
    A(\theta) = \begin{bmatrix}
    A_1(\theta) \\ A_2
\end{bmatrix} = \begin{bmatrix}
    A_{1o}(\theta)  & A_{1m}(\theta)\\  A_{2o}  & A_{2m}
\end{bmatrix}, \
b = \begin{bmatrix}
    b_1 \\  b_2
\end{bmatrix}.
\end{align*}
Given a Singular Value Decomposition (SVD) of $A_{2m}$, i.e., 
\begin{align*}
    A_{2m} = \begin{bmatrix}
        U_1 & U_2
    \end{bmatrix}\begin{bmatrix}
    \Sigma_1 & 0 \\ 0 & 0    
    \end{bmatrix}\begin{bmatrix}
        V_1 & V_2
    \end{bmatrix}^T,
\end{align*}
we can rewrite (\ref{non_sing_gauss_distrib_DN}) as
\begin{align*}
    \begin{bmatrix}
        A_{1o} & \Bar{A}_{1m1} & \Bar{A}_{1m2}\\ \Bar{A}_{2o1} & \Sigma_1 & 0 \\ \Bar{A}_{2o2} & 0 & 0
    \end{bmatrix}\begin{bmatrix}
        x_o \\ \Bar{x}_{m1} \\ \Bar{x}_{m2} 
    \end{bmatrix} + \begin{bmatrix}
        b_1 \\  \Bar{b}_{21} \\ \Bar{b}_{22} 
    \end{bmatrix} = \begin{bmatrix}
        e \\ 0 \\0
    \end{bmatrix}
\end{align*}
where
\begin{align}
\label{A2o2bar}
    \Bar{A}_{2o} &= \begin{bmatrix}
        \Bar{A}_{2o1} \\ \Bar{A}_{2o2} 
    \end{bmatrix} = \begin{bmatrix}
        U_1^T \\ U_2^T
    \end{bmatrix}A_{2o}, \\ 
    \label{A1m2bar}
    \Bar{A}_{1m} &= \begin{bmatrix}
        \Bar{A}_{1m1} & \Bar{A}_{1m2} 
    \end{bmatrix} = A_{1m}\begin{bmatrix}
        V_1 & V_2
    \end{bmatrix}
\end{align}
and
\begin{align*}
    \Bar{x}_{m} &= \begin{bmatrix}
        \Bar{x}_{m1} \\ \Bar{x}_{m2} 
    \end{bmatrix} = \begin{bmatrix}
        V_1^T \\ V_2^T
    \end{bmatrix}x_{m}, &
    \Bar{b}_{2} &= \begin{bmatrix}
        \Bar{b}_{21} \\ \Bar{b}_{22} 
    \end{bmatrix} = \begin{bmatrix}
        U_1^T \\ U_2^T
    \end{bmatrix}b_{2}.
\end{align*}
Using  $\Sigma_1$ as a pivot, we can rewrite the system as
\begin{align}
\label{eqn:manipuation1}
    \begin{bmatrix}
        \Bar{A}_{1o} & 0 & \Bar{A}_{1m2}\\ \Bar{A}_{2o1} & \Sigma_1 & 0 \\ \Bar{A}_{2o2} & 0 & 0
    \end{bmatrix}\begin{bmatrix}
        x_o \\ \Bar{x}_{m1} \\ \Bar{x}_{m2} 
    \end{bmatrix} + \begin{bmatrix}
        \Bar{b}_1 \\  \Bar{b}_{21} \\ \Bar{b}_{22} 
    \end{bmatrix} = \begin{bmatrix}
        e \\ 0 \\0
    \end{bmatrix},
\end{align}
where $\Bar{A}_{1o} = A_{1o} - \Bar{A}_{1m1} \Sigma_1^{-1}\Bar{A}_{2o1}$ and $\Bar{b}_1 = b_1 - \Bar{A}_{1m1} \Sigma_1^{-1}\Bar{b}_{21}$. The fact that $A$ is invertible implies that $\Bar{A}_{2o2}$ has full row-rank, and hence there exists an orthogonal matrix $W = \begin{bmatrix} W_1 & W_2 \end{bmatrix}$ such that $\Bar{A}_{2o2}W = \begin{bmatrix}
    \Tilde{A}_{2o21} & 0
\end{bmatrix}$ with $\Tilde{A}_{2o21} = \Bar{A}_{2o2}W_1$ nonsingular. The matrix $W$ can be obtained by means of an SVD or an LQ decomposition of $\Bar{A}_{2o2}$. We then define
\begin{align}
\label{A1o2tilde}
    \begin{bmatrix}
        \Tilde{A}_{1o} \\
        \Tilde{A}_{2o1}
    \end{bmatrix} &= \begin{bmatrix}
        \Tilde{A}_{1o1} & \Tilde{A}_{1o2} \\
        \Tilde{A}_{2o11} & \Tilde{A}_{2o12}
    \end{bmatrix} = \begin{bmatrix}
        \Bar{A}_{1o} \\
        \Bar{A}_{2o1}
    \end{bmatrix}W, \\ \notag
    \Bar{x}_o &= \begin{bmatrix}
        \Bar{x}_{o1} \\
        \Bar{x}_{o2}
    \end{bmatrix} = W^Tx_o,
\end{align}
which allows us to rewrite  \eqref{eqn:manipuation1} as
\begin{align*}
    \begin{bmatrix}
        \Tilde{A}_{1o1} & \Tilde{A}_{1o2}  & 0 & \Bar{A}_{1m2}\\ 
        \Tilde{A}_{2o11} & \Tilde{A}_{2o12} & \Sigma_1 & 0 \\
        \Tilde{A}_{2o21} & 0 & 0 & 0
    \end{bmatrix}\begin{bmatrix}
        \Bar{x}_{o1} \\
        \Bar{x}_{o2} \\
        \Bar{x}_{m1} \\ 
        \Bar{x}_{m2} 
    \end{bmatrix} + \begin{bmatrix}
        \Bar{b}_1 \\  \Bar{b}_{21} \\ \Bar{b}_{22} 
    \end{bmatrix} = \begin{bmatrix}
        e \\ 0 \\0
    \end{bmatrix}.
\end{align*}

\vspace*{-7px}
\noindent Using $\Tilde{A}_{2o21}$ as a pivot, we obtain the system
\begin{align*}
    \begin{bmatrix}
        0 & \Tilde{A}_{1o2}  & 0 & \Bar{A}_{1m2}\\ 
        \Tilde{A}_{2o11} & \Tilde{A}_{2o12} & \Sigma_1 & 0 \\
        \Tilde{A}_{2o21} & 0 & 0 & 0
    \end{bmatrix}\begin{bmatrix}
        \Bar{x}_{o1} \\
        \Bar{x}_{o2} \\
        \Bar{x}_{m1} \\ 
        \Bar{x}_{m2} 
    \end{bmatrix} + \begin{bmatrix}
        \Tilde{b}_1 \\  \Bar{b}_{21} \\ \Bar{b}_{22} 
    \end{bmatrix} = \begin{bmatrix}
        e \\ 0 \\0
    \end{bmatrix},
\end{align*}
where $\Tilde{b}_1 = \Bar{b}_1 - \Tilde{A}_{1o1}\Tilde{A}_{2o21}^{-1} \Bar{b}_{22}$. Finally, we rewrite this system as two systems,
\begin{align}
\label{equiv_non_sing}
    \begin{bmatrix}
        \Tilde{A}_{1o2} & \Bar{A}_{1m2}
    \end{bmatrix} \begin{bmatrix}
        \Bar{x}_{o2} \\
        \Bar{x}_{m2} 
    \end{bmatrix} + \Tilde{b}_1 = e, \\
    \begin{bmatrix}
        \Tilde{A}_{2o11} & \Sigma_1\\
        \Tilde{A}_{2o21} & 0
    \end{bmatrix}\begin{bmatrix}
        \Bar{x}_{o1} \\
        \Bar{x}_{m1}
    \end{bmatrix}+\begin{bmatrix}
        \Bar{b}_{21} + \Tilde{A}_{2o12}\Bar{x}_{o2}\\
        \Bar{b}_{22}
    \end{bmatrix} = \begin{bmatrix}
        0 \\ 0
    \end{bmatrix}.
\end{align}
From this we realize that only $\bar x_{o2}$ and $\bar x_{m2}$ are directly related to $e$. The matrix $\BBM \tilde A_{1o2}&\bar A_{1m2}\EBM$ has full row-rank, so if it is a square matrix, then it is also invertible, and otherwise we can make use of column compression to further reduce the number of variables. 
The equation in \eqref{equiv_non_sing} describes a nonsingular pdf for $(\bar x_{o2},\bar x_{m2})$,
and it is of the form for which it is possible to solve the ML
problem based on only measuring $\bar x_{o2}$ as described in \cite{HANSSON20121955}.\footnote{In this
reference all $\lambda^i$ are equal, but it is a straightforward extension to consider the case when they are not equal.} The fact that the observations in
$\bar x_{o1}$ are not needed is related to the observations not needed in 
Remark~\ref{rem:row-rank}. 
\section{Numerical Results}
Our implementation is based on the Python library JAX \cite{jax}, which makes use of automatic differentiation to compute the partial derivatives of the cost function. 
The optimization problem associated with the ML approach is non-convex. To ensure convergence to a useful solution, an appropriate initialization strategy is therefore required.
We first solve an ML problem where we consider $\mathbf{C}^i(z) = 1$, i.e. 
we approximate
each system with an ARX model and we also constrain all $\lambda^i$
to be the same. Then, we again solve an ML problem, but this
time for an ARMAX model. We still constrain all $\lambda^i$ to be the same, and
we initialize the solver with the value of $(a,b)$ we obtained in the previous
step. This will provide us with initial values $(a,b,c,\lambda)$ for a third 
final step, where we do not constrain $\lambda^i$. 
We solve all the optimization problems with a trust-region method implemented via SciPy’s optimize module, see \cite{SciPy}. 

We will now illustrate some properties of the proposed method using a numerical example based on the network in Fig.~\ref{fig:dyn-network}. 
We generate 100 random models $\mathbf M(z;\theta)$ using the Matlab command
{\tt drss}. For each random model there are three sub-models with two inputs, one for $\mathbf u^i$ and
one for $\mathbf e^i$, and one output for $\mathbf y^i$. We modify the
transfer functions $\mathbf H^i(z;\theta)$ such that they have static gain 
equal to one. The order of all transfer functions is two. Models that are unstable are discarded. Similarly, any model
for which the closed loop system $\mathbf M_c(z;\theta)$ has poles 
which are further away from the origin than 0.9 is
also discarded. We generate different values for $\lambda_i$ for each model
that are drawn from a uniform distribution on the interval $[0.1\bar\lambda,
\bar\lambda]$, where $\bar\lambda=0.1$. For each model 
we draw $e_k^i$ from zero mean Gaussian distributions with variance 
$\lambda_i$ for $1\leq k\leq N$. The value $N$ is different for different
experiments. The reference values $r_k^i$ are equal to $\pm 1$ with 
equal probability. All generated signals are independent of one another and 
in time. The average SNR is approximately given by 
$1/(0.55\bar\lambda)\approx 25{\rm dB}$. We then generate 
observed data using
\begin{align}\label{eq:gen_data}
    {\mathbf{x}}_o=\mathbf G_c(\theta_0)\mathbf r+\mathbf H_c(\theta_0)
\mathbf e,
\end{align}
where we with subindex zero emphasize that this is a model from which 
we collect data. We use zero initial values. We will consider the 
cases when either $(\mathbf u^1,\mathbf u^3)$ or $\mathbf u^3$ are
observed.

To validate the estimated models $\hat\theta$, the covariance
\begin{align*}
    \text{Cov}\left[\hat\theta \right] &= \E{(\hat \theta -  \Expect \hat\theta)(\hat \theta -  \Expect\hat \theta)^T},
\end{align*}
and the bias,
\begin{align*}
    \text{bias}\left[\hat\theta\right] =  \E{ \hat\theta - \theta_0},
\end{align*}
of the estimates are computed numerically using averages, by performing 100 estimates considering different realizations of $e$ with the same values for $r$. We also compute the Mean Squared Error (MSE)
given by 
\begin{align*}
    \text{MSE}\left[\hat\theta\right]&=
    \E{(\hat \theta -  \Expect \hat\theta)^T(\hat \theta -  \Expect\hat \theta)}\\
    &=\Tr\left(\text{Cov}\left[\hat\theta \right] \right)+
\left\|\text{bias}\left[\hat\theta\right]\right\|_2^2.
\end{align*}
For computing fit, we generate new data according to \eqref{eq:gen_data}, where the noise and reference values are different from the ones  we used
to obtain the model estimate. 
The simulation data for a model $\theta$ is
defined as 
$$\hat{\mathbf{x}}_{o,s}=\mathbf G_c(\theta)\mathbf r.$$
Notice that no noise realization is needed to generate the simulation
data. The same reference value is used as the one used for 
generating the validation data.
The prediction data for the model $\theta$ is defined as 
$$\hat{\mathbf{x}}_{o,p}=\mathbf W_r(\theta)\mathbf r+
\mathbf W_o(\theta){\mathbf{x}}_o.$$
Here there will be noise present, since the latter term contains the 
noise from the validation data. 
The fit is defined as 
$${\rm fit}(\hat x_{o})=1-\frac{\|\hat x_o^i-x_o^i\|_2}
{\|x_o^i-\mathbf{1} \bar x_o^i \|_2},$$
where $\bar x_o^i$ is the mean of $x_o^i$, and where superscript $i$
refers to the $i$th observed output. This definition can be used
for either predicted or simulated data. 

For the case when $(\mathbf u^1,\mathbf u^3)$ is observed, we can compare the ML
method with the PEM method, for which we make use of the 
MISO model
\begin{align}
    \mathbf u^1 - \mathbf r^1 &= \mathbf G^3 \mathbf u^3 + \mathbf G^2\mathbf r^2 + \bar{\mathbf  H} \bar{\mathbf e} \label{eq:dynamic_eq_u1_u3_1} \\
    \mathbf u^3 - \mathbf r^3 &=\mathbf G^1\mathbf u^1+\mathbf H^1\mathbf e^1. \label{eq:dynamic_eq_u1_u3_2}
\end{align}
To estimate the parameters of the model with PEM we use the System Identification Toolbox, \cite{ljung1995system}, in MATLAB. We use the function {\tt armax} to identify $(\mathbf G^1, \mathbf H^1)$ from \eqref{eq:dynamic_eq_u1_u3_2} and the function {\tt bj} to identify $(\mathbf G^2, \mathbf G^3, \bar{\mathbf  H})$ from \eqref{eq:dynamic_eq_u1_u3_1}. We consider  $\bar{\mathbf H}$ with orders one to four, and we choose the order that obtains the best fit value for prediction on the validation data. Notice that order four matches the order obtained from the spectral factorization of $\mathbf H^2\mathbf e^2 + \mathbf H^3\mathbf e^3$, c.f. Section~\ref{sec:discussion}.
We cannot formulate $\mathbf W_o(\theta)$
for a model obtained from the PEM approach. This is because we do not
obtain estimates of $(c,\lambda)$. Instead we use the MISO predictors obtained from \eqref{eq:dynamic_eq_u1_u3_1} and \eqref{eq:dynamic_eq_u1_u3_2},
\begin{align*}
    \hat{\mathbf u}_1 &= \bar{\mathbf  H}^{-1}(\mathbf r^1 + \mathbf G^3 \mathbf u^3 + \mathbf G^2\mathbf r^2) + (1-\bar{\mathbf  H}^{-1})\mathbf u_1 \\ 
    \hat{\mathbf u}_3 &= \mathbf  H_1^{-1}(\mathbf r^3 + \mathbf G^1 \mathbf u^1) + (1-\mathbf  H_1^{-1})\mathbf u_3.
\end{align*}

For both PEM and the proposed MLE we set the tolerance in the optimization algorithms to be $10^{-5}$.
Table~\ref{tab:table_PEM_vs_MLE} summarizes the results for 100 systems. We compare PEM with the proposed ML method, reporting (i) the number of cases that converged to a solution with positive fit values, (ii) the mean $\pm$ standard deviation of the fit values, calculated for the cases that converged, and (iii) the average time to convergence. Notice that it is not possible to directly identify all three transfer functions with PEM when measuring only $u^3$, so the table includes only the results for the ML method for that case. We observe that the fits are about the 
same for both methods, but that PEM is able to obtain the solution faster. 
The order of $\bar{\mathbf H}$ for $N=500$ is  1, 2, 3, and 4 for 
31, 39, 20, and 10 of the 100 models, respectively. For $N=50$ the numbers are 
58, 19, 10, and 13, respectively. This shows that PEM especially for few data
needs to choose a low order of $\bar{\mathbf H}$ in order not to overfit. 
The computational time is about 200 times higher for our method, and we
fail to converge in about 6--8\% of the cases. Failure to converge is 
probably due to the non-convexity of the optimization problem. Computational 
speed could probably be improved by using analytical gradient computations. 
Table~\ref{tab:table_PEM_vs_MLE_system_1} presents the results for fit, covariance, bias and MSE for the first of the 100 randomly generated systems. The fit values are computed from the model obtained from 
the average of $\hat \theta$ over the 100 
noise realizations. Regarding fit, the PEM and the ML method
have the same performance. Regarding
MSE, the ML method 
has slightly smaller values as compared to the PEM. The only case
when the PEM outperforms the ML method is with respect to covariance for the case of $N=50$. 
This indicates that PEM provides a biased estimate for few data, probably 
due to overfitting of the noise model. 

There is a small degradation in fit with respect
to predicted data when only one signal is measured as compared to the case when 
two signals are measured. There
is also a degradation with respect MSE, covariance and bias of the estimates. 
However, the overall conclusion is that the ML approach is able to successfully obtain a good model
by only measuring $\hat u^3$.\footnote{Our code and data are available on GitHub at: \url{https://github.com/Jvictormata/mle_dyn_net}}
\begin{table*}[htbp]
    \centering
  \hspace*{\dimexpr -\oddsidemargin}
    \caption{Comparison of the PEM and ML for 100 random systems: mean fit values for both simulation and prediction data.}
    \label{tab:table_PEM_vs_MLE}
   \setlength\tabcolsep{2pt}
   \scalebox{0.85}{
  \begin{tabular}{|c|c|c|c|c|c|c|c|c|c|c|c|c|c|c|c|c|c|c|c|c|c|c|c|}
    \hline
     \multirow{3}{*}{N} & \multirow{3}{*}{\shortstack{Observed\\ signals}} &
      \multicolumn{6}{|c|}{PEM} &
      \multicolumn{6}{|c|}{ML} \\
    & & \multicolumn{4}{|c|}{$100\times (\text{avg. Fit}\pm\text{std})$} & \multirow{2}{*}{conv.} & \multirow{2}{*}{\shortstack{avg.\\ time (s)}} & \multicolumn{4}{|c|}{$100\times(\text{avg. Fit}\pm\text{std})$} & \multirow{2}{*}{conv.} & \multirow{2}{*}{\shortstack{avg.\\ time (s)}}\\
    & & $\hat u_{1_\text{pred}}$ & $\hat u_{1_\text{sim}}$ & $\hat u_{3_\text{pred}}$ & $\hat u_{3_\text{sim}}$ & & & $\hat u_{1_\text{pred}}$ & $\hat u_{1_\text{sim}}$ & $\hat u_{3_\text{pred}}$ & $\hat u_{3_\text{sim}}$ & & \\
    \hline
     \multirow{1}{*}{50} & \multirow{2}{*}{$u^3$} & \multirow{1}{*}{--} & \multirow{1}{*}{--}  & \multirow{1}{*}{--} & \multirow{1}{*}{--} & \multirow{1}{*}{--} & \multirow{1}{*}{--} & \multirow{1}{*}{--} & \multirow{1}{*}{--}  & 77.34 $\pm$ 13.05 & 69.32 $\pm$ 18.04 & 94 & 15.75\\
     \multirow{1}{*}{500} &  & \multirow{1}{*}{--} & \multirow{1}{*}{--}  & \multirow{1}{*}{--} & \multirow{1}{*}{--} & \multirow{1}{*}{--} & \multirow{1}{*}{--} & \multirow{1}{*}{--} & \multirow{1}{*}{--} & 81.90 $\pm$ 13.92 & 75.13 $\pm$ 15.45 & 92 & 122.83 \\
     \hline
    \multirow{1}{*}{50} & \multirow{2}{*}{$u^1$ and $u^3$} & \multirow{1}{*}{84.36 $\pm$ 11.46} & \multirow{1}{*}{74.58 $\pm$ 14.97} & \multirow{1}{*}{84.86 $\pm$ 9.96} & \multirow{1}{*}{69.49 $\pm$ 42.48} & \multirow{1}{*}{100} & \multirow{1}{*}{1.18} & 84.89 $\pm$ 10.19 & 74.88 $\pm$ 15.26 & 86.21 $\pm$ 8.48 & 70.64 $\pm$ 33.60 & 100 & 15.88 \\ 
    \multirow{1}{*}{500} &   & \multirow{1}{*}{87.57 $\pm$ 8.12} & \multirow{1}{*}{76.56 $\pm$ 15.52} & \multirow{1}{*}{87.85 $\pm$ 7.52} & \multirow{1}{*}{74.11 $\pm$ 17.35} & \multirow{1}{*}{100} & \multirow{1}{*}{0.64} & 87.09 $\pm$ 8.15 & 76.92 $\pm$ 15.40 & 87.94 $\pm$ 7.40 & 74.91 $\pm$ 16.46 & 94 & 132.06\\
    \hline
  \end{tabular}}
\end{table*}
\begin{table*}[htbp]
    \centering
  \hspace*{\dimexpr -\oddsidemargin}
    \caption{Comparison of the PEM and ML for system 1: fit values for both simulation and prediction data, and trace and max. eigenvalue of the covariance matrix.}
    \label{tab:table_PEM_vs_MLE_system_1}
   \setlength\tabcolsep{3pt}
   \scalebox{0.85}{
  \begin{tabular}{|c|c|c|c|c|c|c|c|c|c|c|c|c|c|c|c|c|c|c|c|c|c|c|c|c|c|c|c|c|c|c|c|c|}
    \hline
     \multirow{3}{*}{N} & \multirow{3}{*}{\shortstack{Observed\\ signals}} &
      \multicolumn{8}{|c|}{PEM} &
      \multicolumn{8}{|c|}{ML} \\
    & & \multicolumn{4}{|c|}{$100\times \text{Fit}$} & \multicolumn{2}{|c|}{Cov. matrix $(a,b)$} & \multirow{2}{*}{$\|\text{bias}(a,b)\|_2^2$} &\multirow{2}{*}{MSE$(a,b)$} & \multicolumn{4}{|c|}{$100\times \text{Fit}$} & \multicolumn{2}{|c|}{Cov. matrix $(a,b)$} & \multirow{2}{*}{$\|\text{bias}(a,b)\|_2^2$}  & \multirow{2}{*}{MSE$(a,b)$} \\
    & & $\hat u_{1_\text{pred}}$ & $\hat u_{1_\text{sim}}$ & $\hat u_{3_\text{pred}}$ & $\hat u_{3_\text{sim}}$ & Trace & Max.\ eigval. & & & $\hat u_{1_\text{pred}}$ & $\hat u_{1_\text{sim}}$ & $\hat u_{3_\text{pred}}$ & $\hat u_{3_\text{sim}}$ & Trace & Max.\ eigval. & & \\
    \hline
     50 & \multirow{2}{*}{$u^3$} & --  & -- & -- & -- & -- & -- & -- & -- & -- & --  & 85.66 & 81.77 & 7.42e-01  & 4.44e-01 & 2.79e-01 & 1.02 \\
     500 & & --  & -- & -- & -- & -- & -- & -- & -- & -- & --  & 88.05 & 84.15 & 3.77e-01  & 3.59e-01 & 2.46e-01 & 6.23e-01 \\
     \hline
    \multirow{1}{*}{50} & \multirow{2}{*}{$u^1$ and $u^3$}  & \multirow{1}{*}{87.37} & \multirow{1}{*}{82.36} & \multirow{1}{*}{87.14} & \multirow{1}{*}{81.48} & \multirow{1}{*}{5.11e-01} & \multirow{1}{*}{4.65e-01} & \multirow{1}{*}{2.96e-01} & \multirow{1}{*}{8.08e-01} & 87.06 & 82.25 & 87.13 & 81.46 & 5.69e-01 & 5.41e-01 & 1.09e-02 & 5.80e-01 \\
    \multirow{1}{*}{500} &  & \multirow{1}{*}{88.12}  & \multirow{1}{*}{81.00} & \multirow{1}{*}{89.16} & \multirow{1}{*}{84.18} & \multirow{1}{*}{5.84e-01} & \multirow{1}{*}{5.82e-01} & \multirow{1}{*}{2.32e-03} & \multirow{1}{*}{5.87e-01} & 88.03 & 81.00 & 89.16 & 84.18 & 4.14e-01 & 4.13e-01 & 1.75e-02 & 4.32e-01 \\
    \hline
  \end{tabular}}
\end{table*}
\section{Conclusions}
In this paper we have investigated ML estimation methods for identification
of dynamical systems in networks. We have considered both time-invariant
and time-varying predictors, and we have proven consistency of the 
ML methods. They are also efficient. Moreover, we have formulated the
resulting optimization problem as an unconstrained problem that can be
solved efficiently.  We see in our numerical investigations that the ML approach matches the linear
PEM when both methods can be applied and, critically, remains effective for cases 
where a linear PEM is unable to identify the transfer functions. The ML method is applicable
whenever we have informativity and generic network identifiability. When a linear PEM method can 
be used it will save computational time. However, notice that it might 
especially for short data sequences result in overfitting and biased estimates. All of our results extend trivially to the case when some of 
the transfer functions are known. 
\section{Appendix}
\subsection{Uniform Convergence of Riccati Recursion}\label{sec:riccati-convergence}
\begin{lemma}\label{lem:differentiability}
Under assumptions A2--A3, it holds that the solution $(\Sigma,K,\Sigma_\epsilon)$ to \eqref{eq:ARE} 
is a continuously differentiable function of $\theta\in\Theta_o$. 
\end{lemma}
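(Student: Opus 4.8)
The plan is to obtain the result from the Implicit Function Theorem applied to the cross-term-free Riccati equation \eqref{eq:ARE-nocross}, viewed as a smooth equation for the symmetric unknown $\Sigma$ with parameter $\theta$. First I would record that all coefficient matrices are smooth functions of $\theta$ on the open set $\Theta_o$: the blocks $F$, $B$, $C$ depend polynomially on $(a,b,c)$, the matrices $H$ and $J_{eo}$ are constant, and $\Sigma_e=\diag(\lambda^1,\ldots,\lambda^M)$ is linear in $\lambda$, so $F_c$, $G_e$, $H_o$, and hence the blocks $Q$, $S$, $R$ are smooth. Assumption~A2 makes $R=J_{eo}\Sigma_e J_{eo}^T$ positive definite, so $R^{-1}$, $\bar F_c=F_c-SR^{-1}H_o$ and $\bar Q=Q-SR^{-1}S^T$ are smooth as well. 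Consequently the map
$$\mathcal R(\Sigma,\theta)=\Sigma-\bar F_c\Sigma\bar F_c^T-\bar Q+\bar F_c\Sigma H_o^T\left(H_o\Sigma H_o^T+R\right)^{-1}H_o\Sigma\bar F_c^T$$
is smooth from $\symm\times\Theta_o$ into $\symm$ near any point where $H_o\Sigma H_o^T+R\succ 0$, and this holds at a solution because $\Sigma_\epsilon\succ 0$ there by Lemma~\ref{lem:DARE}.

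The key step is to compute $D_\Sigma\mathcal R$ at a solution and show it is nonsingular. Differentiating $\mathcal R$ in a symmetric direction $\Delta$ and using $\bar K=\bar F_c\Sigma H_o^T(H_o\Sigma H_o^T+R)^{-1}$ to collect the terms produced by the matrix inverse, I expect to verify the identity
$$D_\Sigma\mathcal R(\Sigma,\theta)[\Delta]=\Delta-\hat F\Delta\hat F^T,\qquad \hat F=\bar F_c-\bar K H_o=F_c-KH_o,$$
so that the partial derivative is precisely the Stein operator of the closed-loop filter matrix $\hat F$. This identification is the main obstacle: the three product-rule contributions from differentiating $(H_o\Sigma H_o^T+R)^{-1}$ must be shown to recombine exactly into the quadratic form $\hat F\Delta\hat F^T$, which is a routine but bookkeeping-heavy calculation.

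Invertibility of the Stein operator then follows directly from the stabilizing property in Lemma~\ref{lem:DARE}. Since $\hat F=F_c-KH_o$ has all eigenvalues $\mu_i$ strictly inside the unit disc, the eigenvalues of $\Delta\mapsto\Delta-\hat F\Delta\hat F^T$ are $1-\mu_i\mu_j$, all nonzero, and the operator preserves and is bijective on $\symm$. The Implicit Function Theorem therefore provides a $C^1$ map $\theta\mapsto\Sigma(\theta)$ solving $\mathcal R=0$ in a neighborhood of any $\theta\in\Theta_o$; since the stabilizing solution is unique and the eigenvalues of $F_c-KH_o$ vary continuously, this local solution remains stabilizing and thus coincides with the solution of the lemma. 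Finally I would transfer the regularity to the other unknowns: $\Sigma_\epsilon=H_o\Sigma H_o^T+R$ is $C^1$ and positive definite, hence locally invertible, and $K=\bar K+SR^{-1}$ with $\bar K=\bar F_c\Sigma H_o^T\Sigma_\epsilon^{-1}$ is a composition of $C^1$ maps, so $(\Sigma,K,\Sigma_\epsilon)$ is continuously differentiable on $\Theta_o$.
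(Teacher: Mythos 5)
Your proposal is correct, but it takes a genuinely different route from the paper: the paper's entire proof of Lemma~\ref{lem:differentiability} is a one-line citation to the proof of Theorem~4 in \cite{HANSSON1999245}, whereas you give a self-contained implicit-function-theorem argument. Your key identity, which you flagged as the main bookkeeping obstacle, does check out: differentiating the cross-term-free Riccati map at a solution, the three product-rule contributions from $(H_o\Sigma H_o^T+R)^{-1}$ recombine as
\begin{align*}
&\bar F_c\Delta\bar F_c^T-\bar F_c\Delta H_o^T\bar K^T-\bar K H_o\Delta\bar F_c^T+\bar K H_o\Delta H_o^T\bar K^T\\
&\qquad=\left(\bar F_c-\bar K H_o\right)\Delta\left(\bar F_c-\bar K H_o\right)^T,
\end{align*}
and $\bar F_c-\bar K H_o=F_c-KH_o$ because $\bar F_c=F_c-SR^{-1}H_o$ and $\bar K=K-SR^{-1}$, so $D_\Sigma\mathcal R$ is indeed the Stein operator of the stabilizing closed-loop matrix, invertible since its eigenvalues $1-\mu_i\mu_j$ are nonzero when all $|\mu_i|<1$. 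The remaining steps—continuity of eigenvalues keeping the IFT branch stabilizing, uniqueness of the stabilizing solution forcing coincidence, and $C^1$ regularity of $\Sigma_\epsilon=H_o\Sigma H_o^T+R$ and $K=\bar K+SR^{-1}$ by composition—are sound. Two minor points to tidy: Lemma~\ref{lem:DARE} is stated for $\theta\in\Theta$ while you (like the lemma being proved) need it on $\Theta_o$; its proof only uses stability of $F_c$, which is the defining property of $\Theta_o$, so this is harmless but should be said. Also note explicitly that $R=J_{eo}\Sigma_eJ_{eo}^T\succ0$ under A2 because $\Sigma_e\succ0$ is a standing assumption, which is what makes $\bar F_c$, $\bar Q$ well defined. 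As for what each approach buys: the paper's citation is brief and leans on reference machinery for decomposing semidefinite discrete-time Riccati equations; your argument is self-contained, makes the role of the stabilizing property transparent, and actually delivers more than claimed—the Riccati map is rational in $(\Sigma,\theta)$, so the solution is smooth, not merely $C^1$.
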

\begin{proof}
The results follow from the proof of Theorem~4 in \cite{HANSSON1999245}. 
\end{proof}

\begin{lemma}\label{lemma:transformation_riccati_eq}
Assume that A2 holds and that 
$\mathbf C^i(z)$ are not equal to zero for $|z|\geq 1$. Then it holds that 
$(\bar F_c, H_o)$ is detectable and $(\bar F_c,\bar Q^{1/2})$ is stabilizable.
\end{lemma}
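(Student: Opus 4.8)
The plan is to prove the two assertions separately, reducing each to a full-rank property of a matrix pencil and then invoking the computation already carried out in the proof of Lemma~\ref{lem:DARE}. Throughout, note that A2 together with $\Sigma_e\succ0$ makes $R=J_{eo}\Sigma_e J_{eo}^T$ invertible, so that $\bar F_c=F_c-SR^{-1}H_o$ and $\bar Q$ are well defined.

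For detectability of $(\bar F_c,H_o)$ I would first observe that $\bar F_c$ differs from $F_c$ only by an output injection through $H_o$. Because of the identity
$$\BBM zI-\bar F_c\\H_o\EBM=\BBM I&SR^{-1}\\0&I\EBM\BBM zI-F_c\\H_o\EBM,$$
whose left factor is invertible, the pair $(\bar F_c,H_o)$ is detectable if and only if $(F_c,H_o)$ is. Since $\theta\in\Theta$ forces $F_c$ to have all its eigenvalues strictly inside the unit disc, $zI-F_c$ is nonsingular for every $|z|\ge1$, so the right factor already has full column rank there; detectability of $(\bar F_c,H_o)$ follows, exactly as detectability was dispatched in the proof of Lemma~\ref{lem:DARE}.

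For stabilizability of $(\bar F_c,\bar Q^{1/2})$ I would relate it to the pencil $\mathbf P(z)=\BBM -zI+F_c&G_e\\H_o&J_{eo}\EBM$. The factorization in the proof of Lemma~\ref{lem:DARE} writes $\mathbf P(z)$ as $\blkdiag(I,J_{eo})$ times invertible factors times a matrix that is block diagonal with blocks invertible precisely when $\mathbf C^i(z)\ne0$; under A2 and the hypothesis $\mathbf C^i(z)\ne0$ for all $|z|\ge1$, this shows $\mathbf P(z)$ has full row rank for every $|z|\ge1$. I would then eliminate the cross-covariance by choosing an orthogonal $T$ with $J_{eo}\Sigma_e^{1/2}T=\BBM R^{1/2}&0\EBM$ and writing $G_e\Sigma_e^{1/2}T=\BBM \Gamma_1&\Gamma_2\EBM$, for which one checks $\Gamma_1=SR^{-1/2}$ and $\Gamma_2\Gamma_2^T=\bar Q$. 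Right-multiplying $\mathbf P(z)$ by $\blkdiag(I,\Sigma_e^{1/2}T)$ and then subtracting the middle block column times $R^{-1/2}H_o$ from the first block column (both invertible right factors) turns $\mathbf P(z)$ into
$$\BBM -zI+\bar F_c&\Gamma_1&\Gamma_2\\0&R^{1/2}&0\EBM,$$
which therefore also has full row rank for $|z|\ge1$.

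The conclusion is then a left-null-vector reading of this pencil: a left null vector $(w_1,w_2)$ forces $w_1^T\bar F_c=zw_1^T$, $w_1^T\Gamma_2=0$, and $w_2^T=-w_1^T\Gamma_1R^{-1/2}$, so a nonzero null vector with $|z|\ge1$ exists if and only if $(\bar F_c,\Gamma_2)$ has an uncontrollable mode with $|z|\ge1$. Full row rank thus rules such modes out, and since the column space of $\Gamma_2$ equals that of $\bar Q^{1/2}$, this is precisely stabilizability of $(\bar F_c,\bar Q^{1/2})$. I expect the main obstacle to be the bookkeeping in the middle step: verifying that the orthogonal reduction of the noise really yields $\Gamma_1=SR^{-1/2}$ and $\Gamma_2\Gamma_2^T=\bar Q$ so that the cross term cancels, and that the column operation converts $F_c$ into $\bar F_c$ without disturbing the $\Gamma_2$ block. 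Once this congruence is in place, the transfer of full row rank from $\mathbf P(z)$ and the subsequent PBH characterization are routine.
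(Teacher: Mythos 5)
Your proof is correct and takes essentially the same route as the paper's: detectability follows from stability of $F_c$ plus invariance under output injection, and stabilizability follows by carrying the full row rank of $\mathbf P(z)$ for $|z|\geq 1$ (from the factorization in Lemma~\ref{lem:DARE}) through an orthogonal compression of the noise columns, an invertible elimination of the cross-covariance term that reveals $\bar F_c$ together with a factor $\Gamma_2$ satisfying $\Gamma_2\Gamma_2^T=\bar Q$, and finally the PBH test. The only differences are cosmetic: you normalize the compressed noise block to $R^{1/2}$ and use a block column operation, where the paper keeps $\tilde J_{eo}$ and uses a block row operation.
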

\begin{proof}
Since $\theta\in\Theta$ it holds that $F_c$ has all eigenvalues strictly inside the 
unit circle. Hence $(F_c,H_o)$ is trivially detectable and consequently $(\bar F_c,H_o)$ is also 
detectable, since detectability is invariant under output injection.

Let $\bar G_e= G_e\Sigma_e^{1/2}$ and $\bar J_{eo}=J_{eo}\Sigma^{1/2}_e$. Consider
$$\bar{\mathbf P}(z)=\mathbf P(z)\BBM I& 0\\0&\Sigma^{1/2}_e\EBM=
\BBM -zI+F_c& \bar G_e\\H_o&\bar J_{eo}\EBM.
$$
Since $\mathbf C^i(z)$ are not equal to zero for $|z|\geq 1$, 
it follows similarly as in the proof of Lemma~\ref{lem:DARE} that 
$\bar{\mathbf P}(z)$ has full row rank for $|z|\geq 1$. Let $U$ be an orthonormal matrix
such that 
$$\bar{\mathbf P}(z)U=\BBM -zI+F_c& \tilde G_{e1}& \tilde G_{e2}\\H_o&\tilde J_{eo}&0\EBM$$
where $\tilde J_{eo}$ is invertible. Now we have that 
$$\setlength{\arraycolsep}{2pt}\BBM I&-\tilde G_{e1}\tilde J_{eo}^{-1}\\0&I\EBM \! \bar{\mathbf P}(z)U \! = \!
\BBM -zI \! +F_c-\tilde G_{e1}\tilde J_{eo}^{-1}H_o & 0& \tilde G_{e2}\\H_o&\tilde J_{eo}&0\EBM.
$$
From this it follows by the PBH test for stabilizability that 
$(F_c-\tilde G_{e1}\tilde J_{eo}^{-1}H_o,\tilde G_{e2})$ is stabilizable. We have
$S=\bar G_e\bar J_{eo}^T=\tilde G_{e1}\tilde J_{eo}^T$ and $R=\bar J_{eo}\bar J_{eo}^T=
\tilde J_{eo}\tilde J_{eo}^T$ and hence $\bar F_c=F_c-SR^{-1}H_o=F_c-\tilde G_{e1}\tilde J_{eo}^{-1}H_o$.
Moreover, $Q=\bar G_e\bar G_e^T=\tilde G_{e1}\tilde G_{e1}^T+\tilde G_{e2}\tilde G_{e2}^T$ and
hence it holds that $\bar Q=Q-SR^{-1}S^T=\tilde G_{e2}\tilde G_{e2}^T$. From this it follows that 
$(\bar F_c,\bar Q^{1/2})$ is stabilizable.
\end{proof}
Multiply
\eqref{eq:ARE-QRS} with $\BBM I&-K\\0&I\EBM$
from the left and its transpose from the right to obtain the equivalent equations
\begin{align*}
\Sigma &= (F_c-KH_o)\Sigma(F_c-KH_o)^T\\& \quad +Q-KS^T-SK^T+KRK^T\\
0&=(F_c-KH_o)\Sigma H_o^T+S-KR\\
\Sigma_\epsilon&=H_o\Sigma H_o^T+R,
\end{align*}
which equivalently may be written as
\BEQ
\begin{aligned}\label{eq:ARE-K}
\Sigma &= (F_c-KH_o)\Sigma(F_c-KH_o)^T+Q-SR^{-1}S^T\\&\quad+\left(K-SR^{-1}\right)R
\left(K-SR^{-1}\right)\\
K&=(F_c\Sigma H_o^T+S)(H_o\Sigma H_o^T+R)^{-1}\\
\Sigma_\epsilon&=H_o\Sigma H_o^T+R.
\end{aligned}
\EEQ
Define the functions
\begin{align*}
    \mathcal{R}(\Sigma,K) &= (F_c-KH_o)\Sigma (F_c-KH_o)^T + Q -SR^{-1}S^T \\ & \quad + (K - SR^{-1})R(K - SR^{-1})^T\\
    \mathcal{G}(\Sigma)&= (F_c\Sigma H_o^T + S)(H_o\Sigma H_o^T+R)^{-1}.
\end{align*}
Then, from \eqref{eq:ARE-K}, the  Riccati recursion in \eqref{eq:Rrec} is equivalent to 
\begin{align*}
\Sigma_{k+1}= \mathcal{R}(\Sigma_k,K_k),\quad
K_k= \mathcal{G}(\Sigma_k)
\end{align*}
and the algebraic Riccati equation in \eqref{eq:ARE} is equivalent to 
\begin{align*}
\Sigma = \mathcal{R}(\Sigma,K),\quad 
K = \mathcal{G}(\Sigma).
\end{align*}
We have the following lemmas.
\begin{lemma}\label{lemma:minimum-property}
Let $0\preceq \Sigma_{1} \preceq \bar \Sigma_{1}$, let $\bar\Sigma_{k+1} = \mathcal{R}(\bar \Sigma_k,\bar K_k)$ with $\bar K_k$ arbitrary and let $\Sigma_{k+1} = \mathcal{R}(\Sigma_k, K_k)$ with $K_k = \mathcal{G}(\Sigma_k)$, then
\begin{align*}
    \Sigma_{k+1} \preceq \bar\Sigma_{k+1}, \quad k = 1,2,\dots .
\end{align*}
\end{lemma}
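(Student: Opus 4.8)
The plan is to prove the stated inequality by induction on $k$, exploiting the fact that $K_k = \mathcal{G}(\Sigma_k)$ renders $\mathcal{R}(\Sigma_k, \cdot)$ \emph{minimal} over all gain choices. The base case $\Sigma_1 \preceq \bar\Sigma_1$ is given. For the inductive step, I would first establish the \emph{optimality} property of $\mathcal{G}$: for any fixed $\Sigma \succeq 0$ and any gain $L$, the map $L \mapsto \mathcal{R}(\Sigma, L)$ is minimized at $L = \mathcal{G}(\Sigma)$. This follows because $\mathcal{R}(\Sigma, L)$ is a quadratic (in fact, completing-the-square) expression in $L$: writing it out and collecting the $L$-dependent terms yields
\begin{align*}
\mathcal{R}(\Sigma, L) &= F_c\Sigma F_c^T + Q - (F_c\Sigma H_o^T + S)(H_o\Sigma H_o^T + R)^{-1}(F_c\Sigma H_o^T + S)^T \\
&\quad + \left(L - \mathcal{G}(\Sigma)\right)(H_o\Sigma H_o^T + R)\left(L - \mathcal{G}(\Sigma)\right)^T.
\end{align*}
Since $H_o\Sigma H_o^T + R \succ 0$ (as $R \succ 0$ and $\Sigma \succeq 0$), the last term is positive semidefinite and vanishes exactly when $L = \mathcal{G}(\Sigma)$. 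Hence $\mathcal{R}(\Sigma, \mathcal{G}(\Sigma)) \preceq \mathcal{R}(\Sigma, L)$ for every $L$.

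The second ingredient is \emph{monotonicity} of $\mathcal{R}(\cdot, L)$ in its first argument for a \emph{fixed} gain $L$. With $L$ held fixed, $\mathcal{R}(\Sigma, L) = (F_c - LH_o)\Sigma(F_c - LH_o)^T + (\text{terms independent of }\Sigma)$, so if $\Sigma \preceq \Sigma'$ then $(F_c - LH_o)(\Sigma' - \Sigma)(F_c - LH_o)^T \succeq 0$, giving $\mathcal{R}(\Sigma, L) \preceq \mathcal{R}(\Sigma', L)$.

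Combining these, assume $\Sigma_k \preceq \bar\Sigma_k$. Then, using the optimality of $K_k = \mathcal{G}(\Sigma_k)$ first and monotonicity second,
\begin{align*}
\Sigma_{k+1} = \mathcal{R}(\Sigma_k, K_k) \preceq \mathcal{R}(\Sigma_k, \bar K_k) \preceq \mathcal{R}(\bar\Sigma_k, \bar K_k) = \bar\Sigma_{k+1},
\end{align*}
where the first inequality applies the optimality property with the fixed comparison gain $\bar K_k$, and the second applies monotonicity with that same fixed gain $\bar K_k$. This closes the induction. The main obstacle is purely algebraic: verifying the completion-of-squares identity for $\mathcal{R}(\Sigma, L)$ cleanly, so that the optimality of $\mathcal{G}(\Sigma)$ drops out transparently and the positive-semidefiniteness of the residual term is manifest. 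Once that identity is in hand, the monotonicity step is immediate and the induction is routine.
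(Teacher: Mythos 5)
Your proposal is correct, and it is worth noting that the paper does not actually prove this lemma at all: its entire proof is a citation to Theorem~2.1 of Caines and Mayne (the reference \cite{Caines_minimum_property}). What you have written is a self-contained version of the classical argument that the reference contains. Your completion-of-squares identity checks out against the paper's definition of $\mathcal{R}$: expanding $(F_c-LH_o)\Sigma(F_c-LH_o)^T + Q - SR^{-1}S^T + (L-SR^{-1})R(L-SR^{-1})^T$ and collecting the $L$-dependent terms indeed yields
\begin{align*}
\mathcal{R}(\Sigma,L) &= F_c\Sigma F_c^T + Q - \mathcal{G}(\Sigma)\left(H_o\Sigma H_o^T+R\right)\mathcal{G}(\Sigma)^T \\
&\quad + \left(L-\mathcal{G}(\Sigma)\right)\left(H_o\Sigma H_o^T+R\right)\left(L-\mathcal{G}(\Sigma)\right)^T,
\end{align*}
using $\left(H_o\Sigma H_o^T+R\right)\mathcal{G}(\Sigma)^T = H_o\Sigma F_c^T + S^T$, so the gain $\mathcal{G}(\Sigma_k)$ is optimal, and monotonicity in $\Sigma$ for fixed gain is immediate since the $\Sigma$-dependence is a single congruence term. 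The two-step chain $\mathcal{R}(\Sigma_k,K_k)\preceq\mathcal{R}(\Sigma_k,\bar K_k)\preceq\mathcal{R}(\bar\Sigma_k,\bar K_k)$ then closes the induction. One small point you leave implicit: the inductive step needs $H_o\Sigma_k H_o^T + R \succ 0$, hence $\Sigma_k\succeq 0$ at every step; this holds because $\mathcal{R}$ maps positive semidefinite matrices to positive semidefinite matrices (all three terms in its definition are positive semidefinite, the middle one being a Schur complement of the noise covariance), and $R\succ 0$ under the paper's standing assumptions A2 and $\Sigma_e\succ 0$. With that remark added, your argument is complete and has the advantage of making the paper self-contained where it currently outsources the result.
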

\begin{proof}
    See Theorem 2.1 in \cite{Caines_minimum_property}.
\end{proof}
\begin{lemma}\label{lemma:convergence-riccati}
Assume that A2-A3 hold uniformly in $\theta\in\Theta$. Then 
there exist constants $K_\Sigma>0$, finite, and $0<\lambda_\Sigma<1$ such that 
$$\sup_{\theta\in\Theta}\|\Sigma_k(\theta)-\Sigma(\theta)\|_2<K_\Sigma\lambda_\Sigma^k,\quad k\geq 0$$
where $\Sigma_k(\theta)$ is defined by the Riccati recursion in \eqref{eq:Rrec} with arbitrary initial condition 
$\Sigma_1(\theta)\succeq 0$, and where $\Sigma(\theta)$ 
is the solution of the algebraic Riccati equation in \eqref{eq:ARE}. 
\end{lemma}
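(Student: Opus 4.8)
The plan is to track the error $\Delta_k = \Sigma_k(\theta) - \Sigma(\theta)$ and to show it is squeezed between two quantities that decay exponentially at a rate governed by the closed-loop matrix $F_c - KH_o$, which has all eigenvalues strictly inside the unit disc for every $\theta\in\Theta$ by Lemma~\ref{lem:DARE}. For the upper bound I would apply Lemma~\ref{lemma:minimum-property} with the \emph{constant} suboptimal gain $\bar K_k = K = \mathcal{G}(\Sigma)$ in the barred recursion and the optimal gain $K_k = \mathcal{G}(\Sigma_k)$ in the unbarred one, both started from the same $\Sigma_1$. With a constant gain the map $\bar\Sigma\mapsto\mathcal{R}(\bar\Sigma,K)$ is affine with fixed point $\Sigma$, so $\bar\Sigma_k-\Sigma = (F_c-KH_o)^{k-1}(\Sigma_1-\Sigma)\bigl((F_c-KH_o)^T\bigr)^{k-1}$, and the minimum property gives $\Sigma_k\preceq\bar\Sigma_k$. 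Equivalently, completing the square in $\mathcal{R}(\Sigma_k,\cdot)$ about its minimizer yields $\Delta_{k+1} = (F_c-KH_o)\Delta_k(F_c-KH_o)^T - E_k$ with $E_k\succeq 0$, whence $\Delta_{k+1}\preceq(F_c-KH_o)\Delta_k(F_c-KH_o)^T$ and the positive part of $\Delta_k$ is exponentially small.

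The lower bound is the crux. By the a priori bound $\Delta_k\succeq-\Sigma$ coming from $\Sigma_k\succeq 0$ together with the upper bound just established, the iterates $\Sigma_k$ are uniformly bounded in norm; monotonicity of the optimal Riccati map, also a consequence of Lemma~\ref{lemma:minimum-property}, shows in addition that the recursion started from $0$ increases toward $\Sigma$. Here I would exploit Lemma~\ref{lemma:transformation_riccati_eq}: detectability of $(\bar F_c,H_o)$ (automatic since $F_c$ is stable) and stabilizability of $(\bar F_c,\bar Q^{1/2})$ guarantee that $\Sigma$ is the unique stabilizing solution to which the recursion converges and, crucially, that the time-varying filter $F_c-K_kH_o$ is \emph{uniformly exponentially stable}. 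Completing the square instead in $\mathcal{R}(\Sigma,\cdot)$ gives the companion identity $\Delta_{k+1} = (F_c-K_kH_o)\Delta_k(F_c-K_kH_o)^T + \tilde E_k$ with $\tilde E_k\succeq 0$, so $\Delta_{k+1}\succeq(F_c-K_kH_o)\Delta_k(F_c-K_kH_o)^T$; propagating this through the products of the stable matrices $F_c-K_jH_o$ (or, once the iterates enter a neighborhood of $\Sigma$, running a contraction estimate in a Lyapunov-adapted norm and using that $\tilde E_k$ is quadratic in $\Delta_k$ because $\mathcal{G}$ is smooth by Lemma~\ref{lem:differentiability}) converts the uniform exponential stability into a matching lower bound $-\Delta_k\preceq K'\lambda'^k I$.

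Finally, for uniformity over $\theta$ I would invoke continuity: by Lemma~\ref{lem:differentiability} the solution $(\Sigma,K,\Sigma_\epsilon)$, and hence $F_c-KH_o$, $\bar F_c$ and $\bar Q$, depend continuously on $\theta$, while $\Theta$ is compact. Therefore $\sup_{\theta\in\Theta}\rho(F_c-KH_o)=:\rho_\ast<1$, the detectability and stabilizability Gramian bounds are uniform, and $\sup_\theta\|\Sigma(\theta)\|_2$ and $\sup_\theta\|\Sigma_1(\theta)\|_2$ are finite; choosing any $\lambda_\Sigma\in(\rho_\ast^2,1)$ together with a correspondingly large finite $K_\Sigma$ absorbs the transient and produces the stated uniform bound. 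The main obstacle I anticipate is precisely this lower bound: turning the detectability and stabilizability of Lemma~\ref{lemma:transformation_riccati_eq} into \emph{uniform} exponential stability of the time-varying Kalman filter $F_c-K_kH_o$, with a rate and transient that do not degenerate as $\theta$ ranges over $\Theta$, is the substantive step, whereas the upper bound and the compactness packaging are comparatively routine.
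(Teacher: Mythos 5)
Your proposal is correct, and its first half is exactly the paper's proof: the paper also compares the optimal recursion $\Sigma_{k+1}=\mathcal{R}(\Sigma_k,\mathcal{G}(\Sigma_k))$ against the frozen-gain recursion $\bar\Sigma_{k+1}=\mathcal{R}(\bar\Sigma_k,\mathcal{G}(\Sigma))$ via Lemma~\ref{lemma:minimum-property}, uses the affineness of $X\mapsto\mathcal{R}(X,K)$ with fixed point $\Sigma$ to obtain $\Sigma_{k+1}-\Sigma\preceq(F_c-KH_o)^{k}(\Sigma_1-\Sigma)\bigl((F_c-KH_o)^T\bigr)^{k}$, and invokes stability of $F_c-KH_o$ (Lemma~\ref{lem:DARE}) plus uniformity over the compact set $\Theta$. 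Where you differ is instructive: the paper stops there and passes directly from this one-sided Loewner bound to the norm bound $\|\Sigma_{k+1}-\Sigma\|_2\leq K_\Sigma\lambda_\Sigma^k$, which, as you observe, is not by itself justified --- for a symmetric matrix the spectral norm is the largest eigenvalue in absolute value, and the upper bound says nothing about the negative part of $\Sigma_k-\Sigma$ (take $\Sigma_1=0$, so that $\Sigma_1-\Sigma=-\Sigma$). Your companion inequality $\Delta_{k+1}\succeq(F_c-K_kH_o)\Delta_k(F_c-K_kH_o)^T$, obtained by completing the square in $\mathcal{R}(\Sigma,\cdot)$, is the standard and correct repair, and the algebra behind both of your identities checks out. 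Two caveats on carrying it through: first, the uniform exponential stability of the time-varying products $\Phi_{k,j}=(F_c-K_{k-1}H_o)\cdots(F_c-K_jH_o)$ that your lower bound needs is genuinely nontrivial, as you anticipate --- the paper's Lemma~\ref{lem:stabilizing_sequence} only bounds the spectral radius of each individual factor (and only for the initialization $\Sigma_1=P$), and per-factor spectral radius bounds do not control products, so you would indeed need the Kalman-filter stability theory under uniform detectability/stabilizability (via Lemma~\ref{lemma:transformation_riccati_eq} and the Bitmead et al.\ machinery) or your neighborhood-contraction alternative; second, your device of absorbing the transient into a large $K_\Sigma$ with $\lambda_\Sigma\in(\rho_\ast^2,1)$ also quietly fixes the paper's claim that eigenvalues inside the unit disc give $\|F_c-KH_o\|_2^2\leq\lambda_\Sigma<1$, which is false for the spectral norm in general. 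In short: same core mechanism as the paper, but your version confronts a step the paper's own argument glosses over, at the cost of one substantive piece of filter-stability theory.
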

\begin{proof}
    Let $\bar\Sigma_{k+1} = \mathcal{R}(\bar \Sigma_k,\bar K_k)$ with $\bar K_k = \mathcal{G}(\Sigma)$ and $\Sigma$ being the solution of the algebraic Riccati equation, and let $\Sigma_{k+1} = \mathcal{R}( \Sigma_k, K_k)$ with $K_k = \mathcal{G}(\Sigma_k)$. With $\bar \Sigma_{1}=\Sigma_{1} \succeq 0$, we obtain from Lemma~\ref{lemma:minimum-property} that
    \begin{align*}
        \Sigma_{k+1} &\preceq (F_c-K H_o)\bar\Sigma_{k}(F_c-KH_o)^T + Q -SR^{-1}S^T \\ & \quad + (K - SR^{-1})R(K - SR^{-1})^{-1},
    \end{align*}
    Subtracting $\Sigma = \mathcal{R}(\Sigma,K)$ and using \eqref{eq:ARE-K} leads to
    \begin{align*}
        \Sigma_{k+1} - \Sigma \preceq (F_c-K H_o)(\bar\Sigma_{k}-\Sigma)(F_c-KH_o)^T.
    \end{align*}
    Recursively applying this inequality leads to 
    \begin{align*}
        \Sigma_{k+1} - \Sigma \preceq (F_c-K H_o)^k(\bar\Sigma_{1}-\Sigma)(F_c-KH_o)^{Tk}.
    \end{align*}
    By Lemma~\ref{lem:DARE} $F_c-K H_o$ has all its eigenvalues strictly inside the unit disc, and then, there exists $\lambda_\Sigma$ such that
    \begin{align*}
        \|(F_c-KH_o)\|_2^{2} \leq \lambda_\Sigma < 1,
    \end{align*}
    By defining $K_\Sigma = \sup_{\theta\in\Theta} \|\bar\Sigma_{1} - \Sigma\|_2$, we have
    \begin{align*}
        \|\Sigma_{k+1} - \Sigma\|_2 \leq  K_\Sigma\lambda_\Sigma^k.
    \end{align*}
    Because the same argument holds for all $\theta\in\Theta$, we have 
    \begin{align*}
    \sup_{\theta\in\Theta}\|\Sigma_k(\theta)-\Sigma(\theta)\|_2<K_\Sigma\lambda_\Sigma^k.
    \end{align*}
\end{proof}
\begin{remark}
Notice that the constants are independent of $\theta$, and 
hence this shows uniform geometric convergence.
\end{remark}
\begin{lemma}\label{lem:bound-cov}
Assume that A2-A3 hold uniformly in $\theta\in\Theta$.
Let $P(\theta)$ be the solution of the algebraic Lyapunov equation in \eqref{eqn:Lyapunov}, let 
$\Sigma_k(\theta)$ be the solution of the Riccati recursion in \eqref{eq:Rrec} with $\Sigma_1(\theta)=P(\theta)$, and 
let $\Sigma(\theta)$ be the solution of the algebraic Riccati equation in 
\eqref{eq:ARE}. Then it holds that $\Sigma_k(\theta)\succeq \Sigma(\theta)$ and that $\Sigma_{\epsilon,k}(\theta)\succeq \Sigma_\epsilon(\theta)\succ 0$ 
uniformly for all $k\geq 1$ 
and $\theta\in\Theta$. 
\end{lemma}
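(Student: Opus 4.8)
The plan is to reduce everything to the single covariance inequality $\Sigma_k(\theta)\succeq\Sigma(\theta)$ and then obtain that inequality from two applications of the minimum property in Lemma~\ref{lemma:minimum-property}. The innovation-covariance claim is immediate once the state claim is in hand: from the third line of \eqref{eq:ARE-K} and of \eqref{eq:Rrec} we have $\Sigma_{\epsilon,k}=H_o\Sigma_k H_o^T+R$ and $\Sigma_\epsilon=H_o\Sigma H_o^T+R$, so $\Sigma_k\succeq\Sigma$ gives $\Sigma_{\epsilon,k}-\Sigma_\epsilon=H_o(\Sigma_k-\Sigma)H_o^T\succeq 0$, while $\Sigma_\epsilon\succ 0$ holds by Lemma~\ref{lem:DARE}. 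Thus the whole proof rests on showing $\Sigma_k\succeq\Sigma$ for the recursion started at $\Sigma_1=P$.

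The key preliminary step I would carry out is to establish $\Sigma(\theta)\preceq P(\theta)$. The crucial observation is that evaluating the recursion map at zero gain collapses it to the Lyapunov map: a direct substitution gives $\mathcal{R}(\Sigma,0)=F_c\Sigma F_c^T+Q$, since the terms $-SR^{-1}S^T$ and $(-SR^{-1})R(-SR^{-1})^T=SR^{-1}S^T$ cancel. Because $Q=G_e\Sigma_e G_e^T$, the zero-gain recursion started at $0$ is exactly the Lyapunov iteration, which converges to $P$ from below as $F_c$ is stable. Running instead the optimal recursion from $0$, Lemma~\ref{lemma:convergence-riccati} gives convergence to $\Sigma$. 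Applying Lemma~\ref{lemma:minimum-property} with both sequences started at $\Sigma_1=\bar\Sigma_1=0$---the optimal one as the unbarred (gain $\mathcal{G}$) sequence and the Lyapunov one as the barred (arbitrary gain, here zero) sequence---yields the optimal iterate $\preceq$ the Lyapunov iterate at every step, and passing to the limit gives $\Sigma\preceq P$.

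With $\Sigma\preceq P$ available, the main claim follows from a second use of Lemma~\ref{lemma:minimum-property}. Here I take the unbarred optimal sequence to be the constant sequence equal to $\Sigma$, which is legitimate because $\Sigma$ is the fixed point $\mathcal{R}(\Sigma,\mathcal{G}(\Sigma))=\Sigma$ of the optimal map, and the barred sequence to be the actual recursion $\Sigma_k$ started at $P$ with its particular, hence admissible, gains $K_k=\mathcal{G}(\Sigma_k)$. The initial-condition ordering $0\preceq\Sigma\preceq P=\bar\Sigma_1$ required by the lemma is exactly what the preliminary step supplied, and the conclusion $\Sigma\preceq\Sigma_{k}$ for $k\geq 2$ together with $\Sigma_1=P\succeq\Sigma$ gives $\Sigma_k\succeq\Sigma$ for all $k\geq 1$. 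Each of these arguments is valid verbatim for every $\theta\in\Theta$; uniform positive definiteness of $\Sigma_\epsilon$ follows from $\Sigma_\epsilon\succeq R=J_{eo}\Sigma_e J_{eo}^T$, which under A2 and $\Sigma_e\succ 0$ is positive definite and, by continuity on the compact set $\Theta$, bounded below by $cI$ for some $c>0$. I expect the main obstacle to be the first step: correctly recognizing the zero-gain map as the Lyapunov map and then carefully orchestrating which sequence plays the optimal versus the arbitrary role in Lemma~\ref{lemma:minimum-property}, since that lemma is asymmetric in those two roles.
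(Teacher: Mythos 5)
Your proof is correct, but it takes a genuinely different route from the paper's in both of its main steps. For the initial-condition ordering $P\succeq\Sigma$, the paper argues directly and more elementarily: subtracting the $(1,1)$-block of \eqref{eq:ARE}, namely $\Sigma=F_c\Sigma F_c^T+G_e\Sigma_eG_e^T-K\Sigma_\epsilon K^T$, from the Lyapunov equation \eqref{eqn:Lyapunov} shows that $\hat P=P-\Sigma$ satisfies $\hat P=F_c\hat PF_c^T+K\Sigma_\epsilon K^T$, and stability of $F_c$ together with $K\Sigma_\epsilon K^T\succeq0$ gives $\hat P\succeq0$ with no convergence result needed. Your alternative---comparing the optimal recursion with the zero-gain (Lyapunov) iteration, both started at $0$, via Lemma~\ref{lemma:minimum-property} and then passing to the limit---is valid; your identity $\mathcal R(\Sigma,0)=F_c\Sigma F_c^T+Q$ is right, and the extra ingredient you need, Lemma~\ref{lemma:convergence-riccati}, does not depend on Lemma~\ref{lem:bound-cov}, so there is no circularity. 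For the propagation step $\Sigma_k\succeq\Sigma$ for all $k$, the paper invokes Lemma~1 of \cite{Monotonicity_stabilizability_RDE} applied to the no-cross-covariance form \eqref{eq:ARE-nocross}, whereas you re-use Lemma~\ref{lemma:minimum-property} with the constant sequence at the fixed point (legitimate, since $\mathcal R(\Sigma,\mathcal G(\Sigma))=\Sigma$) in the optimal role and the actual recursion, whose gains $K_k=\mathcal G(\Sigma_k)$ count as ``arbitrary,'' in the barred role; this correctly respects the asymmetry of that lemma. The net effect is that your argument is self-contained within results already stated in the paper's appendix, avoiding the external monotonicity citation, at the mild cost of routing through the convergence lemma; your final step ($\Sigma_{\epsilon,k}-\Sigma_\epsilon=H_o(\Sigma_k-\Sigma)H_o^T\succeq0$, and $\Sigma_\epsilon\succ0$ by Lemma~\ref{lem:DARE}, with uniformity obtained from compactness of $\Theta$) coincides with the paper's, and is in fact slightly more explicit about the uniform lower bound on $\Sigma_\epsilon$.
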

\begin{proof} 
From the (1,1)-block in  \eqref{eq:ARE} we have
\begin{align*}
     \Sigma = F_c \Sigma F_c^T + G_e \Sigma_e G_e^T - K\Sigma_\epsilon K^T.
\end{align*}
and from $\Sigma_1 = P$ we have by \eqref{eqn:Lyapunov}
\begin{align*}
    \Sigma_1=F_c\Sigma_1F_c^T+G_e\Sigma_eG_e^T.
\end{align*}
Hence, by defining $\hat P = \Sigma_1 - \Sigma$, we have
\begin{align*}
    \hat P = F_c \hat P F_c^T + K \Sigma_\epsilon K^T.
\end{align*}
By  Lemma~\ref{lem:DARE} it holds that $F_c$ has all eigenvalues strictly inside the unit circle. 
Since $K \Sigma_\epsilon K^T\succeq 0$, we have $\hat P \succeq 0$, leading to $\Sigma_1 \succeq \Sigma$.
From the equivalent formulation of the algebraic Riccati equation in \eqref{eq:ARE-nocross}
it follows by Lemma~1 in \cite{Monotonicity_stabilizability_RDE} 
that $\Sigma_k\succeq \Sigma$ for all $k\geq 1$. 
From the definitions of $\Sigma_\epsilon$ and $\Sigma_{\epsilon,k}$ it
follows that $\Sigma_{\epsilon,k}\succeq \Sigma_\epsilon$ for all $k\geq 1$. From Lemma~\ref{lem:DARE}
we have that $\Sigma_\epsilon\succ0$.
All these results hold uniformly in
$\theta\in\Theta$.
\end{proof}
\begin{corollary}\label{cor:ric}
Assume that A2-A3 hold uniformly in $\theta\in\Theta$.
Define $Q_k(\theta) = \Sigma_{\epsilon,k}(\theta)^{-1} - \Sigma_{\epsilon}(\theta)^{-1}$,  $\Delta \Sigma_{\epsilon,k}(\theta) = \Sigma_{\epsilon,k}(\theta)-\Sigma_{\epsilon}(\theta)$ and $\Delta K_k(\theta)=K(\theta)-K_k(\theta)$. As a direct consequence of Lemmas~\ref{lemma:convergence-riccati} and \ref{lem:bound-cov}, it holds:
\begin{enumerate}
    \item $\|Q_k(\theta)\|_2\rightarrow 0$, geometrically as $k\rightarrow\infty$ and uniformly in $\theta\in\Theta$,
    \item $\|\Sigma_{\epsilon}(\theta)^{-1}\Delta\Sigma_{\epsilon,k}(\theta)\|_2\rightarrow0$, geometrically as $k\rightarrow\infty$ and uniformly in $\theta\in\Theta$,
    \item $\|\Delta K_k(\theta)\|_2 \leq M_\Sigma\lambda_\Sigma^k$,
\end{enumerate}
with finite $M_\Sigma>0$, and  $0 < \lambda_\Sigma < 1$.
\end{corollary}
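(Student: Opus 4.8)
The plan is to reduce all three claims to the geometric bound $\sup_{\theta\in\Theta}\|\Sigma_k(\theta)-\Sigma(\theta)\|_2 < K_\Sigma\lambda_\Sigma^k$ from Lemma~\ref{lemma:convergence-riccati}, once a few $\theta$-uniform norm bounds are in place. The key structural fact is that $\Sigma_\epsilon$ and $\Sigma_{\epsilon,k}$ are affine in $\Sigma$ and $\Sigma_k$: the third lines of \eqref{eq:ARE-K} and \eqref{eq:Rrec} give $\Sigma_\epsilon = H_o\Sigma H_o^T+R$ and $\Sigma_{\epsilon,k}=H_o\Sigma_k H_o^T+R$, so that
\[
\Delta\Sigma_{\epsilon,k}=\Sigma_{\epsilon,k}-\Sigma_\epsilon=H_o(\Sigma_k-\Sigma)H_o^T .
\]
First I would record three uniform bounds. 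Since $H_o$ and $J_{eo}$ are $\theta$-independent and A2 makes $R=J_{eo}\Sigma_e J_{eo}^T$ positive definite, compactness of $\Theta$ together with $\Sigma_e\succ0$ gives a uniform lower bound on $R$; as $\Sigma_\epsilon\succeq R$, this yields $\sup_{\theta\in\Theta}\|\Sigma_\epsilon^{-1}\|_2<\infty$. Next, $\Sigma_{\epsilon,k}\succeq\Sigma_\epsilon$ from Lemma~\ref{lem:bound-cov} gives $\|\Sigma_{\epsilon,k}^{-1}\|_2\le\|\Sigma_\epsilon^{-1}\|_2$ uniformly. Finally, Lemma~\ref{lemma:convergence-riccati} yields $\|\Sigma_k\|_2\le\|\Sigma\|_2+K_\Sigma$, and $\|\Sigma\|_2$ is uniformly bounded on the compact set $\Theta$ by continuity (Lemma~\ref{lem:differentiability}); hence $\Sigma_k$, and with it $M_k:=F_c\Sigma_k H_o^T+S$, are uniformly norm-bounded.

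With these bounds, Claims~1 and~2 are immediate. For Claim~2, $\|\Sigma_\epsilon^{-1}\Delta\Sigma_{\epsilon,k}\|_2\le\|\Sigma_\epsilon^{-1}\|_2\|H_o\|_2^2\|\Sigma_k-\Sigma\|_2$, whose right-hand side is a $\theta$-independent constant times $\lambda_\Sigma^k$. For Claim~1, I would apply the resolvent identity
\[
Q_k=\Sigma_{\epsilon,k}^{-1}-\Sigma_\epsilon^{-1}=-\,\Sigma_{\epsilon,k}^{-1}\,\Delta\Sigma_{\epsilon,k}\,\Sigma_\epsilon^{-1},
\]
giving $\|Q_k\|_2\le\|\Sigma_\epsilon^{-1}\|_2^2\|H_o\|_2^2\|\Sigma_k-\Sigma\|_2$, again geometric and uniform in $\theta$.

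For Claim~3 I would write $K=M\Sigma_\epsilon^{-1}$ and $K_k=M_k\Sigma_{\epsilon,k}^{-1}$ with $M=F_c\Sigma H_o^T+S$, and split
\[
\Delta K_k=K-K_k=(M-M_k)\Sigma_\epsilon^{-1}-M_kQ_k=F_c(\Sigma-\Sigma_k)H_o^T\Sigma_\epsilon^{-1}-M_kQ_k .
\]
The first term is bounded by $\|F_c\|_2\|H_o\|_2\|\Sigma_\epsilon^{-1}\|_2\|\Sigma_k-\Sigma\|_2$ and the second by $\|M_k\|_2\|Q_k\|_2$; both are uniformly at most a constant times $\lambda_\Sigma^k$, so $\|\Delta K_k\|_2\le M_\Sigma\lambda_\Sigma^k$ for a suitable finite $M_\Sigma$. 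I expect the algebra itself to be routine; the only delicate point---and the reason the hypotheses are stated \emph{uniformly} in $\theta$---is confirming that each factored-out constant ($\|\Sigma_\epsilon^{-1}\|_2$, $\|\Sigma_k\|_2$, $\|M_k\|_2$, $\|F_c\|_2$) is bounded independently of $\theta$. This is precisely where compactness of $\Theta$, A2 (ensuring $R\succ0$), and the continuity of Lemma~\ref{lem:differentiability} enter; granting them, all three conclusions are mechanical consequences of Lemma~\ref{lemma:convergence-riccati}.
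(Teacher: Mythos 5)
Your proposal is correct and follows essentially the same route as the paper: all three claims are reduced to the geometric bound of Lemma~\ref{lemma:convergence-riccati} via the inverse-difference identity $A^{-1}-B^{-1}=A^{-1}(B-A)B^{-1}$, the same splitting of $\Delta K_k$ into $(M-M_k)\Sigma_\epsilon^{-1}-M_kQ_k$, and the uniform bounds on $\Sigma_{\epsilon}^{-1}$ and $\Sigma_{\epsilon,k}^{-1}$ coming from Lemma~\ref{lem:bound-cov} together with compactness of $\Theta$. The only immaterial difference is that you bound $\Sigma_\epsilon^{-1}\Delta\Sigma_{\epsilon,k}$ directly through $\Delta\Sigma_{\epsilon,k}=H_o(\Sigma_k-\Sigma)H_o^T$, whereas the paper routes Claim~2 through $Q_k$; both are one-line consequences of the same estimates.
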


\begin{proof}
From Lemma~\ref{lemma:convergence-riccati} and recalling that
\begin{align*}
    \Sigma_{\epsilon,k} &= H_o \Sigma_k H_o^T + J_{eo}\Sigma_{e}J_{eo}^T
    \\ \Sigma_{\epsilon} &= H_o\Sigma H_o^T+\; J_{eo}\Sigma_{e} J_{eo}^T,
\end{align*}
we can write 
\begin{align*}
    \|\Sigma_{\epsilon,k} - \Sigma_{\epsilon}\|_2 \leq \|H_o\|^2_2 \|\Sigma_k - \Sigma \|_2 \leq \|H_o\|^2_2 K_{\Sigma} \lambda_{\Sigma}^k.
\end{align*}
Using the identity $A^{-1} - B^{-1} = A^{-1}(B - A)B^{-1}$, we can write 
\begin{align*}
    \|Q_k\|_2 = \|\Sigma_{\epsilon,k}^{-1} - \Sigma_{\epsilon}^{-1}\|_2 \le \|\Sigma_{\epsilon,k}^{-1}\|_2 \, \|\Sigma_{\epsilon}^{-1}\|_2 \, \|\Sigma_{\epsilon,k} - \Sigma_{\epsilon}\|_2.
\end{align*}
Since, from Lemma~\ref{lem:bound-cov}, $\|\Sigma_{\epsilon,k}(\theta)^{-1}\|_2$ and $\|\Sigma_{\epsilon}(\theta)^{-1}\|_2$ are both uniformly bounded for $\theta\in\Theta$ and $k\ge0$, we have
\begin{align*}
   \|Q_k\|_2 \le \Bigl( \sup_{\substack{k,\theta}} \|\Sigma_{\epsilon,k}^{-1}\|_2 \, \|\Sigma_{\epsilon}^{-1}\|_2 \,  \|H_o\|^2_2 \Bigr)\, K_{\Sigma} \lambda_{\Sigma}^k = K_{\epsilon} \lambda_{\Sigma}^k.
\end{align*}
for some finite $K_\epsilon$. 
Then as a direct consequence of Lemma~\ref{lemma:convergence-riccati}, $\|Q_k(\theta)\|_2 \to 0$, geometrically as $k\to\infty$ and 
uniformly in $\theta \in \Theta$. 

We have that 
$$\Sigma_{\epsilon}(\theta)^{-1}\Delta\Sigma_{\epsilon,k}(\theta)=
-\Sigma_{\epsilon}(\theta)^{-1}\Sigma_{\epsilon,k}(\theta)Q_k(\theta)\Sigma_{\epsilon}(\theta)$$
Notice that $\Sigma_{\epsilon}(\theta)$ and $\Sigma_{\epsilon,k}(\theta)$ are
uniformly bounded in $k$ and $\theta\in\Theta$. 
Since, from Lemma~\ref{lem:bound-cov}, $\|\Sigma_{\epsilon}(\theta)^{-1}\|_2$ is uniformly bounded for $\theta\in\Theta$, we also have $\|\Sigma_{\epsilon}(\theta)^{-1}\Delta\Sigma_{\epsilon,k}(\theta)\|_2\rightarrow0$, geometrically as $k\to\infty$ and uniformly in $\theta \in \Theta$.

From $\Delta K_k(\theta)=K(\theta)-K_k(\theta)$, we have
\begin{align*}
    \Delta K_k(\theta) &= \bigl(F_c\Sigma H_o^T + S\bigr) \bigl(H_o\Sigma H_o^T + R\bigr)^{-1} \\ &\quad - \bigl(F_c\Sigma_k H_o^T + S\bigr) \bigl(H_o\Sigma_k H_o^T + R\bigr)^{-1}.
\end{align*}
A straightforward rearrangement yields
\begin{align*}
    &\Delta K_k = (F_c(\Sigma-\Sigma_k)H_o^T)\bigl(H_o\,\Sigma\,H_o^T + R\bigr)^{-1} \\
  &+ \! \bigl(F_c\Sigma_k H_o^T \! + S\bigr)\! \Bigl[ \bigl(H_o\Sigma H_o^T + R\bigr)^{-1}\!\! - \bigl(H_o\Sigma_k H_o^T + R\bigr)^{-1} \Bigr].
\end{align*}
Using again the identity $A^{-1} - B^{-1} = A^{-1}(B - A)B^{-1}$, we get 
\begin{align*}
    &\bigl(H_o\Sigma H_o^T + R\bigr)^{-1} - \bigl(H_o\Sigma_k H_o^T + R\bigr)^{-1} = \\
    &\quad\quad \bigl(H_o\Sigma H_o^T + R\bigr)^{-1} H_o\bigl(\Sigma_k -\Sigma \bigr)H_o^T\bigl(H_o\Sigma_k H_o^T + R\bigr)^{-1}.
\end{align*}
Using the triangular inequality, we have
\begin{align*}
    &\|\Delta K_k(\theta)\|_2 \leq \|F_c\|_2\|\Sigma -\Sigma_k\|_2\|H_o\|_2\|(H_o\Sigma H_o^T + R)^{-1}\|_2 \\ 
    &+ \|F_c\Sigma_k H_o^T + S\|_2 \|(H_o\Sigma H_o^T + R)^{-1}\|_2 \|H_o\|_2\|\Sigma -\Sigma_k \|_2 \\
    &\times \|H_o\|_2\|\bigl(H_o\Sigma_k H_o^T + R\bigr)^{-1}\|_2.
\end{align*}
Taking into account the boundedness of $\Sigma_k(\theta)$, $\Sigma(\theta)$, and their respective inverses, there are finite constants $M>0$ and $M_\Sigma>0$ such that
\begin{align*}
    \|\Delta K_k(\theta)\|_2\leq M \|\Sigma(\theta)-\Sigma_k(\theta)\|_2 \leq M_\Sigma\lambda_\Sigma^k.
\end{align*}
Then, since $\|\Sigma_k(\theta) - \Sigma(\theta)\|$ decays with rate $\lambda_{\Sigma}$, where  $0<\lambda_{\Sigma}<1$, we conclude $\|K(\theta) - K_k(\theta)\|$ also decays with the same rate $\lambda_{\Sigma}$.
\end{proof}
\begin{lemma}\label{lem:stabilizing_sequence}
    Assume that A2--A3 hold uniformly in $\theta\in\Theta$. Initialize the Riccati recursion in \eqref{eq:Rrec} with 
    $\Sigma_1(\theta)=P(\theta)$, where $P(\theta)$ is the solution of the algebraic Lyapunov equation 
    in \eqref{eqn:Lyapunov}. Then it holds for the sequence of 
    $\Sigma_k(\theta)$ defined by the Riccati recursion that  
    $$\max_i|\lambda_i(F_c(\theta) - K_k(\theta) H_o)|<\lambda_{\max}<1$$
    uniformly in  $k\geq 1$ and $\theta\in\Theta$.
\end{lemma}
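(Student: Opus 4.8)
The plan is to show that the time-varying gains keep the one-step closed-loop matrix $A_k(\theta):=F_c(\theta)-K_k(\theta)H_o$ stable for \emph{every} $k$, and then upgrade this pointwise bound to one that is uniform in $k$ and $\theta$. First I would establish that the sequence $\Sigma_k(\theta)$ is monotonically non-increasing. Since $Q=G_e\Sigma_eG_e^T$, a direct evaluation gives $\mathcal R(P,0)=F_cPF_c^T+Q=P$ by the Lyapunov equation \eqref{eqn:Lyapunov}. Because $K_1=\mathcal G(\Sigma_1)$ minimizes $\mathcal R(\Sigma_1,\cdot)$ and $\Sigma_1=P$, it follows that $\Sigma_2=\mathcal R(P,K_1)\preceq\mathcal R(P,0)=P=\Sigma_1$. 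Applying Lemma~\ref{lemma:minimum-property} to the shifted sequences then yields $\Sigma_{k+1}\preceq\Sigma_k$ for all $k$. Together with Lemma~\ref{lem:bound-cov}, which guarantees $\Sigma_k\succeq\Sigma$, this shows that $P_k:=\Sigma_k-\Sigma\succeq0$ is monotonically non-increasing.

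Next I would record the exact propagation identity for $P_k$. Completing the square in the gain shows $\mathcal R(\Sigma,K)=\mathcal R(\Sigma,\mathcal G(\Sigma))+(K-\mathcal G(\Sigma))(H_o\Sigma H_o^T+R)(K-\mathcal G(\Sigma))^T$; subtracting $\Sigma=\mathcal R(\Sigma,K)$ from $\Sigma_{k+1}=\mathcal R(\Sigma_k,K_k)$ and splitting the difference into a change-of-argument term (at fixed gain $K_k$) and a change-of-gain term (at fixed argument $\Sigma$) gives
\[ P_{k+1}=A_kP_kA_k^T+(K_k-K)\Sigma_\epsilon(K_k-K)^T, \]
where $\Sigma_\epsilon=H_o\Sigma H_o^T+R\succ0$ by Lemma~\ref{lem:DARE}. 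This is the crux. Suppose $A_k(\theta)$ had an eigenvalue $\mu$ with $|\mu|\ge1$ and left eigenvector $w$; evaluating the identity as $w^*P_{k+1}w=|\mu|^2w^*P_kw+\|\Sigma_\epsilon^{1/2}(K_k-K)^*w\|_2^2$ and using the monotonicity $w^*P_{k+1}w\le w^*P_kw$ forces $\|\Sigma_\epsilon^{1/2}(K_k-K)^*w\|_2=0$, hence $(K_k-K)^*w=0$. But then $A_k^*w=(F_c-K_kH_o)^*w=(F_c-KH_o)^*w$, so $\bar\mu$ is an eigenvalue of the stationary closed-loop matrix $A_\infty:=F_c-KH_o$, contradicting the fact (Lemma~\ref{lem:DARE}) that $A_\infty$ has all eigenvalues strictly inside the unit disc. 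Hence $\rho(A_k(\theta))<1$ for every $k$ and $\theta$.

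Finally I would make the bound uniform. Continuity of $K(\theta)$ (Lemma~\ref{lem:differentiability}) together with compactness of $\Theta$ gives $\rho_0:=\sup_\theta\rho(A_\infty(\theta))<1$, and a common discrete Lyapunov certificate $X(\theta)\succ0$ solving $A_\infty XA_\infty^T-X=-I$ is continuous in $\theta$, hence uniformly bounded, $\underline x I\preceq X(\theta)\preceq\bar x I$. For the tail, Corollary~\ref{cor:ric} gives $\|K-K_k\|_2\le M_\Sigma\lambda_\Sigma^k$, so $A_\infty-A_k=(K_k-K)H_o$ is uniformly small and I can pick $k_0$ with $A_kXA_k^T-X\preceq-\tfrac12I$ for all $k\ge k_0$ and all $\theta$, which yields $\rho(A_k(\theta))^2\le1-\tfrac{1}{2\bar x}<1$. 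For the finitely many indices $k<k_0$, each map $\theta\mapsto\rho(A_k(\theta))$ is continuous (being a finite composition of $\mathcal R$ and $\mathcal G$ from the continuous $P(\theta)$, with the inverses well defined since $\Sigma_{\epsilon,k}\succeq\Sigma_\epsilon\succ0$) and strictly below $1$ on the compact set $\Theta$, so its supremum is strictly below $1$; taking the maximum over these indices and the tail bound produces a single $\lambda_{\max}<1$.

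I expect the main obstacle to be the individual-stability step in the second paragraph: the propagation identity for $P_{k+1}$ combined with the monotone-decreasing initialization $\Sigma_1=P$ is exactly what makes it work, and the eigenvector argument is delicate because it must convert the non-strict inequality $w^*P_{k+1}w\le w^*P_kw$ into the rigid conclusion $(K_k-K)^*w=0$ and thereby reduce the spectrum of $A_k$ to that of $A_\infty$. The uniformization is comparatively routine, resting on compactness, continuity, and the geometric convergence of the gains already established in Corollary~\ref{cor:ric}.
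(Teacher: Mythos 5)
Your proof is correct, and it reaches the conclusion by a genuinely different route than the paper at the central step. Both arguments pivot on the same consequence of the Lyapunov initialization $\Sigma_1=P$: the paper computes $Q_1=\bar Q+\Sigma_1-\Sigma_2=Q+F_c\Sigma_1H_o^T(H_o\Sigma_1H_o^T+R)^{-1}H_o\Sigma_1F_c^T\succeq \bar Q$, which is exactly equivalent to your starting inequality $\Sigma_2\preceq\Sigma_1$. From there the paths diverge. The paper rewrites the recursion in the cross-term-free form, invokes Lemma~\ref{lemma:transformation_riccati_eq} to obtain detectability of $(H_o,\bar F_c)$ and stabilizability of $(\bar F_c,\bar Q^{1/2})$, and then outsources the per-step stability of $F_c-K_kH_o$ to Theorem~1 of \cite{Monotonicity_stabilizability_RDE}, handling the limit $k\to\infty$ via Lemma~\ref{lem:DARE}. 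You instead prove per-step stability from scratch: monotonicity $\Sigma_{k+1}\preceq\Sigma_k$ (via $\mathcal{R}(P,0)=P$ and Lemma~\ref{lemma:minimum-property} applied to the shifted sequence), the exact propagation identity $P_{k+1}=A_kP_kA_k^T+(K_k-K)\Sigma_\epsilon(K_k-K)^T$ with $P_k=\Sigma_k-\Sigma\succeq0$ from Lemma~\ref{lem:bound-cov}, and the left-eigenvector rigidity argument that forces any eigenvalue of $A_k$ with modulus at least one into the spectrum of $F_c-KH_o$, contradicting Lemma~\ref{lem:DARE}. This makes your middle step self-contained --- it needs neither the detectability/stabilizability lemma nor the external reference --- at the cost of length. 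Your final uniformization (a common discrete Lyapunov certificate for the tail $k\ge k_0$ using the geometric gain convergence of Corollary~\ref{cor:ric}, plus compactness and continuity for the finitely many initial indices) is also more explicit than the paper's, which asserts the uniform strict bound $\lambda_{\max}<1$ rather tersely after establishing stability for each $k$ and in the limit; in that sense your argument supplies a justification of the uniformity in $k$ and $\theta$ that the paper's own proof leaves largely implicit.
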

\begin{proof}
Similarly as in \eqref{eq:ARE-nocross} the Riccati recursion can be written
\begin{align*}
    \Sigma_{k+1} =\bar F_c\Sigma_k \bar F_c^T+\bar Q-
\bar F_c\Sigma_k H_o^T(H_o\Sigma_k H_o^T+R)^{-1}H_o\Sigma_k \bar F_c^T
\end{align*}
Define $Q_1 = \bar Q + \Sigma_1 - \Sigma_2$. From \eqref{eqn:Lyapunov} $\Sigma_1$ satisfies 
$$\Sigma_1=F_c\Sigma_1F_c^T+Q$$
and we obtain from the Riccati recursion above that
$$Q_1=Q+F_c\Sigma_1H_o^T\left(H_o\Sigma_1H_o^T+R\right)^{-1}H_o\Sigma_1F_c^T$$
This implies that $Q_1\succeq Q$ and since $\bar Q=Q-SR^{-1}S^T$ we also have that 
$Q_1\succeq \bar Q$. 
It follows, from Lemma~\ref{lemma:transformation_riccati_eq}, that $(H_o,\bar F_c)$ is detectable and $(\bar F_c,\bar Q^{1/2})$ is stabilizable, and then from Theorem~1 in \cite{Monotonicity_stabilizability_RDE} we have $\max_i|\lambda_i(F_c - K_k H_o)|<1$, for all $k \geq 1$.
Since  $K_k\rightarrow K$, which 
satisfies the algebraic Riccati equation, it follows that the result also holds in the limit 
by Lemma~\ref{lem:DARE}. 
\end{proof}
\subsection{Convergence of Quadratic Form}
\begin{lemma}\label{lem:conv-quadrati-form}
With 
$$\mathcal L_N(\theta)=\frac{1}{N}\sum_{k=1}^N\epsilon_k(\theta)^T\Sigma_\epsilon(\theta)^{-1}
\epsilon_k(\theta)$$
it holds w.p. 1 that 
\BEQ\label{eqn:quadratic-form-convergence}
\sup_{\theta\in\Theta}\left|\mathcal L_N(\theta)-
\E{\epsilon_1(\theta)^T\Sigma_\epsilon(\theta)^{-1}
\epsilon_1(\theta)}\right|\rightarrow 0,\quad N\rightarrow \infty.
\EEQ
\end{lemma}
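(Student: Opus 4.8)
The plan is to prove a uniform (over the compact set $\Theta$) strong law of large numbers for the quadratic form, by combining pointwise almost-sure convergence obtained from the ergodic theorem with an equicontinuity estimate that is uniform in $N$. First I would record a convenient representation of the stationary innovations. Since $\epsilon_k(\theta)=x_{o,k}-\hat x_{o,k}(\theta)$ and the predictor transfer functions in \eqref{eqn:predictor-transfer} have all their poles at the eigenvalues of $F_c(\theta)-K(\theta)H_o(\theta)$, Lemma~\ref{lem:DARE} guarantees these poles lie strictly inside the unit disc; by continuity of $(\Sigma,K,\Sigma_\epsilon)$ in $\theta$ (Lemma~\ref{lem:differentiability}) and compactness of $\Theta$ there is a single rate $\lambda_{\max}<1$ with $\max_i|\lambda_i(F_c(\theta)-K(\theta)H_o(\theta))|\le\lambda_{\max}$ for all $\theta\in\Theta$. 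Hence $\epsilon_k(\theta)$ is the output of a causal, exponentially stable filter, with a decay rate uniform in $\theta$, driven by the jointly quasi-stationary signals $(x_o,r)$. Writing the impulse-response coefficients as $\{h_j(\theta)\}$, this yields a bound $\sup_{\theta\in\Theta}\|\epsilon_k(\theta)\|_2\le\sum_{j\ge0}c\,\lambda_{\max}^{\,j}(\|x_{o,k-j}\|_2+\|r_{k-j}\|_2)=:s_k$, where $s_k$ is quasi-stationary with uniformly bounded second moment because $x_o$ is generated by the stable model $\theta_0$ and $r$ is bounded.

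Second, for each fixed $\theta$ the summand $g_k(\theta)=\epsilon_k(\theta)^T\Sigma_\epsilon(\theta)^{-1}\epsilon_k(\theta)$ is a measurable function of the stationary ergodic innovation process $\epsilon(\theta_0)$ together with the quasi-stationary reference $r$. The ergodic theorem, in the quasi-stationary form used in \cite[Section~2.3]{lju87}, then gives $\mathcal L_N(\theta)\to\E{g_1(\theta)}$ with probability one, i.e.\ pointwise convergence in \eqref{eqn:quadratic-form-convergence}. This step is routine once the integrability from the previous paragraph is in hand.

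Third---and this is the crux---I would upgrade pointwise to uniform convergence by an equicontinuity argument. Differentiating $g_k$ and using that $(\Sigma,K,\Sigma_\epsilon)$, and therefore the filter coefficients and $\Sigma_\epsilon(\theta)^{-1}$, are $C^1$ on the compact set $\Theta$ (Lemma~\ref{lem:differentiability}), the derivative $\partial_\theta\epsilon_k(\theta)$ is again the output of an exponentially stable filter with the same uniform decay rate $\lambda_{\max}$, so $\sup_{\theta\in\Theta}\|\partial_\theta\epsilon_k(\theta)\|_2$ is dominated by a quasi-stationary process $\tilde s_k$ of bounded second moment of the same convolution type as $s_k$. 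Consequently there is a random Lipschitz constant
\[
g_N=\frac{C}{N}\sum_{k=1}^N\bigl(s_k^2+\tilde s_k^2+1\bigr),
\]
for which $|\mathcal L_N(\theta')-\mathcal L_N(\theta'')|\le g_N\,\|\theta'-\theta''\|_2$ holds for all $\theta',\theta''\in\Theta$. By the ergodic theorem applied to the majorants $s_k^2$ and $\tilde s_k^2$, $g_N$ converges almost surely to a finite constant and is therefore almost surely bounded for large $N$. Combining this equicontinuity with pointwise convergence on a countable dense subset of $\Theta$ and a standard finite-$\varepsilon$-net covering of the compact set $\Theta$ then yields the uniform statement \eqref{eqn:quadratic-form-convergence}.

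The main obstacle is the third step: one must verify that both $\sup_{\theta\in\Theta}\|\epsilon_k(\theta)\|_2$ and $\sup_{\theta\in\Theta}\|\partial_\theta\epsilon_k(\theta)\|_2$ admit quasi-stationary, square-integrable majorants with convolution kernels decaying at a rate independent of $\theta$. This hinges on propagating the uniform eigenvalue bound $\lambda_{\max}<1$ through the resolvent $(zI-F_c+KH_o)^{-1}$ and its $\theta$-derivative, using the $C^1$ regularity of Lemma~\ref{lem:differentiability} together with the compactness of $\Theta$; the remaining ergodic and covering arguments are then standard.
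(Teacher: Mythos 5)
Your proposal is correct and follows essentially the same route as the paper's proof: pointwise almost-sure convergence by the strong law of large numbers, followed by uniform bounds on $\sup_{\theta\in\Theta}\|\epsilon_k(\theta)\|_2$ and $\sup_{\theta\in\Theta}\|\partial_\theta\epsilon_k(\theta)\|_2$ via exponentially stable filter representations with a decay rate uniform over the compact set $\Theta$, which yields an almost-surely bounded Lipschitz constant for $\mathcal L_N$ and hence uniform convergence. Your explicit $\varepsilon$-net covering argument is exactly the mean-value-theorem step the paper delegates to the reference \cite{1101304}.
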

\begin{proof}
Since $\epsilon_k(\theta)$ is a stationary random process, $\Sigma_\epsilon(\theta)\succ 0$
by Lemma~\ref{lem:DARE}, and since 
$$\E{\left|\epsilon_1(\theta)^T\Sigma_\epsilon(\theta)^{-1}
\epsilon_1(\theta)\right|}=1<\infty,$$ 
it holds by the strong law of large numbers, see \cite[Theorem 5.4.2]{chu74}, that 
we have point-wise convergence for every $\theta\in\Theta$. To prove uniform convergence we may, 
since $\Theta$ is a compact set, show this by first showing 
that $\left\|\frac{\partial \mathcal L_N}{\partial \theta_i}\right\|_2$
is bounded w.p. 1 and uniformly in $\theta\in\Theta$ and in $N\geq 1$. Then we can
use the mean value theorem, from which the lemma follows. 
For these latter details we refer the reader to \cite{1101304}. 
Here we will prove the boundedness of 
the partial derivatives. It holds that
\BEQ
\begin{aligned}
\frac{\partial \mathcal L_N}{\partial \theta_i}&=
\frac{1}{N}\sum_{k=1}^N\frac{\partial\epsilon_k(\theta)^T}{\partial\theta_i}\Sigma_\epsilon(\theta)^{-1}
\epsilon_k(\theta)\\
&-\frac{1}{N}\sum_{k=1}^N\epsilon_k(\theta)^T
\Sigma_\epsilon(\theta)^{-1}\frac{\partial\Sigma_\epsilon(\theta)}{\partial\theta_i}
\Sigma_\epsilon(\theta)^{-1}
\epsilon_k(\theta)\\
&+\frac{1}{N}\sum_{k=1}^N\epsilon_k(\theta)^T\Sigma_\epsilon(\theta)^{-1}
\frac{\partial\epsilon_k(\theta)}{\partial\theta_i}.
\end{aligned}
\EEQ
From now on we will denote derivative with respect to $\theta_i$ with a prime, i.e.
$\epsilon'_k(\theta)=\frac{\partial\epsilon_k(\theta)}{\partial\theta_i}$, and we will also 
sometimes omit the dependence on $\theta$. From \eqref{eqn:innovation} it follows that 
\begin{align*}
\hat\xi'_{k+1}&=\left(F_c-KH_o\right)\hat\xi'_k+F_c'\hat\xi_k+K'\epsilon_k\\
\epsilon'_k&=-H_o\hat\xi'_k,
\end{align*}
where $\hat\xi_k$ is defined in \eqref{eqn:time-invariant-predictor} and 
and $\epsilon_k=x_{o,k}-H_o\hat \xi_k-J_{ro}r_k$. The latter is a stationary stochastic
process uniformly in $\theta\in\Theta$. From Lemma~\ref{lem:differentiability}
it follows that the solution $(\Sigma,K,\Sigma_\epsilon)$ of the 
algebraic Riccati equation in \eqref{eq:ARE} is continuously differentiable for every 
$\theta\in\Theta_o$. This 
also holds for $F_c$. Notice that $\hat\xi_k$ is 
quasi-stationary. Hence it follows that 
$\epsilon'_k$ is a quasi-stationary stochastic processes uniformly in $\theta\in\Theta$ if
$r_k$ is quasi-stationary. Therefore
there exist $\rho<1$ and $\kappa_i$ independent of $\theta$ such that 
\begin{align*}
\|\epsilon_k(\theta)\|_2&\leq \kappa_1\sum_{j=-\infty}^k\rho^{k-j}\|x_{o,k}\|_2+
\kappa_2\sum_{j=-\infty}^k\rho^{k-j}\|r_k\|_2\\
\|\epsilon_k'(\theta)\|_2&\leq \kappa_3\sum_{j=-\infty}^k\rho^{k-j}\|x_{o,k}\|_2+
\kappa_4\sum_{j=-\infty}^k\rho^{k-j}\|r_k\|_2.
\end{align*}
From this it follows for some $\kappa_5$ independent of $\theta$, since 
$\Sigma_\epsilon\succ0$ and its derivative is continuously differentiable for every 
$\theta\in\Theta$ that 
$$\left\|\frac{\partial \mathcal L_N}{\partial \theta_i}\right\|_2 \!\! \leq 
\! \frac{\kappa_5}{N}
\sum_{k=1}^N \!
\left( \sum_{j=-\infty}^k \!\! \rho^{k-j}\|x_{o,k}\|_2+ \!\!\!
\sum_{j=-\infty}^k\rho^{k-j}\|r_k\|_2
\! \right)^2.$$
By the strong law of large numbers the right-hand side converges w.p. 1 to 
$$\kappa_5\E{\left(\sum_{j=-\infty}^k\rho^{k-j}\|x_{o,k}\|_2+
\sum_{j=-\infty}^k\rho^{k-j}\|r_k\|_2
\right)^2}$$
if this expectation is finite, which is straightforward to show. Hence we have 
that $\left\|\frac{\partial \mathcal L_N}{\partial \theta_i}\right\|_2$
is bounded w.p. 1 and uniformly in $\theta\in\Theta$ and in $N\geq 1$.
\end{proof}

\subsection{Convergence of Log-Likelihood Function}
We have the following lemma:
\begin{lemma}\label{lemma:L-converges-uniformly}
Assume that A2-A3 hold uniformly in $\theta\in\Theta$.
It then holds that 
$$\sup_{\theta\in\Theta}\left|L_N(\theta)-L(\theta)\right|\rightarrow 0,\quad 
N\rightarrow \infty\;\mathrm{w.p. 1}$$
where 
\BEQ\label{eqn:L-limit}
L(\theta)=-\E{\ln f(\epsilon_1(\theta);\theta)}
\EEQ
and where
$$
f(\epsilon;\theta)=\frac{1}{\sqrt{(2\pi)^{n_0}\det \Sigma_\epsilon(\theta)}}
\exp\left(-\frac{1}{2}\epsilon^T\Sigma_\epsilon(\theta)^{-1}
\epsilon\right).$$
\end{lemma}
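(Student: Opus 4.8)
The plan is to compare $L_N$, which is built from the time-varying predictor, with the log-likelihood built from the \emph{stationary} predictor, and then to invoke Lemma~\ref{lem:conv-quadrati-form}. First I would introduce the stationary surrogate
\[
\tilde L_N(\theta)=\frac{1}{2N}\sum_{k=1}^N\left(\epsilon_k(\theta)^T\Sigma_\epsilon(\theta)^{-1}\epsilon_k(\theta)+\ln\det\Sigma_\epsilon(\theta)\right)+\frac{n_o}{2}\ln(2\pi),
\]
so that $\tilde L_N=\tfrac12\mathcal L_N+\tfrac12\ln\det\Sigma_\epsilon+\tfrac{n_o}{2}\ln(2\pi)$. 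Since $-\ln f(\epsilon;\theta)=\tfrac{n_o}{2}\ln(2\pi)+\tfrac12\ln\det\Sigma_\epsilon(\theta)+\tfrac12\epsilon^T\Sigma_\epsilon(\theta)^{-1}\epsilon$, the limit splits as $L(\theta)=\tfrac{n_o}{2}\ln(2\pi)+\tfrac12\ln\det\Sigma_\epsilon(\theta)+\tfrac12\E{\epsilon_1(\theta)^T\Sigma_\epsilon(\theta)^{-1}\epsilon_1(\theta)}$, and because $\ln\det\Sigma_\epsilon$ does not depend on $N$, Lemma~\ref{lem:conv-quadrati-form} already delivers $\sup_{\theta\in\Theta}|\tilde L_N(\theta)-L(\theta)|\to0$ w.p. 1. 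What remains, and what is the real content of the lemma, is to show $\sup_{\theta\in\Theta}|L_N(\theta)-\tilde L_N(\theta)|\to0$ w.p. 1.

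For this I would split the difference into a log-determinant part and a quadratic-form part,
\[
L_N-\tilde L_N=\frac{1}{2N}\sum_{k=1}^N\left(\ln\det\Sigma_{\epsilon,k}-\ln\det\Sigma_\epsilon\right)+\frac{1}{2N}\sum_{k=1}^N\left(\check\epsilon_k^T\Sigma_{\epsilon,k}^{-1}\check\epsilon_k-\epsilon_k^T\Sigma_\epsilon^{-1}\epsilon_k\right),
\]
and treat the two sums separately. The first sum is deterministic: by Lemma~\ref{lem:bound-cov} the matrices $\Sigma_{\epsilon,k}$ and $\Sigma_\epsilon$ are uniformly bounded and uniformly bounded away from singularity on $\Theta$, so $\ln\det$ is Lipschitz there, and by Corollary~\ref{cor:ric} one has $\|\Sigma_{\epsilon,k}-\Sigma_\epsilon\|_2\le C\lambda_\Sigma^k$ uniformly in $\theta$; hence the first sum is $O(1/N)$ uniformly. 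For the second sum I would use the identity $a^TMa-b^TMb=(a-b)^TM(a+b)$ for symmetric $M$ to write each summand as $\check\epsilon_k^TQ_k\check\epsilon_k+(\check\epsilon_k-\epsilon_k)^T\Sigma_\epsilon^{-1}(\check\epsilon_k+\epsilon_k)$, where $Q_k=\Sigma_{\epsilon,k}^{-1}-\Sigma_\epsilon^{-1}$ satisfies $\|Q_k\|_2\le K_\epsilon\lambda_\Sigma^k$ by Corollary~\ref{cor:ric}.

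The remaining term requires understanding $\check\epsilon_k-\epsilon_k$. Subtracting the stationary and time-varying predictor recursions and setting $\delta_k=\check\xi_k-\hat\xi_k$ gives, after using $x_{o,k}-H_o\check\xi_k-J_{ro}r_k=\check\epsilon_k$, the error-state recursion
\[
\delta_{k+1}=(F_c-KH_o)\delta_k-\Delta K_k\check\epsilon_k,\qquad \check\epsilon_k-\epsilon_k=-H_o\delta_k,
\]
with $\Delta K_k=K-K_k$ and $\delta_1=-\hat\xi_1$. Since $F_c-KH_o$ is exponentially stable by Lemma~\ref{lem:DARE} and $\|\Delta K_k\|_2\le M_\Sigma\lambda_\Sigma^k$ by Corollary~\ref{cor:ric}, unrolling this recursion bounds $\|\delta_k\|_2$ by a $\theta$-independent geometric term from the initial condition plus a convolution of two summable geometric sequences against $\|\check\epsilon_j\|_2$. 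Consequently each summand of the quadratic-form difference is dominated by a $\theta$-independent geometric factor (a power of $\lambda_\Sigma$ or of a stability rate $\rho<1$) times quadratic expressions in the innovations, which by the exponentially weighted data bounds established in the proof of Lemma~\ref{lem:conv-quadrati-form}, together with the uniform stability of the time-varying filter from Lemma~\ref{lem:stabilizing_sequence}, admit a common $\theta$-independent majorant expressed through $\|x_{o,j}\|_2$ and $\|r_j\|_2$.

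The final step is to show that Cesàro averages of the form $\tfrac1N\sum_{k=1}^N\mu^k\zeta_k$, with $\mu<1$ and $\zeta_k\ge0$ a $\theta$-independent process satisfying $\sup_k\E{\zeta_k}<\infty$, vanish w.p. 1 and uniformly in $\theta$. I would argue this from summability of the geometric weights: since $\E{\sum_{k\ge1}\mu^k\zeta_k}=\sum_{k\ge1}\mu^k\E{\zeta_k}<\infty$, the sum $\sum_{k\ge1}\mu^k\zeta_k$ is finite on a single probability-one event (not depending on $\theta$), so its partial sums are bounded there and dividing by $N$ sends the averages to zero; because the weights and the majorant $\zeta_k$ carry no $\theta$-dependence, the convergence is automatically uniform over $\Theta$. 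Combining this with the log-determinant estimate yields $\sup_{\theta\in\Theta}|L_N-\tilde L_N|\to0$ w.p. 1, which with the first paragraph proves the claim. I expect the main obstacle to be exactly this coupling: transferring the geometric decay coming from the Riccati convergence results onto the merely stochastically bounded innovations while preserving uniformity over the compact set $\Theta$, the resolution being that Corollary~\ref{cor:ric}, Lemma~\ref{lem:bound-cov} and Lemma~\ref{lem:stabilizing_sequence} furnish rates and bounds that are all $\theta$-independent.
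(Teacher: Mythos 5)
Your proposal is correct and follows essentially the same route as the paper's proof: the same stationary surrogate $\tilde L_N$ dispatched by Lemma~\ref{lem:conv-quadrati-form}, the same split of $L_N-\tilde L_N$ into a log-determinant part and a quadratic-form part, and the same use of Corollary~\ref{cor:ric}, Lemma~\ref{lem:bound-cov} and Lemma~\ref{lem:stabilizing_sequence} to obtain $\theta$-independent geometric rates and data-driven majorants. Your local deviations are valid mirror images of the paper's steps: you run the predictor-error recursion with the constant stable matrix $F_c-KH_o$ driven by $\check\epsilon_k$ (the paper uses the time-varying $F_c-K_kH_o$ driven by the stationary $\epsilon_k$, which is why it leans on Lemma~\ref{lem:stabilizing_sequence} for the transition matrix, whereas you need it only to bound $\check\epsilon_k$ by the data), and you close with a summability-of-expectations argument where the paper uses the strong law of large numbers together with a polynomial-times-geometric estimate.
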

\begin{proof}
We follow the lines of the proof in \cite{14f6a003-64aa-3900-9a30-7d5cbc9eab0a}. Let 
$$\Delta'_N(\theta)=\frac{1}{N}\sum_{k=1}^N\left(\ln\det \Sigma_{\epsilon,k}(\theta)-\ln\det 
\Sigma_\epsilon(\theta)\right),\quad N\geq 1$$
and 
\begin{align*}
\Delta''_N(\theta)=\frac{1}{N}\sum_{k=1}^N\bigg(&\check\epsilon_k(\theta)^T\Sigma_{\epsilon,k}
(\theta)^{-1}\check\epsilon_k(\theta)
 \\ &-\epsilon_k(\theta)^T\Sigma_{\epsilon}
(\theta)^{-1}\epsilon_k(\theta)\bigg).
\end{align*}
From $\Sigma_{\epsilon,k}(\theta)=\Sigma_{\epsilon}(\theta)+\Delta \Sigma_{\epsilon,k}(\theta)=
\Sigma_{\epsilon}(\theta)\left(I+\Sigma_{\epsilon}(\theta)^{-1}\Delta\Sigma_{\epsilon,k}(\theta)\right)$,
with obvious definition of $\Delta \Sigma_{\epsilon,k}(\theta)$, we have that 
$$\Delta'_N(\theta)=\frac{1}{N}\sum_{k=1}^N\ln\det\left(I+ \Sigma_{\epsilon}(\theta)^{-1}\Delta\Sigma_{\epsilon,k}(\theta)\right).$$
By Corollary~\ref{cor:ric} it holds that 
$\|\Sigma_{\epsilon}(\theta)^{-1}\Delta\Sigma_{\epsilon,k}(\theta)\|\rightarrow0$ both uniformly and 
geometrically as $\rightarrow\infty$. Hence the eigenvalues of 
$I+ \Sigma_{\epsilon}(\theta)^{-1}\Delta\Sigma_{\epsilon,k}(\theta)$ converges to $1$
uniformly in $\theta\in\Theta$ and geometrically in $k$ as $k\rightarrow \infty$. Hence we have that
$$\sup_{\theta\in\Theta}\left|\Delta'_N(\theta)\right|\rightarrow 0,\quad 
N\rightarrow \infty.$$
We now focus on $\Delta''_N(\theta)$. Let 
\begin{align*}
d_k(\theta)&=\check\epsilon_k(\theta)-\epsilon_k(\theta)\\
Q_k(\theta)&=\Sigma_{\epsilon,k}(\theta)^{-1}-\Sigma_{\epsilon}(\theta)^{-1}.
\end{align*}
We can then write
\BEQ\label{eqn:delta-biss}
\begin{aligned}
&\Delta''_N(\theta)=\frac{1}{N}\sum_{k=1}^N\bigg(
\epsilon_k(\theta)^TQ_k(\theta)\epsilon_k(\theta)\\& \quad\quad +2d_k(\theta)^T
\Sigma_{\epsilon,k}(\theta)^{-1}\epsilon_k(\theta)+d_k(\theta)^T\Sigma_{\epsilon,k}(\theta)^{-1}
d_k(\theta)\bigg).
\end{aligned}
\EEQ
By Corollary~\ref{cor:ric}  $\|Q_k(\theta)\|\rightarrow 0$ geometrically as 
$k\rightarrow\infty$ and uniformly in $\theta\in\Theta$, and by the strong law of large
numbers, \cite[Theorem 5.4.2]{chu74}.
$$\frac{1}{N}\sum_{k=1}^N\epsilon_k(\theta)^T\epsilon_k(\theta)\rightarrow \E{\epsilon_1(\theta)^T
\epsilon_1(\theta)},\quad N\rightarrow\infty\;\textrm{w.p. 1}.$$
The eigenvalues of  $F_c(\theta)$ are uniformly strictly inside the unit circle, and hence
$\E{\epsilon_1(\theta)^T \epsilon_1(\theta)}$ is uniformly bounded with respect to $\theta\in\Theta$. 
Hence the first sum in \eqref{eqn:delta-biss} converges to zero w.p. 1 and uniformly in 
$\theta\in\Theta$. 

By Lemma~\ref{lem:bound-cov} we have that $\Sigma_{\epsilon,k}(\theta)^{-1}<\kappa_1 I$ for some
positive $\kappa_1$. Hence the absolute value of the second sum in \eqref{eqn:delta-biss}  is by the Cauchy-Schwartz inequality  majorized by
$$2\kappa_1\left(\frac{1}{N}\sum_{k=1}^Nd_k(\theta)^Td_k(\theta)\right)^{1/2}
\left(\frac{1}{N}\sum_{k=1}^N\epsilon_k(\theta)^T\epsilon_k(\theta)\right)^{1/2}$$
and the third sum by 
$$\frac{\kappa_1}{N}\sum_{k=1}^Nd_k(\theta)^Td_k(\theta).$$
Therefore, in order to prove that the second and third sum converge to zero, it is sufficient to 
prove that 
\BEQ\label{eqn:d-squared-limit}
\frac{1}{N}\sum_{k=1}^Nd_k(\theta)^Td_k(\theta)\rightarrow 0,\quad N\rightarrow\infty\;\mathrm{w.p. 1}
\EEQ
and uniformly in $\theta\in\Theta$. From the definitions of $d_k(\theta)$, $\epsilon_k(\theta)$ and
$\check\epsilon_k(\theta)$ it follows with $\tilde\xi_k(\theta)=\hat\xi_k(\theta)-\check\xi_k(\theta)$
that 
\begin{align*}
\tilde\xi_{k+1}(\theta)&=\tilde F_k(\theta)\tilde\xi_k+\Delta K_k(\theta)
\epsilon_k(\theta)\\
d_k(\theta)&=H_o\tilde\xi_k(\theta),
\end{align*}
where $\tilde F_k(\theta)=F_c(\theta)-K_{k}(\theta)H_o$ and $\Delta K_k(\theta)=K(\theta)-K_k(\theta)$.
The initial value is $\tilde\xi_1(\theta)=\hat\xi_1(\theta)$. Define the transition matrix
$$\Phi_{k,j}(\theta)=\tilde F_{k-1}(\theta)\tilde F_{k-2}(\theta)\cdots\tilde F_j(\theta).$$
Then we can write the solution to the difference equation above as $d_1(\theta)=H_o\hat\xi_1(\theta)$
and for $k\geq 2$
$$d_{k}(\theta)=H_o\Phi_{k,1}(\theta)\hat\xi_1(\theta)+\sum_{j=1}^{k-1}H_o\Phi_{k,j+1}(\theta)
\Delta K_j(\theta)\epsilon_j(\theta).$$
Let $\alpha_1=H_o\hat\xi_1(\theta)$ and $\beta_1=0$ and for $k\geq 2$
\begin{align*}
\alpha_k&=H_o\Phi_{k,1}(\theta)\hat\xi_1(\theta)\\
\beta_k&=\sum_{j=1}^{k-1}H_o\Phi_{k,j+1}(\theta)
\Delta K_j(\theta)\epsilon_j(\theta).
\end{align*}
We obtain from the Cauchy-Schwartz inequality that 
\begin{align*}
&\frac{1}{N}\sum_{k=1}^Nd_k(\theta)^Td_k(\theta)\leq  \\ &\frac{1}{N}\!\Bigg[\sum_{k=1}^N\alpha_k^T\alpha_k +
2\Bigg(\!\sum_{k=1}^N\alpha_k^T\alpha_k\Bigg)^{\frac{1}{2}}\!\!\Bigg(\!\sum_{k=1}^N\beta_k^T\beta_k\Bigg)^{\frac{1}{2}}\!\!\!\! + \!
\sum_{k=1}^N\beta_k^T\beta_k\Bigg].
\end{align*}
It is now sufficient to show that 
$$\frac{1}{N}\sum_{k=1}^N\alpha_k^T\alpha_k\rightarrow 0,\quad 
\frac{1}{N}\sum_{k=1}^N\beta_k^T\beta_k\rightarrow 0$$
as $N\rightarrow\infty$ w.p. 1 and uniformly in $\theta\in\Theta$.

It holds by Lemma~\ref{lem:stabilizing_sequence} 
that $\max|\lambda_i(\tilde F_k(\theta))|<\lambda_{\max}<1$ uniformly in $k\geq 1$ and $\theta\in\Theta$. 
Hence
$$\|\Phi_{k,j}(\theta)\|\leq \lambda_{\max}^{k-j}$$
and we obtain 
$$\frac{1}{N}\sum_{k=1}^N\alpha_k^T\alpha_k\leq \frac{\|H_o\|^2\|\hat\xi_1(\theta)\|_2^2}{N}
\sum_{k=1}^N\lambda_{\max}^{2(k-1)}\rightarrow0,\quad N\rightarrow\infty$$
w.p. 1 uniformly in $\theta\in\Theta$.

By Corollary~\ref{cor:ric}  it hold that $\|\Delta K_k(\theta)\|<M\kappa_2^k$ where 
$\kappa_2<1$ uniformly in $\theta\in\Theta$ and where $M$ is a positive constant. 
If $\lambda_{\max}\leq \kappa_2$ we can always find a new
value of $\lambda_{max}<1$ that is greater than $\kappa_2$. Hence we will without loss in generality assume
that $\kappa_3=\kappa_2/\lambda_{\max}<1$. We then obtain that 
\begin{align*}
\frac{1}{N}\sum_{k=1}^N\beta_k^T\beta_k&\leq
\frac{M^2\|H_o\|^2}{N}\sum_{k=2}^N\sum_{i=1}^{k-1}\lambda_{\max}^{k-i-1}\kappa_2^i\|\epsilon_i\|_2\\&\times
\sum_{j=1}^{k-1}\lambda_{\max}^{k-j-1}\kappa_2^j\|\epsilon_j\|_2\\
&<\frac{M^2\|H_o\|^2}{N}\sum_{k=2}^N\lambda_{\max}^{2(k-1)}
\sum_{i,j=1}^{k-1}\kappa_3^{i+j}\|\epsilon_i\|_2\|\epsilon_j\|_2\\
&<\frac{M^2\|H_o\|^2}{N}\sum_{k=2}^N(k-1)^2\lambda_{\max}^{2(k-1)}
\frac{1}{(k-1)^2}\\&\times\sum_{i,j=1}^{k-1}\|\epsilon_i\|_2\|\epsilon_j\|_2.
\end{align*}
As before, since $\epsilon_j$ is a stationary stochastic process, it holds 
by the strong law of large numbers that 
$\frac{1}{k-1}\sum_{j=1}^{k-1}\|\epsilon_j\|_2$
has a finite limit w.p. 1 as $k\rightarrow\infty$. This convergence is uniform in $\theta\in\Theta$ with a similar argument as in the previous section. Hence 
the sum above is uniformly bounded in $k\geq 1$ and $\theta\in\Theta$ w.p. 1. From this we obtain that 
$$\frac{1}{N}\sum_{k=1}^N\beta_k^T\beta_k\rightarrow 0,\quad N\rightarrow\infty\;\mathrm{w.p. 1}$$
uniformly in $\theta\in\Theta$. 

We have now proven that both $\Delta'N(\theta)$ and $\Delta''N(\theta)$ converge to zero as $N$ goes to 
infinity uniformly in $\theta\in\Theta$, and hence by Lemma~\ref{lem:conv-quadrati-form} we have proven this 
lemma.
\end{proof}

%\balance

%
\begin{IEEEbiography}[{\includegraphics[width=1in,height=1.25in,clip,keepaspectratio]{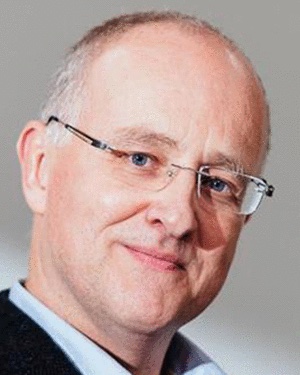}}]{Anders Hansson} received the Master of Science degree in electrical engineering, the Degree of Licentiate of Engineering in automatic control, and the Ph.D. degree in automatic control from Lund University, Lund, Sweden, in 1989, 1991, and 1995, respectively. 
From 1995 to 1997, he was a Postdoctoral Student, and from 1997 to 1998, a Research Associate with Information Systems Lab, Department of Electrical Engineering, Stanford University, Stanford, CA, USA. In 1998, he was appointed  Assistant Professor and in 2000,  Associate Professor with S3-Automatic Control, Royal Institute of Technology, Stockholm, Sweden. In 2001, he was appointed Associate Professor with the Division of Automatic Control, Linköping University, Linköping, Sweden. Since 2006, he has been a Full Professor with the same department. He
was a Visiting Professor at the Department of Electrical and Computer Engineering, UCLA, Los Angeles, USA during 2011-2012. 
His research interests include optimal control, stochastic control, linear systems, signal processing, applications of control, and telecommunications. He is currently a Member of the EUCA General Assembly, of the Technical Committee on Systems with Uncertainty of the IEEE Control Systems Society and the Technical Committee on Robust
Control of IFAC.
\end{IEEEbiography}

\begin{IEEEbiography}[{\includegraphics[width=1in,height=1.25in,clip,keepaspectratio]{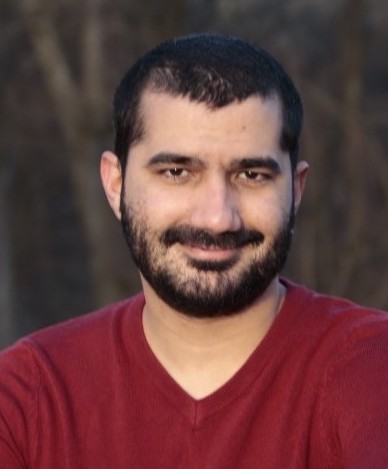}}]{João Victor Galvão da Mata} is a PhD candidate at the Technical University of Denmark (DTU). He received his B.S. degree in electrical engineering from the Federal University of Bahia (UFBA), Brazil, in 2021 and his M.S. degree from the University of Strasbourg, France, in 2022. His current research interests include optimization, numerical methods, linear systems, automatic control, system identification and dynamic networks. He was awarded the ``Grand Est Franco-German University Prize'' (Prix Universitaire Franco-Allemand Grand Est) in 2023 for the Best Master's Thesis in the category of Science and Technology.
\end{IEEEbiography}
\vspace{-2em}
\begin{IEEEbiography}[{\includegraphics[width=1in,height=1.25in,clip,keepaspectratio]{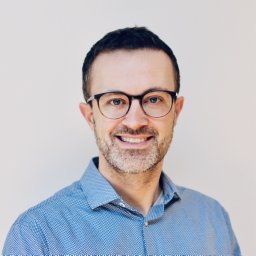}}]{Martin S.~Andersen} earned his MSc degree in Electrical Engineering from Aalborg University, Denmark, in 2006, and his PhD in Electrical Engineering from the University of California, Los Angeles (UCLA), USA, in 2011. Following his PhD, he served as a Postdoctoral Researcher at the Division of Automatic Control, Linköping University, Sweden, and the Technical University of Denmark (DTU). Currently, he is an Associate Professor in the Section for Scientific Computing, Department of Applied Mathematics and Computer Science at DTU. His research interests encompass optimization, numerical methods, signal and image processing, and systems and control.
\end{IEEEbiography}
\end{document}